%% file: main.tex
\RequirePackage{etoolbox}

\newtoggle{full}
\newtoggle{showoverflow}
\newtoggle{anonymous}
\newtoggle{submission}
\newtoggle{llncs}
\newtoggle{fullpage}

\providetoggle{forcefull}
\providetoggle{forceconf}
\iftoggle{forcefull}{
  \toggletrue{full}
}{
  \iftoggle{forceconf}{
    \togglefalse{full}
  }{
    \toggletrue{full}
  }
}

\toggletrue{fullpage}

\ifboolexpr{togl{full} and (not togl{submission})}{
  \togglefalse{llncs}
}{
  \toggletrue{llncs}
}

\documentclass[envcountsame,envcountsect,runningheads]{llncs}

\iftoggle{showoverflow}{
  \overfullrule=10pt
}{}

\iftoggle{fullpage}{
  \usepackage{fullpage}
}{}

\input{sections/header}

\iftoggle{full}{
  \newcommand*{\appref}[1]{Appendix~\ref{#1}}
}{
  \newcommand*{\appref}[1]{the full version}
}

\title{Verifiable quantum advantage in extremely low depth}
\titlerunning{Verifiable quantum advantage in low depth}

\date{}

\iftoggle{anonymous}{
  \author{}
  \institute{}
}{
	\author{
		Alexandru Gheorghiu
	}
	\hypersetup{
		pdfauthor = {Alexandru Gheorghiu}
	}

	\authorrunning{A.\ Gheorghiu}

	\institute{
		IBM Research \\
		\href{mailto:agheorghiu@ibm.com}{agheorghiu@ibm.com}
	}
}

\begin{document}
\iftoggle{full}{
  {\renewcommand{\rm}{\textrm}
  \maketitle}
}{
  \maketitle
}

\iftoggle{full}{
  \nottoggle{submission}{
    \vspace{-0.25cm}
  }{}
}{}

\sloppy

\begin{abstract}
We give a sampling problem that is solvable by shallow quantum circuits,
hard for polynomial-time classical algorithms under lattice-based assumptions,
and efficiently verifiable by a classical computer. The quantum sampler admits
two implementations: one uses
log-logarithmic-depth quantum circuits with one- and two-qubit gates, i.e.,
$\QNCloglog$ circuits, while the other uses constant-depth quantum circuits with
unbounded fan-in gates, i.e., $\QAC$ circuits.
Our construction can be seen as compiling the Learning with Errors (LWE)-based single-round proof of
quantumness of Arabadjieva et al.~\cite{ArabadjievaEtAl2024} to very low depth. The price paid for this
compilation is the reliance on less standard, though well-motivated,
assumptions: in addition to the lattice knowledge assumption used in~\cite{ArabadjievaEtAl2024},
we require a strengthened variant of the \emph{adaptive-hardcore-bit property}
of LWE, for which we provide supporting evidence. Unlike previous
low-depth proofs of quantumness, the quantum computation here requires no
mid-circuit measurements or feed-forward: it consists only of running a shallow
circuit and sampling from its output distribution.
This shows that shallow quantum circuits have sufficient structure to solve
certain classically hard tasks whose solutions can be verified efficiently.
\end{abstract}

\section{Introduction}\label{sec:introduction}

A basic goal in quantum complexity theory is to understand how little quantum
computation is needed to obtain an advantage over efficient classical
algorithms. For such an advantage to provide a convincing \emph{test of
quantumness}, however, a classical verifier should also be able to check it
efficiently. Constant-depth quantum circuits are among the most restricted
models of quantum computation, yet under plausible assumptions they can still
solve classically hard problems. We ask whether their outputs can also retain
enough structure for efficient classical verification. A positive answer
would broaden our understanding of shallow quantum circuits and mark a step
toward more practical tests of quantumness.

Let us start by taking stock of the evidence for the power of shallow quantum circuits.
Sampling the output distribution of a
constant-depth quantum circuit can be hard for polynomial-time classical
algorithms under plausible complexity-theoretic assumptions. This was shown
by Terhal and DiVincenzo~\cite{TerhalDiVincenzo2004} and later extended to
constant-depth architectures suited to quantum simulators, with hardness
holding also on average~\cite{MillerSandersMiyake2017,BermejoVegaEtAl2018,
HaferkampEtAl2020}. The later results adapt hardness arguments
developed for the broader class of \emph{instantaneous quantum
polynomial-time} (IQP) circuits, whose depth is low but not
constant~\cite{BremnerJozsaShepherd2011,BremnerMontanaroShepherd2016}.
Relative to a black-box \emph{random oracle}, sampling hardness can even be made
unconditional: a constant-depth circuit that queries the oracle can sample
from the Fourier spectrum of the oracle's truth table~\cite{Aaronson2010,BassirianEtAl2026}.
Beyond the oracle setting, one can prove unconditional
separations between quantum circuits and shallow classical
circuits~\cite{BravyiGossetKoenig2018,CoudronStarkVidick2021,
GrierSchaeffer2020,HsiehEtAl2026}. Such separations can also be robust to
noise~\cite{BravyiGossetKoenigTomamichel2020} and hold on average rather than
only in the worst case~\cite{BeneWattsEtAl2019}.

A complementary line of research by
H{\o}yer and {\v S}palek~\cite{HoyerSpalek2005} showed that
constant-depth quantum circuits with unbounded fan-out---the class
$\QNCf$---can approximate the quantum Fourier transform and the arithmetic
needed for Shor's algorithm~\cite{Shor1997}.\footnote{Throughout this
paper, $\QNC$, $\QNCloglog$, $\QAC$, and $\QNCf$ denote classes of
\emph{circuits} rather than classes of problems. Let us also note why
we restrict attention to sampling rather than decision problems: in a sublogarithmic-depth
circuit over bounded-arity gates, each output bit depends on at most
$2^{o(\log\lambda)}=\lambda^{o(1)}$ input qubits, so any decision
problem solved by such circuits can be simulated classically in
subexponential time. In particular, for constant-depth circuits over bounded-arity gates the corresponding decision task is solvable in classical polynomial time, and for log-logarithmic depth in quasipolynomial time $2^{\polylog(\lambda)}$.} With polynomial-time classical pre- and
post-processing, this yields an efficient factoring algorithm. More
recently, Grier, Morris, and Wu showed that
constant-depth quantum circuits with unbounded fan-in gates, the class
$\QAC$, can compute functions in $\TC$, the class of constant-depth
threshold circuits, when given multiple copies of the
input~\cite{GrierMorrisWu2026}. This is significant because proving lower bounds
against $\TC$ remains a major challenge for current circuit lower-bound
techniques, which face barriers such as natural
proofs~\cite{RazborovRudich1997,NaorReingold2004}.

Shallow quantum circuits can also implement sophisticated cryptographic
primitives. For instance, Foxman et al.~\cite{FoxmanEtAl2026} showed, under
cryptographic assumptions, that \emph{pseudorandom unitaries} admit exact
$\QNCf$ implementations and inverse-polynomial-accuracy
$\QAC$ implementations. In related work, Gheorghiu showed that
\emph{pseudoentangled states}, states that ``hide'' their entanglement structure, can be prepared by $\QNC$
circuits, i.e., constant-depth circuits with bounded fan-in and bounded
fan-out gates~\cite{Gheorghiu2026}.

Despite these results, our understanding of shallow quantum computation has
an important \emph{verification} gap. No task is known that
is simultaneously solvable by $\QNC$ circuits, hard for general
polynomial-time classical computation, and efficiently verifiable by a
classical algorithm. Unconditional shallow-circuit separations rule out only
correspondingly shallow classical circuits, while generic sampling hardness
does not provide an efficient predicate for recognizing a valid sample.
The
verifiable problems known to be both solvable by shallow quantum circuits
and classically hard either require unbounded fan-out gates, as in the
H{\o}yer--{\v S}palek setting~\cite{HoyerSpalek2005}, or use mid-circuit
measurements and classical feed-forward, as in the interactive \emph{proofs
of quantumness} with shallow circuits of Hirahara and Le
Gall~\cite{HiraharaLeGall2021} and of Liu and
Gheorghiu~\cite{LiuGheorghiu2022}, and in the non-interactive protocol of Chia and Hung~\cite{ChiaHung2023}, which in addition uses a black-box random oracle as part of the quantum circuit.\footnote{We also note that the Chia and Hung result is not merely a proof of quantumness, but a protocol for certifying the depth of quantum provers~\cite{ChiaHung2023}.} This leaves the following
basic question:

\begin{quote}
  \begin{center}
\emph{Is there a (white-box) problem solvable by $\QNC$ circuits, classically
hard, and efficiently verifiable?}
  \end{center}
\end{quote}

\noindent We make partial progress toward answering this question by providing
a sampling task with the following properties:
\begin{itemize}[label=$\bullet$]
  \item it is classically hard under plausible lattice assumptions;
  \item its samples are classically verifiable in polynomial time;
  \item it is solvable by polynomial-width, log-logarithmic-depth
        quantum circuits (i.e., $\QNCloglog$ circuits);
  \item it is solvable by polynomial-width, constant-depth quantum circuits
        with unbounded fan-in gates (i.e., $\QAC$ circuits).
\end{itemize}
Since $\QAC\subseteq\QNCf$~\cite{HoyerSpalek2005,TakahashiTani2016}, the sampling task is also solvable by
$\QNCf$ circuits, though we do not emphasize this point: as mentioned, a
similar task can already be solved using the H{\o}yer--{\v S}palek result. Our
reason for targeting $\QAC$ is that it appears to be a much weaker
class. Whether polynomial-size constant-depth $\QAC$ circuits can
compute parity---equivalently, fan-out---is a long-standing open
problem: the best known circuits have exponential
size~\cite{Rosenthal2021}. Polynomial-size $\QAC$ circuits are known only for
fan-out on polylogarithmically many qubits~\cite{GrierMorrisWu2026}, and
circuits with few ancilla qubits provably cannot compute parity, even
approximately~\cite{NadimpalliEtAl2024}. By contrast, adding the
fan-out gate makes threshold arithmetic exactly computable in constant
depth and collapses the resulting depth
hierarchy~\cite{HoyerSpalek2005,TakahashiTani2016}. As far as is known,
$\QAC$ lacks precisely the operation that gives $\QNCf$ its power.

An important subtlety with such a result concerns circuit \emph{width}. A
polynomial-size circuit on $w$ qubits can always be simulated classically
by maintaining its $2^w$-dimensional state vector. In particular, a circuit
whose width is polylogarithmic in a parameter $\lambda$ has simulation cost at
most $2^{\polylog(\lambda)}$. It is relatively straightforward to construct
such quantum circuits that solve a classically hard problem, produce a verifiable
solution, and have log-logarithmic depth. The idea is to use the Cleve--Watrous
parallelization of Shor's algorithm to factor a semiprime with polylogarithmic
bit length~\cite{cleve-watrous}. More concretely, suppose we wish to factor a
semiprime represented using $\polylog(\lambda)$ bits. The best known
classical algorithms have quasipolynomial cost in $\lambda$, namely
$2^{\polylog(\lambda)}$.\footnote{This is the case when the semiprime has
$\omega(\log^3\lambda)$ bits.}  With the Cleve--Watrous
parallelization, however, the corresponding quantum circuit has depth
$\log(\polylog(\lambda))=O(\log\log\lambda)$. This is unsatisfactory, however: the construction merely
expresses the input length as polylogarithmic in a larger parameter, and with
respect to the actual input size the depth remains logarithmic. One could
instead ask for polynomially many small semiprimes to be factored in parallel,
thereby obtaining polynomial total width while preserving log-logarithmic
depth, but the classical running time would remain
quasipolynomial.\footnote{This argument applies to $\QAC$ as well:
factoring these small semiprimes can be done by a constant-depth
$\QAC$ circuit. This is achieved with fan-out gates that act on only
polylogarithmically many qubits. Fan-out at that scale can be
performed in polynomial-width $\QAC$, as shown recently by
Grier, Morris, and Wu~\cite{GrierMorrisWu2026}.}

We argue that to obtain a genuine separation at polynomial width, we need a task with \emph{near-exponential} classical hardness in the security parameter: ideally, no $2^{o(\lambda)}$-time classical algorithm should solve it, while bounded-arity quantum circuits of depth $O(\log\log\lambda)$ can. Our construction achieves these properties under the assumptions stated below.

\begin{theorem}[Informal]
\label{thm:informal}
Assume the subexponential knowledge-of-lattice-point (LK-$1/4$) assumption
(\Cref{ass:encoded-lk}) and the adaptive-hardcore-bit assumption with carry
predicates (\Cref{ass:carry-ahcb}). There is a family of sampling problems,
indexed by a security parameter $\lambda$, whose samples can be verified in
classical polynomial time. This is illustrated as a challenge--response
task in \Cref{prot:overview}. An honest quantum sampler is accepted with probability
$1-\negl(\lambda)$ and has two polynomial-size
implementations: a $\QNCloglog$ circuit over one- and two-qubit gates and a
constant-depth $\QAC$ circuit with unbounded fan-in gates. By contrast,
any uniform classical sampler running in time
$2^{o(\lambda)}$ is accepted with probability at most
$3/4+\negl(\lambda)$. Both quantum implementations consist of a single
unitary circuit followed by a final measurement layer and require neither
mid-circuit measurements nor classical feed-forward.
\end{theorem}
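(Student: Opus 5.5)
We prove the theorem in three parts: completeness, shallow implementability, and classical soundness. The task is the single-round LWE proof of quantumness of Arabadjieva et al.~\cite{ArabadjievaEtAl2024}, recompiled so that every arithmetic step is shallow.

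\emph{Completeness.} The verifier samples an LWE trapdoor pair $(A,t)$ and publishes $A$ together with $y=As+e$ (or a $0/1$ encoding of it). The honest quantum sampler prepares a superposition over preimages $(b,x)$ with $Ax + b\cdot y$ close to some image. It then measures the image register in the standard basis and the preimage register in the Hadamard basis. It outputs $(z,d)$, where $z$ is the image and $d$ is an equation satisfying $d\cdot(x_0\oplus x_1)=0$ for the two preimages. The verifier uses the trapdoor to invert $z$ and checks the equation. Acceptance with probability $1-\negl(\lambda)$ follows from the standard noise-flooding and injectivity analysis, which we import.

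\emph{Shallow implementation.} The standard circuit computes $Ax+by \bmod q$, which requires depth $\Theta(\log n)$ for summation. To avoid this, we write the image in redundant form. Each inner product is a sum of $n$ terms. We replace it by a carry-save representation: every partial product $A_{ij}x_j$ is copied into its own register, and the tree of additions is replaced by local carry-save steps whose intermediate carry bits are also output. Copying a Gaussian-superposition input bit to polynomially many places is a fan-out. We avoid this by having the verifier's instance already encode the needed copies, namely by preparing the uniform/Gaussian superposition directly in a GHZ-like encoded form via Hadamards on independent registers. The consistency of these registers is then checked classically at the end rather than enforced coherently.

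What remains is addition of numbers with $O(\log q)=O(\lambda)$ bits. With bounded gates this takes depth $O(\log\log\lambda)$ via carry-lookahead on blocks, or constant depth in $\QAC$, since AND/OR of unbounded fan-in computes carry-lookahead directly. We measure every intermediate wire. The output therefore contains the carry predicates, and the verifier recomputes and checks them. Since each wire is measured in the standard basis at the end while the $x$ registers are measured in the Hadamard basis, the Hadamard outcome now also carries phase contributions from the garbage. These contributions are determined by the carry predicates, so the verifier's check becomes $d\cdot(x_0\oplus x_1)\oplus \bigoplus_k c_k(x_0)\oplus c_k(x_1)=0$ for the measured carry bits. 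No mid-circuit measurement or feed-forward is needed.

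\emph{Soundness.} This is the main obstacle. Suppose a $2^{o(\lambda)}$-time classical sampler passes with probability $3/4+\epsilon$. We use LK-$1/4$ (\Cref{ass:encoded-lk}) to extract from it the preimage $x_b$ it "knows" for each accepted $z$. The extractor runs in comparable subexponential time, which is why subexponential hardness is required. Combined with the equation $d$, this yields a pair $(x_b,d)$ predicting the modified hardcore bit $d\cdot(x_0\oplus x_1)\oplus\bigoplus_k[c_k(x_0)\oplus c_k(x_1)]$ with advantage $\epsilon$. This contradicts the carry-predicate adaptive-hardcore-bit assumption (\Cref{ass:carry-ahcb}).

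The delicate point is making the garbage-phase term exactly match the predicate family named in \Cref{ass:carry-ahcb}. I would first fix the carry-save/lookahead circuit and compute its phase polynomial explicitly. Only then would I specify the predicates, so that the assumption is literally the statement needed.
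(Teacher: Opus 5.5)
Your shallow-implementation step fails, and completeness fails with it. Every register you measure in the computational basis must take the same value on both claw preimages $(0,\bx_0)$ and $(1,\bx_1)$. Otherwise the measurement collapses $\frac{1}{\sqrt2}(\ket{0,\bx_0}+\ket{1,\bx_1})$ onto one branch, the Hadamard outcome becomes independent of the checked parity, and the honest prover passes the equation test with probability only $1/2$. Partial products $\bA_{i,j}x_j$, carry-save partial sums, and their carry bits are deterministic functions of $\bx$ that change across the branches by $\bs$-dependent amounts, so measuring them destroys exactly the interference you need. Standard-basis measurement of garbage also yields no ``phase contributions''; those come only from registers measured in the Hadamard basis. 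Your copying step fails as well:
\begin{enumerate}
\item Hadamards on independent registers give independent uniform values, not copies of $x_j$.
\item A GHZ-type copy of a superposed register needs fan-out or logarithmic depth.
\item Copies cannot be checked for consistency once they are Hadamard-measured.
\item With independent per-row copies the map is no longer two-to-one, so there is no claw.
\end{enumerate}
You also never erase the function-noise register, which, done directly, again requires a long sum.

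The missing idea is a perfectly private randomized encoding with randomness reconstruction. In the paper, the $\br$-masks give each row its own masked copy $\bv_j+\br_{j,i}$ while the encoder touches $\bv_j$ only once. The path masks $\bh$ split each row sum into local terms. Each $\boldeta_i$ enters a single symbol, so it can be erased locally (\Cref{eq:eta-erasure}). The distribution of the measured encoding depends on $(b,\bx,\boldeta)$ only through $\bA\bx+b\bu+\boldeta$, and the masks are unique given the branch (\Cref{lem:noisy-re-privacy}). Hence the post-measurement state is still an equal two-term superposition, now over extended preimages that include the masks, which are Hadamard-measured together with $\bx$.

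Your parameter regime also cannot deliver the claimed depth. For $\Theta(\lambda)$-bit words the top carry of an addition depends on $\Omega(\lambda)$ input bits, so any exact bounded-arity circuit has depth $\Omega(\log\lambda)$. Carry-lookahead in $\QAC$ needs each input bit to feed many gates at once, i.e.\ fan-out on $\poly(\lambda)$ qubits, which polynomial-size $\QAC$ is not known to provide. The paper instead uses a prime $q$ of $L=\Theta(\log^2\lambda)$ bits. Each local block then has depth $O(\log L)=O(\log\log\lambda)$ and needs fan-out only on $\polylog(\lambda)$ qubits, which Grier--Morris--Wu realize exactly in $\QAC$. The paper recovers $2^{\Theta(\lambda)}$ hardness by taking $n=\Theta(\lambda^2)$ and assuming $2^{c\sqrt n}$ security.

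Your soundness intuition (extract with LK, then contradict \Cref{ass:carry-ahcb}) is the right one, but the mechanism that produces $3/4$ is missing. With a single LWE key there is no source for that threshold, and the extraction step is unjustified. \Cref{ass:encoded-lk} is assumed only for image keys, in which $[\bA\mid\bu]$ is one trapdoor matrix. For $\bu=\bA\bs+\be$, the lattice of $[\bA\mid\bu]$ contains the short vector $\bA\bs-\bu=-\be$, so the LK-$1/4$ premise fails. The paper follows AGGM:
\begin{enumerate}
\item The verifier chooses uniformly between an equation key and an image key.
\item In image mode, LK-$1/4$ followed by randomness reconstruction extracts a complete extended preimage (\Cref{lem:encoded-image-extraction}).
\item Key-mode indistinguishability transfers this extraction event to equation mode. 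The paper derives it from \Cref{ass:carry-ahcb} itself via a column-rerandomization reduction (\Cref{lem:encoded-key-hiding}).
\item The bound $\Pr[\mathcal Q]+\Pr[\mathcal E]\leq1+\Pr[\mathcal Q\cap\mathcal E]$ in \Cref{lem:aggm-counting} gives $3/4$.
\end{enumerate}

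Finally, choosing the predicates only after fixing the circuit does not prove the statement, because \Cref{ass:carry-ahcb} is fixed in advance. It covers carry bits of the partial sums $\sigma_{i,j}(\bs)$ at adversarial base points. These appear precisely because the two branches' path masks differ by $\chi_b\sigma_{i,j}(\bs)$ (\Cref{lem:preimage-differences}), which yields the decomposition $\langle\ba_b(\bD),\bs\rangle\oplus\beta_b(\bD)\oplus\Phi^{\chi_b}_{\mathcal P_b(\bD)}(\bs)$ of \Cref{lem:decomposition}. Garbage from a carry-save circuit would produce different nonlinear terms not covered by the assumption. You would also need the verifier's admissibility filter and its density on honest directions (\Cref{prop:encoded-good-density}). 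That is what makes accepted, extracted transcripts yield admissible tuples for the assumption while the honest prover is still accepted with probability $1-\negl(\lambda)$.
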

\noindent \Cref{tab:comparison} summarizes how our results compare to previous work
on quantum advantage from shallow circuits.

\begin{table}[!ht]
  \centering
  \begin{adjustbox}{max width=\textwidth}
  \footnotesize
  \renewcommand{\arraystretch}{1.35}
  \setlength{\tabcolsep}{5pt}
  \begin{tabular}{@{}>{\raggedright\arraybackslash}m{3.4cm}ccccccc@{}}
    \toprule
    & \shortstack{Classical\\hardness}
    & Verifiable
    & \shortstack{Quantum\\depth}
    & \shortstack{Mid-circuit\\meas.}
    & Fan-out
    & Fan-in
    & \shortstack{Black\\box} \\
    \midrule
    Sampling hardness \cite{TerhalDiVincenzo2004,MillerSandersMiyake2017,BermejoVegaEtAl2018,HaferkampEtAl2020}
    & \cellcolor{tabgood}exponential
    & \cellcolor{tabbad}no
    & \cellcolor{tabgood}$O(1)$
    & \cellcolor{tabgood}no
    & \cellcolor{tabgood}bounded
    & \cellcolor{tabgood}bounded
    & \cellcolor{tabgood}no \\
    Fourier sampling \cite{Aaronson2010,BassirianEtAl2026}
    & \cellcolor{tabgood}exponential
    & \cellcolor{tabbad}no
    & \cellcolor{tabgood}$O(1)$
    & \cellcolor{tabgood}no
    & \cellcolor{tabgood}bounded
    & \cellcolor{tabgood}bounded
    & \cellcolor{tabbad}yes \\
    Unconditional separations \cite{BravyiGossetKoenig2018,CoudronStarkVidick2021,HsiehEtAl2026}
    & \cellcolor{taborange}shallow circuits
    & \cellcolor{tabgood}yes
    & \cellcolor{tabgood}$O(1)$
    & \cellcolor{tabgood}no
    & \cellcolor{tabgood}bounded
    & \cellcolor{tabgood}bounded
    & \cellcolor{tabgood}no \\
    Interactive shallow Clifford circuits \cite{GrierSchaeffer2020}
    & \cellcolor{taborange}$\NC$ circuits
    & \cellcolor{tabgood}yes
    & \cellcolor{tabgood}$O(1)$
    & \cellcolor{tabbad}yes
    & \cellcolor{tabgood}bounded
    & \cellcolor{tabgood}bounded
    & \cellcolor{tabgood}no \\
    Parallelized Shor \cite{Shor1997,cleve-watrous}
    & \cellcolor{tabgood}exponential
    & \cellcolor{tabgood}yes
    & \cellcolor{tabbad}$O(\log\lambda)$
    & \cellcolor{tabgood}no
    & \cellcolor{tabgood}bounded
    & \cellcolor{tabgood}bounded
    & \cellcolor{tabgood}no \\
    Factoring with fan-out \cite{HoyerSpalek2005}
    & \cellcolor{tabgood}exponential
    & \cellcolor{tabgood}yes
    & \cellcolor{tabgood}$O(1)$
    & \cellcolor{tabgood}no
    & \cellcolor{tabbad}unbounded
    & \cellcolor{tabgood}bounded
    & \cellcolor{tabgood}no \\
    Interactive proofs of quantumness \cite{HiraharaLeGall2021,LiuGheorghiu2022}
    & \cellcolor{tabgood}exponential
    & \cellcolor{tabgood}yes
    & \cellcolor{tabgood}$O(1)$
    & \cellcolor{tabbad}yes
    & \cellcolor{tabgood}bounded
    & \cellcolor{tabgood}bounded
    & \cellcolor{tabgood}no \\
    Proofs of quantum depth \cite{ChiaHung2023}
    & \cellcolor{tabgood}exponential
    & \cellcolor{tabgood}yes
    & \cellcolor{tabgood}$O(1)$
    & \cellcolor{tabbad}yes
    & \cellcolor{tabgood}bounded
    & \cellcolor{tabgood}bounded
    & \cellcolor{tabbad}yes \\
    Single-round proofs of quantumness \cite{ArabadjievaEtAl2024}
    & \cellcolor{tabgood}exponential
    & \cellcolor{tabgood}yes
    & \cellcolor{tabbad}$O(\log\lambda)$
    & \cellcolor{tabgood}no
    & \cellcolor{tabgood}bounded
    & \cellcolor{tabgood}bounded
    & \cellcolor{tabgood}no \\
    \midrule
    This work ($\QNCloglog$)
    & \cellcolor{tabgood}exponential
    & \cellcolor{tabgood}yes
    & \cellcolor{tabok}$O(\log\log\lambda)$
    & \cellcolor{tabgood}no
    & \cellcolor{tabgood}bounded
    & \cellcolor{tabgood}bounded
    & \cellcolor{tabgood}no \\
    This work ($\QAC$)
    & \cellcolor{tabgood}exponential
    & \cellcolor{tabgood}yes
    & \cellcolor{tabgood}$O(1)$
    & \cellcolor{tabgood}no
    & \cellcolor{tabgood}bounded
    & \cellcolor{taborange}unbounded
    & \cellcolor{tabgood}no \\
    \bottomrule
  \end{tabular}
  \end{adjustbox}
  \vspace{0.1in}
  \caption{Comparison with previous work on quantum advantage from shallow
  circuits. ``Classical hardness'' is the classical time complexity that
  the task is conjectured to require, measured in a security parameter
  $\lambda$ in which the input size is polynomial. The unconditional separations hold
  without assumptions but only against shallow classical circuits (the
  interactive variant of Grier and Schaeffer~\cite{GrierSchaeffer2020} is
  hard for $\NC$ circuits), and the
  hardness of Fourier sampling is unconditional relative to a random
  oracle; the remaining rows rely on complexity-theoretic or cryptographic
  assumptions, with the last two using those of \Cref{thm:informal}, which
  rule out $2^{o(\lambda)}$-time samplers.
  ``Verifiable'' indicates whether a classical polynomial-time verifier can
  check the output. The remaining columns list the quantum resources used:
  circuit depth, mid-circuit measurements with classical feed-forward, gate
  fan-out and fan-in, and black-box (random-oracle) access. Green marks an
  ideal property, yellow an acceptable caveat, and orange and red
  increasingly significant limitations for the goal of verifiable quantum
  advantage in low depth. Note that unbounded fan-in is orange because it is
  viewed as being weaker than unbounded fan-out, which is in red.}
  \label{tab:comparison}
\end{table}

The LK-$1/4$ and carry-predicate assumptions are nonstandard, so we briefly
motivate their origins. Our starting point is the LWE-based \emph{proof of
quantumness} of Brakerski et al.~\cite{BCMVV2021} (BCMVV), together with the
single-round variant of Arabadjieva et al.~\cite{ArabadjievaEtAl2024} (AGGM).
A proof of quantumness is a protocol between a classical polynomial-time
\emph{verifier} and a purportedly quantum polynomial-time \emph{prover}: the
verifier issues challenges that an efficient quantum prover can answer, but
that no efficient classical prover should answer with comparable success.
The original BCMVV protocol has two rounds (four messages) and requires the quantum prover
to perform mid-circuit measurements. Single-round variants, consisting of
one challenge and one response, can be obtained either in the
random-oracle model~\cite{BrakerskiKoppulaVaziraniVidick2020,
YamakawaZhandry2024,ChiaHung2023} or under a
\emph{knowledge assumption}, as in AGGM. The random-oracle route is ill-suited to a
low-depth result: because the oracle is a black box, it provides no
circuit-depth guarantee. Indeed, a light-cone argument shows that any exact
one- and two-qubit-gate implementation of a function with $n$ input bits and
one output bit, where the output depends on every input bit, must have depth
$\Omega(\log n)$. We therefore follow AGGM's fully white-box approach,
which is based on the LK assumption.

In cryptography, knowledge assumptions are \emph{non-falsifiable}:\footnote{This is a technical term; it does not mean that the assumption literally cannot be falsified. Indeed, the LK assumption is false for quantum algorithms and this is part of the reason why it's possible to use it in the design of a test of quantum advantage.} unlike
the hardness of factoring or LWE, they are not efficiently testable by an
external challenger. Instead, a knowledge assumption makes a claim about
the structure of algorithms---namely, that any algorithm accomplishing a
certain task must ``know'' an associated witness. This is formalized
through an efficient \emph{extractor}, tailored to the algorithm, that
recovers the witness from the algorithm's input and random tape. For
intuition, consider a one-way function whose image is a small random
subset of its codomain. If an algorithm, given a description of the
function, outputs a point in its image, a knowledge assumption asserts
that it must have known a corresponding preimage. Such functions are known
as \emph{extractable one-way
functions}~\cite{CanettiDakdouk2009,BitanskyEtAl2014}.

Knowledge assumptions were introduced by Damg{\aa}rd, who used them to
construct a public-key encryption scheme designed to resist
chosen-ciphertext attacks (CCA)~\cite{Damgard1991}. Variants were subsequently
refined for three-round zero-knowledge
arguments~\cite{HadaTanaka1998,BellarePalacio2004} and used in succinct
non-interactive arguments of knowledge (SNARKs) and practical verifiable-computation systems
such as Pinocchio~\cite{GennaroEtAl2013,ParnoEtAl2013}. Thus, although
non-falsifiable, knowledge assumptions have served as a cryptographic
design tool for more than three decades.

The LK assumption is a lattice-based knowledge assumption: roughly, an
algorithm that, given a description of a lattice, outputs a noisy lattice
point must know the lattice point closest to it. It was introduced in work
on CCA-secure somewhat homomorphic encryption~\cite{loftus}; related
lattice-knowledge assumptions have since been used in efficient
constructions of lattice-based designated-verifier zero-knowledge
SNARKs~\cite{GennaroEtAl2018,IshaiSuWu2021}.

Even under the LK assumption, the honest AGGM prover must evaluate a
logarithmic-depth circuit. We use \emph{randomized encodings} to compile the prover's computation to one that can instead be evaluated either in
log-logarithmic depth or in constant depth with unbounded fan-in gates.
Classical hardness rests on a subexponential form of the LK assumption
of AGGM, as well as an \emph{adaptive-hardcore-bit assumption for LWE with carry
predicates}. Informally, BCMVV proved the classical hardness of their interactive test using an adaptive-hardcore-bit property which
follows from LWE. This states that no efficient algorithm, given an LWE public
key, can predict a self-chosen parity of the secret non-negligibly better
than guessing, subject to an admissibility condition that excludes
degenerate choices. Our assumption states that adaptively chosen carry
bits do not help: the algorithm may augment its parity with bits that flip
when partial sums of the LWE equations are added to values of its choice,
making the predicted bit \emph{nonlinear} in the LWE secret. The carry bits account
for the structure that our randomized encoding places in the prover's
equation; the assumption is not currently known to follow from standard
LWE, though we provide partial evidence toward such a reduction.

\section{Technical Overview}\label{sec:overview}

We first recall the BCMVV and AGGM protocols and the LWE-based trapdoor
claw-free function, then describe the randomized encoding, the encoded
quantum strategy, verification and classical hardness, and the circuit
depth, and finally summarize the quantum advantage task as a two-message protocol.

\subsubsection{BCMVV and AGGM}
We begin with a brief recap of the proof of quantumness of Brakerski et
al.~\cite{BCMVV2021}. The protocol uses
\emph{trapdoor claw-free functions} (TCFs) constructed from LWE. A TCF
$f$ can be generated and evaluated efficiently and is $2$-to-$1$. Thus,
every output $y\in\Im(f)$ has exactly two preimages $x_0$ and $x_1$;
the pair $(x_0,x_1)$, for which $f(x_0)=f(x_1)=y$, is called a
\emph{claw}. Claw-freeness requires that no efficient algorithm can find
such a triple $(x_0,x_1,y)$. For the TCFs used in BCMVV, this property is
based on the conjectured intractability of LWE. The description of $f$ is
also generated together with a secret trapdoor that enables efficient
inversion: given the trapdoor and any $y\in\Im(f)$, a polynomial-time
algorithm can recover the corresponding claw $(x_0,x_1)$.
We use this idealized description for clarity; the noisy LWE instantiation
below has small correctness and boundary errors.

BCMVV use a somewhat stronger variant of a standard TCF. In addition to
being unable to find a claw, no efficient algorithm should be able, with
probability non-negligibly greater than $1/2$, to produce an image with a
valid preimage as well as an admissible string $d$ satisfying the equation
$d\cdot(x_0\oplus x_1) = 0$. This requirement is known as the
\emph{adaptive-hardcore-bit property} (AHCB). Intuitively, given an
input-output pair, it is intractable to predict even one bit of information
about the other preimage in the claw. Brakerski et al.~showed, based on the
intractability of LWE, that their TCF construction satisfies this
property~\cite{BCMVV2021}. We can now describe their protocol.

The verifier
first samples a TCF satisfying the AHCB property together with its trapdoor,
denoted $(f,\tau)$, and sends the description of $f$ to the prover. The
prover responds with a point $y\in\Im(f)$. The verifier then chooses
uniformly between two challenges. In the \emph{preimage test}, it asks for
an $x$ such that $f(x)=y$. In the \emph{equation test}, it asks for an
admissible string $d$ such that $d\cdot(x_0\oplus x_1)=0$, where
$(x_0,x_1)$ are the preimages of $y$. Both conditions can be checked
efficiently.
For the preimage test, the verifier evaluates $f$ on the prover's response
and checks that the result equals $y$. For the equation test, it uses the
trapdoor $\tau$ to recover $(x_0,x_1)$ from $y$ and checks the equation
for $d$.

Any classical prover that succeeds in this protocol with probability
non-negligibly greater than $3/4$ can be used to violate the AHCB property.
By contrast, in this idealized description there is a simple quantum strategy
that succeeds with probability $1$.
Suppressing normalization, the quantum prover first prepares the state
\[
\sum_{x} \ket{x}\ket{f(x)},
\]
and measures the second register, obtaining some $y\in\Im(f)$. Since
$f$ is a TCF, the first register collapses to
\begin{equation} \label{eq:clawstate}
\frac{1}{\sqrt{2}}(\ket{x_0} + \ket{x_1}).
\end{equation}
The prover retains this state and sends $y$ to the verifier. Upon
receiving the challenge, it measures the state in~\Cref{eq:clawstate} in
either the computational basis (for the preimage test) or the Hadamard basis
(for the equation test).
Despite its simple structure, the protocol is interactive and requires the
prover to keep the state in~\Cref{eq:clawstate} coherent until it receives
the verifier's challenge.

Arabadjieva et al.~\cite{ArabadjievaEtAl2024} show how to
replace this interactive protocol with a single-round one using a knowledge
assumption. To see the main idea, suppose that the TCF had the following
additional property: whenever a classical algorithm produces an image
$y\in\Im(f)$, an efficient \emph{extractor} can recover a corresponding
preimage from the algorithm's input and random coins. In essence, this says
that a classical prover that evaluates the TCF must ``know'' the preimage on
which the evaluation was performed. The preimage challenge would then no
longer be necessary. Instead, the prover could return $(y,d)$ in a single
response, and the verifier would only need to check that $y$ is a valid image 
and that the equation for $d$ is correct.

Existing knowledge assumptions do not, however, establish both extractability
and the AHCB property for the same function family. AGGM instead use two
function families whose public descriptions are computationally
indistinguishable. The first is the original TCF $f$, which satisfies the
AHCB property based on LWE. The second is an injective function $g$ for
which the \emph{knowledge-of-lattice-point} (LK) assumption provides the
required extractor. Rather than choosing a challenge after receiving $y$,
the verifier chooses one of these functions at the outset and hides this
choice in its description. An efficient classical prover cannot tell which
function it received and must return $(y,d)$ without a subsequent
challenge. For $f$, the verifier uses its trapdoor to recover the claw and
check the equation for $d$; for $g$, it uses its trapdoor to check that
$y$ is a valid image.

If a classical prover succeeds for $g$, the knowledge assumption gives an
extractor that recovers a preimage of $y$. Indistinguishability transfers
this extraction guarantee to $f$, where the extracted preimage and an
accepted equation would violate the AHCB property. An honest quantum prover,
by contrast, can obtain $y$ and $d$ by measuring the two registers of its
state and can therefore send both values at once without waiting for another
message from the verifier.
For this reason, the LK extraction guarantee is assumed only for classical
algorithms.\footnote{While the LK assumption already provides a separation between classical and quantum computations, it is unclear whether 
it is sufficient, on its own, to obtain a proof of quantumness. This is why the AGGM
protocol uses both LK and LWE.}

In terms of depth, the AGGM prover must coherently evaluate the LWE-based TCF.
Standard implementations have either logarithmic depth or constant depth with
unbounded fan-out~\cite{gheorghiuHoban}. To obtain a shallow implementation
with bounded fan-out, we adapt the randomized encoding for binary
affine maps proposed in~\cite{Gheorghiu2026} to affine maps over a ring.
Before explaining this construction, let us first describe the two functions to be encoded: the LWE-based TCF of BCMVV and the extractable function of AGGM.

\subsubsection{The LWE-based TCF}
Throughout, we write all vectorial quantities in boldface.
In LWE, one is given a matrix $\bA\in\Zq^{m\times n}$ and a vector
$\bA\bs+\be\bmod q$, where $\bs\in\Zq^n$ is secret, $\be$ is a
short error vector sampled from the truncated discrete Gaussian
$D_{\Zq,B}^{m}$ of parameter $B$, and $q$ is the modulus. The search variant of LWE asks one to recover $\bs$ from
$(\bA,\bA\bs+\be\bmod q)$~\cite{regev}.
The BCMVV instantiation chooses the secret from $\bits^n$. Its LWE-based
TCF is defined by
\[
 f(b,\bx)=\bA\bx+b\cdot(\bA\bs+\be)+\boldeta\pmod q,
\]
where $\boldeta\leftarrow D_{\Zq,B_P}^{m}$ for a width $B_P\gg B$.

It is useful to make the function noise explicit by writing
\[
 \bu=\bA\bs+\be\pmod q,\qquad
 f(b,\bx;\boldeta)=\bA\bx+b\bu+\boldeta\pmod q.
\]
The bit $b$ selects one of two noisy branches. Suppose that branch $b=0$
is evaluated at $(\bx_0,\boldeta_0)$. Taking
$\bx_1=\bx_0-\bs$ and $\boldeta_1=\boldeta_0-\be$ on branch $b=1$
gives
\[
 f(0,\bx_0;\boldeta_0)
 =f(1,\bx_1;\boldeta_1).
\]
Such pairs form claws of the function. The function noise $\boldeta$ is
sampled from a range much wider than that of the LWE error $\be$, so
translating it by $\be$ preserves all but a negligible fraction of its
mass. Consequently, almost every output produced by one branch also occurs
on the other. The $\bx$-components of the two inputs differ by the secret
$\bs$, so finding both would reveal $\bs$ and thereby break search-LWE.
Although we omit the details here, one can efficiently sample the LWE instance
$(\bA,\bu)$ together with a trapdoor that enables efficient inversion of
$f$.

The extractable function used by AGGM has the same public affine form,
\[
 g(b,\bx;\boldeta)=\bA\bx+b\bu+\boldeta\pmod q,
\]
but here the full matrix $[\bA\mid\bu]$ is generated as a single trapdoor
matrix, rather than taking $\bu=\bA\bs+\be$. Except with negligible
probability, every image of this function has a unique preimage over the input and noise
ranges considered in the protocol, and the LK assumption says that a
classical algorithm producing a valid image must know that preimage. Under
LWE, the public descriptions of $f$ and $g$ are computationally indistinguishable
to efficient classical provers, so a prover cannot adapt its strategy to the
function it receives.

\subsubsection{A randomized encoding for affine maps over a ring}
The remaining task is to obtain low-depth implementations of these two
functions. Direct evaluation does not suffice:
the functions effectively compute long sums modulo $q$, which require
logarithmic depth. Instead, we construct low-depth randomized encodings of
these functions. A randomized encoding of a function $F$ is a new
randomized function $\widehat F$ that retains key properties of $F$. Let
\[
 \bM=[\bA\mid\bu]\in\Zq^{m\times t},\qquad
 \bv=(\bx,b)\in\Zq^t,\qquad t=n+1.
\]
The affine map that we wish to encode, which computes either $f$ or $g$, is
\[
 \bv\longmapsto\bM\bv+\boldeta.
\]
Each output coordinate is a sum involving essentially every coordinate of
$\bv$, and each coordinate of $\bv$ is reused in every row of $\bM$.
Rather than compute these sums and distribute the inputs directly, the
randomized encoding replaces $\bM\bv+\boldeta$ by a longer string and defers
the global work to a classical decoder. We need three properties. First,
the decoder must recover the original output.
Second, the encoded string must reveal nothing else: its distribution should
depend on $(\bv,\boldeta)$ only through $\bM\bv+\boldeta$. Third, given
an input and a compatible encoded string, it must be possible to reconstruct
the randomness used to create that string. Shallow implementation also
requires locality: every symbol of the longer string should
depend on only a constant number of source values, and no source value
should be used in more than a constant number of symbols.
An encoding with these properties for \emph{binary} affine maps was given
in~\cite{Gheorghiu2026}; here we generalize it to maps over $\Zq$.

To see the basic idea behind the encoding, first consider
\[
 S(a_1,\ldots,a_t)=a_1+\cdots+a_t\pmod q.
\]
This is the modulo-$q$ analog of the parity function. Choose independent
uniform \emph{masks} $h_1,\ldots,h_{t-1}\in\Zq$, and set $h_0=h_t=0$. The randomized encoding for
$S$ outputs
\[
 c_j=a_j+h_{j-1}-h_j\pmod q,
 \qquad j=1,\ldots,t.
\]
One can think of $h_j$ as moving a random amount from position $j$ to
position $j+1$: it is subtracted from $c_j$ and added back into
$c_{j+1}$. This is why the $h_j$ are called \emph{path masks}: the
positions $1,\ldots,t$ form a path, each mask sits on an edge between
consecutive positions and shuttles a random amount one step along it,
and summing along the path cancels every mask; see
\Cref{fig:path-masks}. Each $c_j$ therefore
uses only one input and its two neighboring masks.

\begin{figure}[t]
\centering
\begin{tikzpicture}[>=stealth, node distance=0.55cm,
  sym/.style={draw, rounded corners=2pt, minimum height=2.6em,
              inner sep=5pt, font=\small}]
 \node[sym] (c1) {$c_1=a_1-h_1$};
 \node[sym, right=of c1] (c2) {$c_2=a_2+h_1-h_2$};
 \node[sym, right=of c2] (c3) {$c_3=a_3+h_2-h_3$};
 \node[sym, right=of c3] (c4) {$c_4=a_4+h_3$};
 \draw[->, thick] (c1.north) to[bend left=40]
   node[above, midway, font=\small] {$h_1$} (c2.north);
 \draw[->, thick] (c2.north) to[bend left=40]
   node[above, midway, font=\small] {$h_2$} (c3.north);
 \draw[->, thick] (c3.north) to[bend left=40]
   node[above, midway, font=\small] {$h_3$} (c4.north);
\end{tikzpicture}
\caption{The path-mask encoding of the sum $a_1+a_2+a_3+a_4$ modulo
$q$, for $t=4$. The symbols form a path, and the mask $h_j$ lives on
the $j$-th edge: it moves a random amount from position $j$ to position
$j+1$, subtracted from $c_j$ and added to $c_{j+1}$. Summing along the
path cancels every mask and returns the sum.}
\label{fig:path-masks}
\end{figure}

Adding the encoded symbols causes the masks to cancel in
adjacent pairs:
\[
 \sum_{j=1}^{t}c_j
 =\sum_{j=1}^{t}a_j+
   (h_0-h_1)+(h_1-h_2)+\cdots+(h_{t-1}-h_t)
 =S(a_1,\ldots,a_t).
\]
Thus, the decoder recovers $S$, even though each $c_j$ is computed
locally.

The tuple $(c_1,\ldots,c_t)$ also hides the individual summands. For any
fixed total, it is uniform over all tuples whose coordinates have that total.
Given only the total, one can therefore sample
$c_1,\ldots,c_{t-1}$ uniformly and choose $c_t$ to make the sum correct.
Thus, two inputs with the same sum give exactly the same distribution of
encoded tuples.

Applying this path-mask encoding independently to every row of $\bM\bv$
would handle the long sums, but it would still use the same input
$\bv_j$ in every row. We need to avoid this reuse in order to preserve
constant locality. To do this, we add a second layer of random masks.

For each input coordinate $j$, choose fresh uniform values
$\br_{j,1},\ldots,\br_{j,m}\in\Zq$ and output
\[
 \bw_{j,0}=\bv_j+\br_{j,1},\qquad
 \bw_{j,i}=\br_{j,i}-\br_{j,i+1}
 \quad(i=1,\ldots,m-1).
\]
From these symbols, the classical decoder can derive a separately masked
version of $\bv_j$ for row $i$:
\[
 \widetilde{\bv}_{j,i}
 =\bw_{j,0}-\sum_{i'=1}^{i-1}\bw_{j,i'}
 =\bv_j+\br_{j,i}.
\]
The long prefix sum is performed only by the decoder. The encoder
touches $\bv_j$ once, in $\bw_{j,0}$, and every other $\bw$-symbol is
just the difference of two adjacent masks.

The decoder's masked value $\widetilde{\bv}_{j,i}$ introduces an unwanted correction
$\bM_{i,j}\br_{j,i}$. We hide and collect these corrections using the
path-mask encoding above. Independently for each row $i$, choose
uniform masks $\bh_{i,1},\ldots,\bh_{i,t-1}$, set
$\bh_{i,0}=\bh_{i,t}=0$, and produce
\[
 \bz_{i,j}
 =\bM_{i,j}\br_{j,i}
  +\bh_{i,j-1}-\bh_{i,j}.
\]
Adding across row $i$ cancels the $\bh$-masks and leaves precisely the
unwanted correction:
\[
 \sum_j\bz_{i,j}=\sum_j\bM_{i,j}\br_{j,i}.
\]
The decoder can therefore recover the $i$-th coordinate of the matrix
product:
\[
 \sum_j\bM_{i,j}\widetilde{\bv}_{j,i}-\sum_j\bz_{i,j}
 =\sum_j\bM_{i,j}\bv_j.
\]

Finally, the function noise is inserted locally. We replace the first
$\bz$-symbol in row $i$ by
\[
 \widehat{\bz}_{i,1}=\bz_{i,1}-\boldeta_i
\]
and leave the other symbols unchanged. Since the decoder subtracts the
$\bz$-symbols, its output becomes
\[
 \sum_j\bM_{i,j}\widetilde{\bv}_{j,i}-\sum_j\widehat{\bz}_{i,j}
 =(\bM\bv)_i+\boldeta_i.
\]
This completes the encoding of the affine map.

The two mask layers address different obstacles. The $\br$-masks prevent
one input from feeding every matrix row, while the $\bh$-masks break each
row's correction sum into independently computable local terms.
The encoded output consists of all the $\bw$- and
$\widehat{\bz}$-symbols. The masks make its distribution depend only on
the decoded value, just as in the addition example. More concretely, for a
fixed input the map from the uniform $\br$-masks to the $\bw$-string is
a bijection, so $\bw$ is uniform and hides $\bv$. Once $\bw$ and
the decoded value are fixed, the $\bh$-masks make each row of
$\widehat{\bz}$ uniform subject only to the required row sum.

The masks are also uniquely reconstructible. Given $\bv$ and an encoded
output, one first decodes it to $\by=\bM\bv+\boldeta$ and obtains
$\boldeta=\by-\bM\bv$. Restoring
$\bz_{i,1}=\widehat{\bz}_{i,1}+\boldeta_i$, one recovers the
$\br$-masks successively from the $\bw$-symbols and then the
$\bh$-masks successively from the $\bz$-symbols.

Most importantly, the quantum part of this computation is entirely local.
Every encoded symbol depends on at most three source words. Conversely,
each input word appears once, each $\br$-mask appears in at most two
$\bw$-symbols and one $\bz$-symbol, and each $\bh$-mask appears in
two adjacent $\bz$-symbols; each function-noise word also appears once.
All of the prefix sums and row sums occur later in the classical decoder.
Once the encoded symbols have been measured, that decoder may copy and
reuse them freely.

\subsubsection{The encoded quantum strategy}
We can now explain the quantum prover's strategy in the protocol.
As mentioned, the prover will be asked to evaluate an encoding of an affine function.
We track the prover's state through the computation, suppressing
normalization and, until they matter, the scratch registers used for the
arithmetic. Write $\widehat F(b,\bx,\boldeta;\br,\bh)$ for the encoded
output, the string of all $\bw$- and $\widehat\bz$-symbols.

The prover first prepares a uniform superposition over the branch bit, the
input, the function noise, and the two collections of masks, next to an
output register initialized to zero:
\[
 \sum_{b,\bx,\boldeta,\br,\bh}
 \ket{b,\bx,\boldeta,\br,\bh}\ket{\mathbf 0}.
\]
In the shallow construction, the function noise is uniform over a
centered interval of power-of-two size. This both preserves the overlap
between the two TCF branches and allows its superposition to be prepared
exactly using Hadamard gates followed by a fixed modular translation.

Second, the prover evaluates the encoding into the output register:
\[
 \sum_{b,\bx,\boldeta,\br,\bh}
 \ket{b,\bx,\boldeta,\br,\bh}
 \ket{\widehat F(b,\bx,\boldeta;\br,\bh)}.
\]
Every symbol of $\widehat F$ depends on only a constant number of source
words, so this step is local.

Third, the prover erases the function-noise register. This step is needed
because the two sides of a TCF claw use different values of $\boldeta$:
if the noise remained, it would become part of the final Hadamard relation
below. Since $\boldeta_i$ enters a single symbol,
$\widehat\bz_{i,1}=\bM_{i,1}\br_{1,i}-\bh_{i,1}-\boldeta_i$, the local
reversible update
\[
 \boldeta_i\longleftarrow
 \boldeta_i+\widehat{\bz}_{i,1}+\bh_{i,1}
             -\bM_{i,1}\br_{1,i}
\]
maps $\boldeta_i$ to zero on every basis state of the superposition.
After this update, and after uncomputing the arithmetic scratch, the state
is
\[
 \sum_{b,\bx,\boldeta,\br,\bh}
 \ket{b,\bx,\mathbf 0,\br,\bh}
 \ket{\widehat F(b,\bx,\boldeta;\br,\bh)};
\]
the noise value survives only inside the output register.

Finally, the prover applies Hadamards to the registers containing
$b,\bx,\br,\bh$ and measures everything. All measurements occur at the
end; for the analysis it is convenient to view the output register as
measured first, which is equivalent because the measurements act on
different registers and commute. Measuring the output register yields an
encoded string, and the surviving superposition consists of exactly the
sources that produce it. Perfect privacy says that the two original
preimages of the decoded value induce identical distributions on encoded
strings, and randomness reconstruction says that exactly one mask
assignment is compatible with each of them. Except for a negligible
fraction of boundary and shifted-noise cases, the unmeasured registers are
therefore in the state
\[
 \frac{1}{\sqrt2}
 \left(\ket{0,\boldzeta_0}+\ket{1,\boldzeta_1}\right),
 \qquad
 \boldzeta_b=\bigl(J_q(\bx_b),\,J_q(\br^{(b)}),\,J_q(\bh^{(b)})\bigr),
\]
where $J_q$ denotes componentwise binary representation and
$\br^{(b)},\bh^{(b)}$ are the unique masks compatible with the input
$\bx_b$ and the measured output; the erased function noise appears in
neither string. The string $\boldzeta_b$ is the \emph{extended
preimage} on branch $b$, and $(b,\boldzeta_b)$ the complete extended
preimage.
The Hadamard measurement of these registers
produces a uniformly random \emph{direction} $\bD$ and a bit $C$ satisfying
\[
 C=\bD\cdot(\boldzeta_0\oplus\boldzeta_1).
\]
The randomized encoding has thus preserved the interference between the
two preimages used by the original protocol while making the coherent
computation local.

\subsubsection{Verification and classical hardness}
The verifier chooses uniformly between the TCF $f$ and its
indistinguishable extractable counterpart $g$. When it chooses $g$, its
trapdoor recovers the decoded preimage, and it checks that the encoded
output is a valid encoding of that preimage, masks included
(\Cref{def:validity-relation}). When it chooses
$f$, its trapdoor recovers the two original preimages, and randomness
reconstruction recovers the corresponding $\boldzeta_0,\boldzeta_1$; the
verifier then checks the stated equation
$C=\bD\cdot(\boldzeta_0\oplus\boldzeta_1)$, together with an admissibility
condition on $\bD$ (\Cref{def:admissible}, analyzed in
\appref{app:encoded-good-directions}).

It is worth writing out what this equation says, because its structure
dictates the assumption we make. The two extended preimages differ in
three ways. The inputs differ by the LWE secret:
$\bx_1=\bx_0-\bs$. Recall that the encoded input is
$\bv=(\bx,b)$; the shift means that $\bv_j$ changes by $-s_j$ for
$j\leq n$, while the branch coordinate $\bv_{n+1}=b$ changes by $+1$.
The $\br$-masks differ by coordinates of $\bs$: once the encoded output
is fixed, the first $\bw$-symbol of column $j$ determines
$\br_{j,1}=\bw_{j,0}-\bv_j$, and the remaining $\bw$-symbols fix the
differences of adjacent masks, so each input shift propagates down its
column with opposite sign:
\[
 r^{(1)}_{j,i}=r^{(0)}_{j,i}+s_j\quad(j\leq n),
 \qquad
 r^{(1)}_{n+1,i}=r^{(0)}_{n+1,i}-1.
\]
The $\bh$-masks differ by partial sums of the LWE equations: telescoping
the first $j$ symbols of row $i$ gives
$\bh_{i,j}=\sum_{j'\leq j}(\bM_{i,j'}\br_{j',i}-\bz_{i,j'})$,
and across the two branches the $\br$-shifts contribute
$\sum_{j'\leq j}\bA_{i,j'}s_{j'}$ while restoring the erased noise
($\boldeta_1=\boldeta_0-\be$) contributes $e_i$, so
\[
 h^{(1)}_{i,j}=h^{(0)}_{i,j}+\sigma_{i,j}(\bs),
 \qquad
 \sigma_{i,j}(\bs)=e_i+\sum_{j'\leq j}\bA_{i,j'}s_{j'}\pmod q.
\]
The parity $\bD\cdot(\boldzeta_0\oplus\boldzeta_1)$
compares such values bitwise. Because the secret is binary, the $\bx$-
and $\br$-contributions are linear in $\bs$, with coefficients the
prover can compute from its own transcript. The $\bh$-contributions are
nonlinear: XORing the binary representations of two values that differ by
a modular shift produces the carry pattern of that shift. For instance,
adding $1$ to $3$ turns $0011$ into $0100$ and flips three bits, while
adding $1$ to $4$ turns $0100$ into $0101$ and flips one: the flipped
bits show how far the carry propagates, and this depends jointly on
the shift and on the value being shifted. This is similar in spirit to
the \emph{crypto dark matter} proposals of Boneh et
al.~\cite{BonehEtAl2018} and the alternating-moduli constructions that
followed~\cite{DinurEtAl2021}, which build cryptographic primitives by
composing linear functions over two different moduli, a composition
that is nonlinear over either modulus; here the two moduli are $q$
and $2$. Altogether, the
checked equation takes the form
\[
 C=\langle\ba,\bs\rangle\oplus\beta\oplus
 \bigoplus_{i}\bigoplus_{j\leq n-1}\bd_{i,j}\cdot
 \bigl(J_q(h^{(0)}_{i,j})\oplus J_q(h^{(0)}_{i,j}+\sigma_{i,j}(\bs))\bigr).
\]
The last column $j=n$, where $\sigma_{i,n}(\bs)=u_i$ is public,
contributes only to the offset bit $\beta$.
Each carry term is anchored at the branch-$0$ mask $h^{(0)}_{i,j}$, the
value at which the carry pattern of adding $\sigma_{i,j}(\bs)$ is read off;
we call this value the \emph{base point} of the term, and, as the example
above shows, the same shift produces different carries at different base
points. The
linear coefficient $\ba$, the offset bit $\beta$, the bit masks
$\bd_{i,j}$, and the base points are all computable from the public
key, one extended preimage, and $\bD$, without the secret; the
displayed equation is written for branch $0$, and the branch-$1$
preimage gives an equivalent sign-flipped form. Computability
without the secret is what will turn
a successful classical prover into an adversary for our assumption. We
call each term of the double XOR, i.e., $\bd_{i,j}\cdot
 \bigl(J_q(h^{(0)}_{i,j})\oplus J_q(h^{(0)}_{i,j}+\sigma_{i,j}(\bs))\bigr)$, a
\emph{carry predicate}: a choice of a row $i$, a prefix length $j$, a
base point $\xi$ (here $\xi=h^{(0)}_{i,j}$), and a selection $\bd$ of
bit positions.

The LK assumption provides an extractor that obtains a decoded preimage from
any successful classical strategy, and randomness reconstruction extends
it to the full string $\boldzeta_b$, masks included. Indistinguishability
lets this extraction event be transferred from $g$ to $f$. Since the
verifier chooses the two functions uniformly, an average success
probability exceeding $3/4$ by more than the reduction error would imply
that, with probability correspondingly larger than $1/2$, a classical
prover both holds an extended preimage and supplies a correct equation of
the displayed form.

We therefore assume an \emph{adaptive hardcore bit with carry predicates}:
no efficient algorithm, given the public key, can produce a linear
function of the secret, a list of carry predicates of its choice, and a
prediction of the resulting parity that is correct non-negligibly more
often than a random guess, subject to an admissibility
condition that excludes degenerate choices. With no carry bits, the corresponding statement is the BCMVV adaptive-hardcore-bit
property and follows from LWE.
\textbf{Our assumption essentially states that allowing adaptively chosen carry bits does not
help.} To see what the carry bits add, consider a single carry predicate
with base point $\xi=0$: it contributes bits of $J_q(\sigma_{i,j}(\bs))$, the binary
representation of the partial sum itself. The top bit of
$\sigma_{i,j}(\bs)$ is a threshold function of the secret, far outside the
linear parities $\langle\ba,\bs\rangle\oplus\beta$ that the BCMVV
theorem controls.
Unfortunately, unlike for the ordinary adaptive hardcore bit, we were not able to derive this statement from standard LWE.
Nevertheless, we provide some evidence for why such a reduction might be true in \appref{app:encoded-ahcb-evidence}.

The carry-predicate assumption also implies that the keys of $f$ and
$g$ are indistinguishable (the \emph{key-mode indistinguishability} of
\Cref{lem:aggm-counting}, with $f$ the equation key and $g$ the image key). A
direct reduction converts a distinguisher for the two keys into a
predictor for a parity of the binary LWE secret. Thus the two
cryptographic assumptions used for classical soundness are LK extraction for
$g$ and the carry-predicate assumption for $f$, with the latter also
supplying key-mode indistinguishability.
The resulting classical soundness bound is $3/4+\negl(\lambda)$.

\subsubsection{Circuit depth}
To see why the prover can perform the evaluation in shallow depth,
let $q$ be an $L$-bit prime, where $L=\Theta(\log^2\lambda)$; the
parameters are fixed in \Cref{sec:encoded-parameters}. Each
output symbol of the
randomized encoding is the result of only a constant number of
additions, subtractions, and multiplications on $L$-bit words. With
bounded-arity gates these operations have depth $O(\log L)$, and the
encoding's constant input locality allows all symbol calculations to run
in parallel. Since $L=\Theta(\log^2\lambda)$, this gives depth
$O(\log\log\lambda)$.

The same local arithmetic can be shown to be implementable in constant
depth with fan-out gates of size $\poly(L)=\log^{O(1)}\lambda$.
Grier, Morris, and Wu show how to exactly implement fan-out of this size with
polynomial-size standard-$\QAC$ circuits~\cite{GrierMorrisWu2026}.
Together with the constant input locality, this lets all local blocks
run in parallel on disjoint registers and yields an exact constant-depth
$\QAC$ implementation. The construction uses only polynomially many
source, mask, output, and scratch words, so both implementations have
polynomial width. They perform exactly the same encoded computation,
produce the same transcript distribution, and use only final measurements.

The superposition over inputs and masks can also be prepared with
polynomially many parallel blocks of size $\poly(L)$.
This can once again be done either with bounded-arity gates in log-logarithmic
depth or with a $\QAC$ circuit.

\subsubsection{The protocol in brief}
\Cref{prot:overview} summarizes the steps described in this overview. The formal version, with the parameters
and the exact checks, is \Cref{prot:encoded}; the formal statement of
\Cref{thm:informal} is \Cref{thm:encoded-formal}. The honest quantum prover
is accepted with probability $1-\negl(\lambda)$, and, under our
assumptions, no uniform classical prover of time $2^{o(\lambda)}$ is accepted
with probability exceeding $3/4+\negl(\lambda)$.

\begin{protocolframe}
\begin{protocol}[Simplified overview protocol]
\label{prot:overview}
\leavevmode
\smallskip

\noindent\textbf{Input:} a security parameter $\lambda$.

\begin{enumerate}
 \item \textbf{Challenge.} The verifier samples, with equal probability,
  either the claw-free function
  $f(b,\bx;\boldeta)=\bA\bx+b\bu+\boldeta\pmod q$, where
  $\bu=\bA\bs+\be$ is an LWE sample, or its extractable counterpart
  $g$ of the same public form, where $[\bA\mid\bu]$ is a single
  trapdoor matrix; the two descriptions are computationally
  indistinguishable. It keeps the trapdoor and sends $(\bA,\bu)$ to
  the prover.
 \item \textbf{Response.} The prover prepares a uniform superposition
  over the branch bit $b$, the input $\bx$, the function noise
  $\boldeta$, and the encoding masks $\br,\bh$; evaluates the
  randomized encoding $\widehat F(b,\bx,\boldeta;\br,\bh)$ into an
  output register; erases the function-noise register in place; applies
  Hadamards to the registers holding $(b,\bx,\br,\bh)$; and measures
  all registers. This computation admits both a $\QNCloglog$
  implementation with one- and two-qubit gates and a constant-depth
  $\QAC$ implementation. The prover sends the measured encoded output
  $\widehat\by$, the direction $\bD$, and the bit $C$.
 \item \textbf{Verification.}
  \begin{itemize}
   \item If the verifier sampled $g$: it accepts if $\widehat\by$
    decodes to a valid image of $g$, checked using the trapdoor.
   \item If the verifier sampled $f$: it recovers the two extended
    preimages $\boldzeta_0,\boldzeta_1$ of $\widehat\by$---the binary
    encodings of the two claw preimages together with their
    reconstructed masks---using the trapdoor and randomness
    reconstruction. It accepts if $\bD$ passes the admissibility test
    and
    \[
     C=\bD\cdot(\boldzeta_0\oplus\boldzeta_1).
    \]
  \end{itemize}
\end{enumerate}
\end{protocol}
\end{protocolframe}

\section{Related Work}\label{sec:related}

\subsubsection{Sampling advantage from shallow circuits}
As mentioned in the introduction, hardness of classically sampling the
outputs of constant-depth quantum circuits was first established by Terhal
and DiVincenzo~\cite{TerhalDiVincenzo2004}. Worst-case and average-case
hardness arguments were developed for the low- but not constant-depth IQP
model~\cite{BremnerJozsaShepherd2011,BremnerMontanaroShepherd2016} and were
subsequently brought to constant depth proper, for measurement-based and
quenched-Hamiltonian
architectures~\cite{MillerSandersMiyake2017,BermejoVegaEtAl2018,
HaferkampEtAl2020}. None of these sampling tasks is known to be
efficiently verifiable. Proposals that hide a verification secret inside
an IQP circuit~\cite{ShepherdBremner2009,BremnerChengJi2025} have faced secret-extraction
attacks~\cite{KahanamokuMeyer2023,GrossHangleiter2025}, so their classical security remains
unsettled. In the black-box setting, Fourier sampling relative to a random
oracle is solvable by constant-depth circuits with oracle gates and is
unconditionally hard
classically~\cite{Aaronson2010,RazTal2022,BassirianEtAl2026}; it supports
certified randomness, but with a verifier that runs in exponential
time~\cite{BassirianEtAl2026}.

\subsubsection{Unconditional separations}
Bravyi, Gosset, and K{\"o}nig proved unconditionally that a relation
problem solvable by $\QNC$ circuits requires logarithmic depth
classically~\cite{BravyiGossetKoenig2018}. This separation was made
noise-tolerant~\cite{BravyiGossetKoenigTomamichel2020}, extended to
hardness against $\mathsf{AC}^0$ circuits~\cite{BeneWattsEtAl2019} and,
in interactive form, against $\NC$~\cite{GrierSchaeffer2020}, developed
into certifiable randomness against low-depth
adversaries~\cite{CoudronStarkVidick2021}, and generalized to a depth
hierarchy for shallow circuits~\cite{HsiehEtAl2026}. Unconditional and
exponentially large quantum advantages are also known in \emph{sample}
complexity~\cite{BenedettiBuhrmanWeggemans2026,BenedettiEtAl2025}. These
tasks are all efficiently verifiable, but their hardness holds only against
restricted classical models rather than against all polynomial-time
algorithms.

\subsubsection{Cryptographic tests of quantumness}
BCMVV introduced the LWE-based test of quantumness underlying our
result~\cite{BCMVV2021}; Mahadev's verification protocol and the
computational Bell test of Kahanamoku-Meyer et al.\ developed closely
related forms of classical verification~\cite{Mahadev2018,KMCVY2022}.
Single-round variants were obtained in the random-oracle
model~\cite{BrakerskiKoppulaVaziraniVidick2020,YamakawaZhandry2024} or fully white-box
by using knowledge assumptions~\cite{ArabadjievaEtAl2024}. On the
depth side, Hirahara and Le Gall~\cite{HiraharaLeGall2021} and Liu and
Gheorghiu~\cite{LiuGheorghiu2022} compressed the honest
prover of such protocols to constant-depth circuits interleaved with
mid-circuit measurements and classical
feed-forward, and Chia and Hung
combined these ingredients with a random oracle to certify the depth of a
quantum device~\cite{ChiaHung2023}. Compared with these works, our honest
prover runs a single white-box shallow circuit with one final measurement
layer, at the cost of less standard assumptions.

\subsubsection{Fan-out, fan-in, and shallow arithmetic}
The distinction between unbounded fan-in and unbounded fan-out goes back to
early work on parallel quantum
computation~\cite{MooreNilsson2002,GreenHomerMoorePollett2002}. H{\o}yer
and {\v S}palek showed that fan-out yields constant-depth implementations
of the arithmetic behind Shor's algorithm~\cite{HoyerSpalek2005}, building
on the logarithmic-depth parallelization of Cleve and
Watrous~\cite{cleve-watrous}, and Takahashi and Tani proved the exact collapse
of the fan-out depth hierarchy~\cite{TakahashiTani2016}. For fan-in,
Grier, Morris, and Wu showed that $\QAC$ circuits can compute $\TC$
functions when given many copies of the input~\cite{GrierMorrisWu2026};
our $\QAC$ compilation uses their exact polylogarithmic fan-out
construction. Foxman et al.\ construct pseudorandom unitaries in these
shallow classes~\cite{FoxmanEtAl2026}, and their use of $\QAC$
parallels ours: Rosenthal showed that fan-out has approximate constant-depth
$\QAC$ implementations of exponential size~\cite{Rosenthal2021}, and
Foxman et al.\ instantiate a refinement of that construction on blocks
of logarithmically many qubits, where its size becomes polynomial and
its error inverse-polynomial. In both cases $\QAC$ supplies fan-out
only at small scales---for us, the polylogarithmic arithmetic blocks of
the randomized encoding---and the construction is arranged so that no
larger fan-out is ever needed.

\subsubsection{Randomized encodings}
Randomized encodings with constant locality originate in the work of
Applebaum, Ishai, and Kushilevitz on cryptography in
$\mathsf{NC}^0$~\cite{AIK2006}. Our encoding is a $q$-ary path-mask
analog of the binary encoding recently used for constant-depth
pseudoentanglement~\cite{Gheorghiu2026}. 

\section{Discussion}\label{sec:discussion}

Our results leave open the central question raised in the introduction: can
verifiable quantum advantage be achieved with $\QNC$ circuits? The two
implementations we gave approach this question from complementary directions. The
bounded-arity implementation has depth $O(\log\log\lambda)$, while the
constant-depth implementation uses unbounded fan-in gates. Thus, neither
gives a constant-depth circuit over only one- and two-qubit gates.

There are a few ways one could imagine pushing this construction to a true $\QNC$ one.
First, the reason for the log-logarithmic depth, or the unbounded fan-in, is that we must perform
arithmetic on $L$-bit numbers, where $L=\polylog(\lambda)$. If it were possible to
construct analogous functions over a constant-size modulus, then $L=O(1)$ and we would
have a $\QNC$ implementation. Unfortunately, constructing trapdoor claw-free functions over a constant-size
modulus has proven difficult. The closest construction is the lossy trapdoor function family of Dao and Jain,
based on the Dense-Sparse Learning Parity with Noise (LPN) assumption~\cite{daojain}. In fact, this is precisely what allows
the pseudoentanglement construction of~\cite{Gheorghiu2026} to have a $\QNC$ implementation.
However, the lossy trapdoor functions are not $2$-to-$1$, and the
trapdoor there cannot invert the lossy function. Neither property is needed for a pseudoentanglement
construction, but both are needed for a verifiable test of quantumness (at least in this formulation).

A second possibility is to change the randomized encoding, rather than the TCFs. Specifically, one
would need a polynomial-length randomized encoding with the same output- and
input-locality properties at the level of \emph{individual bits} rather than individual $q$-ary symbols.
In particular, every encoded output bit should depend on only a constant number
of input and random bits, and every input or random bit should occur in only a
constant number of output formulas. To carry over the present proof directly,
the encoding should also retain efficient decoding, perfect privacy, and
efficient unique randomness reconstruction. The coherent superpositions
over the encoding randomness and function noise would have to be preparable in
constant depth with bounded-arity gates. Classical soundness would
additionally require an analog of the carry-predicate assumption for the
resulting encoding. We were not able to find a randomized encoding satisfying all of these properties.

A separate cryptographic open problem is to derive the carry-predicate version of the adaptive-hardcore-bit statement (\Cref{ass:carry-ahcb}) from standard LWE. At our parameters, the standard
BCMVV reductions yield key-mode indistinguishability and negligible
advantage for the adaptive hardcore bit on the
original preimages, i.e., the \emph{linear} version of the property we need.
However, we were unable to show tolerance to
carry predicates, which let the adversary's equation involve carry bits of
the partial sums of the LWE equations.

As mentioned, though, we do provide some evidence that the carry-predicate version of the
adaptive-hardcore-bit property should indeed hold against classical adversaries. To explain why, we first
recall what violating the assumption would mean. A successful adversary, given the
public key, chooses a linear function of the binary LWE secret, a
selection of carry bits of the partial sums
$e_i+\sum_{j'\leq j}\bA_{i,j'}s_{j'}$ evaluated at base points of
its choice, and a correct prediction of the combined parity, subject to an
admissibility condition that excludes degenerate choices. In
\appref{app:encoded-ahcb-evidence} we show that various efficient classical strategies for doing this must fail if LWE is hard. Below, $n$
denotes the length of the binary LWE secret.

First, no efficient classical algorithm can, once its base
points are chosen, predict these parities,
carry bits included, for \emph{uniformly random} challenges, non-negligibly
better than guessing.
If one could, rerunning it on many random challenges from the same saved
state and applying Goldreich--Levin
reconstruction~\cite{GoldreichLevin1989} would recover the secret in
full, thereby solving LWE.
Second, a different reduction tolerates an adversary that chooses its own
challenges once its base points are fixed, provided it can be rerun
from that point to produce many equations that are overwhelmingly often
correct and its challenge distribution is not too concentrated:
challenges that retain all but $n/4$ bits of the maximum
possible min-entropy still lead to a contradiction. An adversary
therefore cannot escape the reduction merely by biasing its choice of
challenges.
Third, we show that, perhaps unsurprisingly, if the adversary makes no
use of the carry predicates, so that its predicted bit is a linear
function of the secret, then to succeed it would have to break the
ordinary adaptive-hardcore-bit property, even when its choices are
fully adaptive. A sharper version also holds: if such an adversary's
predictions have merely inverse-polynomial bias, then having slightly more than
$n/2$ bits of min-entropy in its linear function already yields a
contradiction. The proofs turn these
predictors directly into key-mode distinguishers and adversaries against
the ordinary BCMVV
adaptive-hardcore-bit game. The underlying difficulty in
proving the full carry-predicate version outright is the nonlinear dependence of the checked
equation on the secret: the privacy and randomness-reconstruction
properties of the randomized encoding are not enough to map it back to
the BCMVV hardcore bit.

A possible way to avoid the adaptive hardcore bit altogether would be
to base the construction on the proof of quantumness of
Kahanamoku-Meyer et al.~\cite{KMCVY2022} instead of BCMVV: their
computational Bell test replaces the hardcore-bit check with a
CHSH-style test and needs only the claw-free structure of the trapdoor
family. However, that protocol has three rounds of interaction instead
of two, and it is unclear whether knowledge assumptions can collapse it
to a single round, as is possible with
BCMVV~\cite{ArabadjievaEtAl2024}.

Several two-round (four-message) proofs of quantumness avoid the
adaptive-hardcore-bit property. Kalai et al.~compile two-player nonlocal
games into single-prover protocols using QFHE~\cite{KalaiEtAl2023};
Alnawakhtha et al.~use rotated measurements to base soundness directly on
LWE~\cite{AlnawakhthaEtAl2024}; Arora et al.~use an LWE-based encrypted-CNOT
operation~\cite{AroraEtAl2024}; Miller uses LWE to hide an extended GHZ
state~\cite{Miller2024}; and the oblivious-state-preparation framework of
Bartusek and Khurana yields such a protocol from dual-mode
TCFs~\cite{BartusekKhurana2025}. The latter work also gives a more interactive
construction from any plain TCF. Other AHCB-free protocols with more
interaction include the nonlocal-game compiler of Bacho et al., which uses
any plain TCF but has round complexity linear in the security
parameter~\cite{BachoEtAl2025}, and the construction of Morimae and Yamakawa from
full-domain trapdoor permutations, which uses sequential interactive
hashing~\cite{MorimaeYamakawa2023}.

These alternatives do not appear to yield our low-depth, single-round
result directly. Several require operations beyond coherent evaluation of
an affine map---such as homomorphic evaluation, rotated-basis measurements,
an encrypted CNOT, or oblivious state preparation---and it is unclear
whether randomized encodings can parallelize those operations to our target
depth. More fundamentally, their transcripts do not have BCMVV's
preimage-versus-equation challenge structure. The existing LK extractor
therefore does not make one challenge redundant, and reducing these
protocols to a single round appears to require a new, protocol-specific
knowledge assumption. By comparison, LK has an established history, while
our remaining carry-predicate AHCB statement is falsifiable and is therefore
a more concrete target for a proof or an attack than a new knowledge
assumption.

In terms of practicality, our construction likely cannot be implemented on near-term devices due
to the large width of the circuits involved. A conservative estimate shows that the width of the circuit is
$\widetilde O(\lambda^4\log^6\lambda)$. Much of this overhead comes from
the encoding masks, together with the workspace used to perform local arithmetic
in parallel. Reducing the expansion of the randomized encoding and the
workspace needed for its coherent evaluation, without increasing the depth
or weakening security, is an important direction for future work.

\subsubsection*{\texorpdfstring{Statement on AI usage}{Statement on AI usage}}
The main ideas for the results in the paper are my own. I have, however, made extensive use of OpenAI's ChatGPT 5.6 Sol and Anthropic's
Claude Fable in deriving the proofs and in writing this paper (mainly copy-editing, improving phrasing and so on).
The carry-predicate form of \Cref{ass:carry-ahcb} was proposed by
Claude. I had initially expressed the assumption as an
adaptive-hardcore-bit property of the randomized encoding of the
original trapdoor claw-free function. I then asked Claude for a more self-contained statement, and it reformulated the assumption in terms of
carry bits of the partial sums of the LWE equations, stated directly
from the public key, with no reference to the encoding. Unfortunately, neither model
was able to help me prove it based on LWE (even in conjunction with other assumptions),
though they were also unable to disprove it. I have independently checked all the proofs.

\section{Preliminaries}\label{sec:preliminaries}

We fix notation, recall the basic notions of quantum information we
use, and define the circuit classes considered in this paper.

\subsection{Notation}

Vectorial quantities (including bit strings viewed as vectors)
and matrices are typeset in boldface. For example, $\bx,\boldeta,\bd,\bD$
are vectors, while $\bA,\bK,\bM$ are matrices. Scalars, random variables,
sets, register names, and algorithms remain in ordinary math font. Indexed
matrix entries retain the bold base symbol, as in $\bA_{i,j}$, while
coordinates of vectors are set in italic where the bold form would be
cluttered, as in $s_j$ for the $j$-th coordinate of $\bs$. Entries of
the mask arrays $\br,\bh$ are written like matrix entries, $\br_{j,i}$ and
$\bh_{i,j}$; when a branch superscript is attached we drop the boldface and
write $r^{(b)}_{j,i}$, $h^{(b)}_{i,j}$. We write
$\boldsymbol\delta_j$ for the $j$-th standard basis vector, and, for a
positive integer $N$, $[N]=\{1,\ldots,N\}$.

For an integer $q\geq2$, we write $\Zq$ for the ring of integers modulo
$q$ and $L=\lceil\log_2 q\rceil$ for its bit length (in our
construction $q$ is chosen with $2^{L-1}\le q<2^L$, so this agrees with the
parameter $L$ of \Cref{sec:encoded-parameters}). For a scalar, vector, or
array of elements of $\Zq$, $J_q(\cdot)$ denotes its entrywise canonical
$L$-bit binary encoding, concatenated in the natural order; thus
$J_q(\bx)\in\bits^{nL}$ for $\bx\in\Zq^n$. In our construction, $q$ is an
odd prime, so some $L$-bit strings represent integers at least $q$;
such a string is not the canonical encoding of any ring element, and the
verifier rejects any response in which one appears. Binary inner products are over $\mathbb F_2$. For a
residue $a\in\Zq$, $\centerq(a)$ denotes its centered integer
representative: for even $q$, it lies in $[-q/2,q/2)$, and for odd
$q$, it lies in $[-(q-1)/2,(q-1)/2]$. We apply this convention
coordinatewise to vectors.

A \emph{negligible} function $f$, denoted $f=\negl(\lambda)$, is a
function $f:\mathbb N\to\mathbb R$ such that
$f(\lambda)=o(\lambda^{-c})$ for every constant $c>0$. For
$\lambda\in\mathbb N$, we write $1^\lambda$ for the string of
$\lambda$ ones.

\subsection{Quantum information}

We briefly recall the standard quantum formalism and refer to
\cite{NielsenChuang2010} for a complete treatment. The state space of
$n$ qubits is the $2^n$-dimensional complex Hilbert space
$(\mathbb C^2)^{\otimes n}$, with the \emph{computational basis}
$\{\ket{x}:x\in\bits^n\}$ indexed by $n$-bit strings. A pure state is
a unit vector $\ket\psi=\sum_x\alpha_x\ket x$ in this space. A quantum
circuit is a sequence of \emph{gates}, unitary operators acting on one or
more qubits, followed by measurements. Measuring a register in the
computational basis returns outcome $x$ with probability
$|\alpha_x|^2$. The Hadamard gate acts on one qubit as
$H\ket b=\frac1{\sqrt2}(\ket0+(-1)^b\ket1)$ for $b\in\bits$; measuring
a register in the \emph{Hadamard basis} means applying $H$ to each of its
qubits and then measuring in the computational basis.

We use the following standard calculation from LWE-based proofs of
quantumness~\cite{BCMVV2021}. Measuring the state
$\frac1{\sqrt2}(\ket0\ket{\boldzeta_0}+\ket1\ket{\boldzeta_1})$, for
$\boldzeta_0,\boldzeta_1\in\bits^n$, entirely in the Hadamard basis yields a
uniformly random string $\bd\in\bits^n$ together with a bit $c$
satisfying $c=\bd\cdot(\boldzeta_0\oplus\boldzeta_1)$.

\subsection{Circuit classes}

We use the standard definitions of the circuit classes $\QNC$, $\QAC$,
and $\QNCf$~\cite{complexityZoo}. Throughout, these names refer to classes of
\emph{circuits} rather than classes of problems.

A quantum circuit layer is a tensor product of gates acting on pairwise
disjoint sets of qubits. The class $\QNC$ consists of polynomial-size,
constant-depth circuits over one- and two-qubit gates. We write
$\QNCloglog$ when the depth is $O(\log\log\lambda)$. The class $\QAC$
additionally permits generalized Toffoli gates
\[
 \ket{x_1,\ldots,x_t,y}\longmapsto
 \ket{x_1,\ldots,x_t,y\oplus\textstyle\bigwedge_i x_i}
\]
with an unbounded number of controls; equivalently, it permits generalized
controlled-$Z$ after a one-qubit basis change on the target. This is
unbounded \emph{fan-in}. It does not let one qubit participate in several
gates in the same layer.

The distinct class $\QNCf$ augments bounded-arity constant-depth circuits
with the unbounded \emph{fan-out} gate
\[
 \ket{s,x_1,\ldots,x_t}\longmapsto
 \ket{s,x_1\oplus s,\ldots,x_t\oplus s}.
\]
Fan-out prepares GHZ states in constant depth, so the content of one
register can feed many gates in the same layer. In particular, the
fan-out gates on polylogarithmically many qubits used by our constant-depth
arithmetic are available directly, so both of our
circuit implementations carry over to $\QNCf$. We keep
$\QAC$ and $\QNCf$ separate throughout.

Circuits are classically compiled from the public instance in polynomial
time. Ancillas start in computational-basis states, arbitrary one-qubit
gates are allowed, and all measurements occur in one final layer.

\input{sections/02_encoded_model_assumptions}

\input{sections/06_randomized_encoding}

\iftoggle{full}{
  \bibliographystyle{alpha}
}{
  \bibliographystyle{splncs04}
}
\bibliography{references}

\iftoggle{full}{
  \appendix
  \makeatletter
  \renewcommand*{\theHsection}{appendix.\Alph{section}}
  \renewcommand*{\theHsubsection}{appendix.\Alph{section}.\arabic{subsection}}
  \makeatother

  \input{sections/appendix_encoded_good_directions}
  \input{sections/appendix_encoded_ahcb_evidence}
  \input{sections/appendix_encoded_parameters}
}{}

\end{document}

%% file: sections/header.tex
\usepackage{iftex}
\ifPDFTeX
  \usepackage[utf8]{inputenc}
\fi
\usepackage[T1]{fontenc}
\usepackage{lmodern}
\usepackage{microtype}

\usepackage{amsmath,amssymb,mathtools}
\usepackage{booktabs,tabularx,array}
\usepackage[noadjust]{cite}
\usepackage{url}
\usepackage[table]{xcolor}
\usepackage{adjustbox}
\usepackage{tikz}
\usetikzlibrary{positioning}

\colorlet{tabgood}{green!60!white!25}
\colorlet{tabok}{yellow!45!white!45}
\colorlet{taborange}{orange!35}
\colorlet{tabbad}{red!55!white!30}
\usepackage[inline]{enumitem}

\usepackage[
  pdfpagelabels=true,
  \iftoggle{llncs}{}{pagebackref}
]{hyperref}
\hypersetup{
  linktoc=page,
  pdfpagemode=UseNone,
  colorlinks,
  linkcolor={red!50!black},
  citecolor={blue!50!black},
  urlcolor={blue!80!black},
  pdfsubject={Verifiable quantum advantage in extremely low depth},
  pdfkeywords={proofs of quantumness, LWE, shallow quantum circuits, randomized encodings}
}

\iftoggle{llncs}{}{
  \renewcommand*{\backref}[1]{}
  \renewcommand*{\backrefalt}[4]{
    \ifcase #1
      (Not cited.)
    \or
      (Page~#2.)
    \else
      (Pages~#2.)
    \fi
  }

}

\makeatletter
\AtBeginDocument{
  \hypersetup{pdftitle={\@title}}
}
\makeatother

\usepackage[capitalize]{cleveref}

\spnewtheorem{assumption}[theorem]{Assumption}{\bfseries}{\itshape}
\spnewtheorem{protocol}[theorem]{Protocol}{\bfseries}{\rmfamily}

\usepackage{mdframed}
\newmdenv[
  topline=true,bottomline=true,leftline=false,rightline=false,
  linewidth=0.8pt,
  innerleftmargin=0pt,innerrightmargin=0pt,
  innertopmargin=6pt,innerbottommargin=8pt,
  skipabove=12pt,skipbelow=12pt
]{protocolframe}

\crefname{assumption}{assumption}{assumptions}
\Crefname{assumption}{Assumption}{Assumptions}
\crefname{claim}{claim}{claims}
\Crefname{claim}{Claim}{Claims}
\crefname{conjecture}{conjecture}{conjectures}
\Crefname{conjecture}{Conjecture}{Conjectures}
\crefname{protocol}{protocol}{protocols}
\Crefname{protocol}{Protocol}{Protocols}
\crefname{question}{open question}{open questions}
\Crefname{question}{Open Question}{Open Questions}

\AtEndEnvironment{proof}{\qed}

\newcommand{\Zq}{\mathbb Z_q}
\newcommand{\bits}{\{0,1\}}
\newcommand{\ket}[1]{\lvert #1\rangle}

\newcommand{\poly}{\operatorname{poly}}
\newcommand{\polylog}{\operatorname{polylog}}
\newcommand{\negl}{\operatorname{negl}}
\renewcommand{\Im}{\operatorname{Im}}
\newcommand{\QNC}{\mathsf{QNC}^{0}}
\newcommand{\QAC}{\mathsf{QAC}^{0}}
\newcommand{\QNCf}{\mathsf{QNC}^{0}_{f}}
\newcommand{\QNCloglog}{\mathsf{QNC}^{0}[\log\log]}
\newcommand{\NC}{\mathsf{NC}^{1}}
\newcommand{\TC}{\mathsf{TC}^{0}}
\newcommand{\Gen}{\mathsf{Gen}}
\newcommand{\Dec}{\mathsf{Dec}}
\newcommand{\Rrc}{\mathsf{Rrc}}
\newcommand{\Adm}{\mathsf{Adm}}
\newcommand{\TrapGen}{\mathsf{TrapGen}}
\newcommand{\TrapInv}{\mathsf{TrapInv}}
\newcommand{\Eq}{\mathsf{Eq}}
\newcommand{\Img}{\mathsf{Im}}
\newcommand{\Adv}{\operatorname{Adv}}
\newcommand{\centerq}{\operatorname{center}_q}
\newcommand{\eps}{\varepsilon}
\newcommand{\bA}{\mathbf A}
\newcommand{\bR}{\mathbf R}
\newcommand{\ba}{\mathbf a}
\newcommand{\bu}{\mathbf u}
\newcommand{\bs}{\mathbf s}
\newcommand{\be}{\mathbf e}
\newcommand{\bg}{\mathbf g}
\newcommand{\bd}{\mathbf d}
\newcommand{\bx}{\mathbf x}
\newcommand{\by}{\mathbf y}
\newcommand{\bK}{\mathbf K}
\newcommand{\bM}{\mathbf M}
\newcommand{\bD}{\mathbf D}
\newcommand{\bp}{\mathbf p}
\newcommand{\br}{\mathbf r}
\newcommand{\bh}{\mathbf h}
\newcommand{\bv}{\mathbf v}
\newcommand{\bw}{\mathbf w}
\newcommand{\bz}{\mathbf z}
\newcommand{\boldeta}{\boldsymbol\eta}
\newcommand{\boldzeta}{\boldsymbol\zeta}

%% file: sections/02_encoded_model_assumptions.tex
\subsection{Proofs of quantumness}
\label{sec:model-assumptions}

We now formalize the proofs of quantumness discussed in
\Cref{sec:introduction}. We use the single-round formulation of
AGGM~\cite{ArabadjievaEtAl2024}, consisting of one challenge and one
response.

\begin{definition}[Single-round proof of quantumness, based on
\cite{BCMVV2021,ArabadjievaEtAl2024}]
\label{def:single-round-poq}
A \emph{single-round proof of quantumness} consists of a classical
generator $\Gen$, a classical verifier, and an honest quantum
prover, all parameterized by a security parameter $\lambda$.
The generator produces a pair
$(k,t)\leftarrow\Gen(1^\lambda)$, where
$k$ is the public challenge and $t$ is the private
verification data. The prover receives $k$ and returns one
classical message, after which the verifier, using $t$,
accepts or rejects.
\begin{enumerate}
 \item \textbf{Completeness.}  The protocol has completeness
  $c(\lambda)$ if the honest quantum prover is accepted with probability
  at least $c(\lambda)$.
 \item \textbf{Soundness.}  The protocol has classical soundness at most
  $s(\lambda)$ if every fixed uniform\footnote{Soundness is stated
  against uniform provers because the LK assumption below, like
  knowledge assumptions generally, is implausible against non-uniform
  adversaries with arbitrary auxiliary
  input~\cite{BitanskyEtAl2014}.} probabilistic polynomial-time
  classical prover is accepted with probability at most $s(\lambda)$ for
  all sufficiently large $\lambda$.
\end{enumerate}
\end{definition}

The honest prover's quantum circuit is generated classically from the public challenge; in
our case it admits two implementations of the same
computation: one over bounded-arity entangling gates, the other over
generalized Toffoli gates of unbounded fan-in.

\subsection{Learning with Errors}
\label{sec:encoded-prime-lwe}

In the Learning with Errors (LWE) problem~\cite{regev}, one is given a
matrix $\bA\leftarrow\Zq^{m\times n}$ and a vector
$\bA\bs+\be\bmod q$, where $\bs\in\Zq^{n}$ is secret and $\be$ is
a short error vector; the decisional variant asks one to distinguish this
pair from $(\bA,\bu)$ with $\bu$ uniform. The (subexponential) LWE assumption states that
no efficient (even subexponential-time) classical or quantum algorithm can solve either
of these two problems.

We do not use LWE as an assumption for our main result, instead relying on the 
stronger carry-predicate assumption. However, we use LWE to provide partial evidence 
for the carry-predicate assumption in \appref{app:encoded-ahcb-evidence}. There, the specific 
LWE parameters used are explained in more detail.

\subsection{Lattice trapdoors}

We use the standard LWE trapdoor generator of Micciancio and
Peikert~\cite{mp12}, in the form used by BCMVV~\cite{BCMVV2021}.

\begin{theorem}[Trapdoor generation, based on \cite{mp12}]
\label{thm:trapgen}
There are polynomial-time algorithms $(\TrapGen,\TrapInv)$ and a
constant $c_T>0$ with the
following properties. For $m\geq c_Tn\log q$,
\[
 (\bA,t_{\bA})\leftarrow\TrapGen(1^n,1^m,q)
\]
produces a matrix statistically close to uniform together with a trapdoor
for bounded-distance inversion. Except on a trapdoor-failure event, if
\[
 \by=\bA\bv+\be\pmod q,
 \qquad \bv\in\Zq^n,\quad \be\in\Zq^m,
 \qquad
 \|\centerq(\be)\|_2<
 r_n:=\frac{q}{C_T\sqrt{n\lceil\log_2 q\rceil}},
\]
then $\TrapInv(t_{\bA},\by)$ returns the unique pair
$(\bv,\be)\in\Zq^n\times\Zq^m$ satisfying this bound, for an absolute
constant $C_T>0$ determined by the generator.
\end{theorem}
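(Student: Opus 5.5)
This statement is a restatement of the Micciancio--Peikert trapdoor construction~\cite{mp12}, so the plan is to reduce it to their gadget-based generator and inversion algorithm, following the packaging used by BCMVV~\cite{BCMVV2021}.

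\textbf{Generation.} Set $k=\lceil\log_2 q\rceil$ and let $\bg=(1,2,\ldots,2^{k-1})$. Define the gadget matrix $\mathbf G=\bg^{\top}\otimes I_n\in\Zq^{nk\times n}$, written in the column convention matching $\by=\bA\bv+\be$. Sample a uniform $\bar\bA\in\Zq^{\bar m\times n}$ with $\bar m=m-nk$. Sample a short $\bR\in\{-1,0,1\}^{nk\times\bar m}$. Output
\[
 \bA=\begin{bmatrix}\bar\bA\\ \mathbf G-\bR\bar\bA\end{bmatrix},\qquad t_{\bA}=\bR.
\]
Statistical closeness to uniform follows from the leftover hash lemma applied to $\bR\bar\bA$. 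This holds once $\bar m\geq c\,n\log q$, which fixes the constant $c_T$.

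\textbf{Inversion.} Given $\by=\bA\bv+\be$, compute $[\bR\mid I]\by=\mathbf G\bv+[\bR\mid I]\be$. The effective error $\be'=[\bR\mid I]\be$ has $\|\be'\|_\infty\le(\|\bR\|+1)\|\be\|_2$.

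\textbf{Norm bound on $\bR$.} For random sign matrices, the operator norm satisfies $\|\bR\|=O(\sqrt{nk}+\sqrt{\bar m})$ except with negligible probability. This exceptional case is exactly the \emph{trapdoor-failure event}. Choose $m$ proportional to $n k$, or more precisely use the subgaussian bound $O(\sqrt{n k})$ as in~\cite{mp12}.

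\textbf{Decoding the gadget.} Outside the failure event, $\|\be'\|_\infty<q/4$ whenever $\|\be\|_2<q/(C_T\sqrt{nk})$. In that regime, decoding $\mathbf G$ coordinate-block by coordinate-block recovers $\bv$ exactly. This is the standard bit-by-bit decoding of $\bg^{\top}s+e$ with $|e|<q/4$, done from the least-significant bit upward; for non-power-of-two $q$, use the MP12 basis $\mathbf S_k$ decoding. Then set $\be=\by-\bA\bv$.

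\textbf{Uniqueness.} Suppose two pairs satisfy the bound. Their difference gives $\bA(\bv-\bv')=\be'-\be$, with $\|\be'-\be\|_2<2r_n$. The same decoding argument, applied to $\be'-\be$ with $C_T$ doubled, shows that $\mathbf G(\bv-\bv')$ decodes to $0$, hence $\bv=\bv'$. Equivalently, uniqueness follows because $\bA$ has a lattice minimum exceeding $2r_n$ except with negligible probability.

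\textbf{Main obstacle.} The key step is the constant bookkeeping. The failure event must be negligible, and $C_T$ must absorb the norm of $\bR$ with a $\sqrt{n\lceil\log_2 q\rceil}$ dependence rather than a worse one. I would take these constants directly from~\cite[Thm.~5.4]{mp12} rather than rederive them.
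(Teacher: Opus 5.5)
Your route matches the paper's. The paper does not prove this theorem; it imports the Micciancio--Peikert gadget generator and inverter in the form packaged by BCMVV. That is what you reconstruct: the leftover hash lemma for the public marginal, a large operator norm of $\bR$ as the trapdoor-failure event, multiplication by $[\bR\mid I]$ followed by gadget decoding, and uniqueness by decoding the difference of two solutions. Two details need tightening.

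First, for a random $\{-1,0,1\}$ matrix of size $nk\times\bar m$, $\|\bR\|=\Theta(\sqrt{nk}+\sqrt{\bar m})$, not $O(\sqrt{nk})$. The statement fixes $r_n$ independently of $m$, so for $m\gg n\log q$ your effective error can leave the decoding radius. The standard fix is to make only $\Theta(n\log q)$ rows gadget-structured and append uniform rows that $\TrapInv$ ignores. The restricted error still has norm below $r_n$, and uniqueness follows from those rows alone. In the paper $m=\Theta(nL)$, so this case never arises there.

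Second, for the paper's prime $q$, the per-coordinate condition $\|\be'\|_\infty<q/4$ does not suffice for gadget decoding. Already for $q=5$ with gadget $(1,2,4)$, the word $(1,1,4)$ arises from $s=1$ and from $s=0$ with errors in $\{-1,0,1\}^3$. The $\mathbf S_k$ nearest-plane decoder of MP12 instead needs an $\ell_2$ bound of order $q/(2\sqrt5)$ on each block. Your estimate $\|\be'\|_2\le(\|\bR\|+1)\|\be\|_2$ gives this directly once $C_T$ is large enough. With that constant, your difference argument also gives uniqueness.
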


We choose the constants in the generator so that its public marginal has
statistical distance at most $2^{-\Omega(n)}$ from uniform and its
trapdoor failure probability is at most $2^{-\Omega(n)}$. These explicit
rates are needed for comparison with the security bounds below, which are
subexponential in $n$.

\subsection{The knowledge-of-lattice-point assumption}
\label{sec:encoded-qlk}

The public input to the LK experiment is the complete parameter tuple
\[
 (q,\bK,n,m,
   Q,W),
\]
where $Q$ and $W$ are the input and noise domain bounds fixed in
\Cref{sec:encoded-parameters}; thus the modulus and domain bounds are available to both the adversary and
extractor. We suppress this metadata in the notation below and call
$\bK$ the lattice part of the instance. For
$\bK\in\mathbb Z_{q}^{m\times(n+1)}$, let
\[
 \mathcal L(\bK)=
 \{\bK\bv+q\bz:
   \bv\in\mathbb Z^{n+1},\
   \bz\in\mathbb Z^{m}\}.
\]
Write $\lambda_1(\mathcal L(\bK))$ and
$\lambda_1^{\infty}(\mathcal L(\bK))$ for the length of a shortest
nonzero lattice vector in the Euclidean and $\ell_\infty$ norms, respectively.
For a generated matrix $\bK$ outside the trapdoor-failure event, the
uniqueness statement in \Cref{thm:trapgen} implies
\begin{equation}
 \lambda_1(\mathcal L(\bK))\geq r_{n+1}.
 \label{eq:encoded-lattice-separation}
\end{equation}
Indeed, a nonzero lattice vector of Euclidean norm below
$r_{n+1}$ would
give two distinct bounded-error representations of its residue modulo
$q$. Consequently
$\lambda_1^{\infty}(\mathcal L(\bK))\geq
r_{n+1}/\sqrt{m}$, which is all we use. We use the online LK-$1/4$ relation
of AGGM~\cite{ArabadjievaEtAl2024}, a variant of the lattice knowledge
assumption of \cite{loftus}: when an algorithm outputs a point within
$\lambda_1^{\infty}(\mathcal L(\bK))/4$ of the lattice, an online extractor
must return the corresponding nearby lattice point.

\begin{assumption}[Subexponential LK-$1/4$, based on
\cite{loftus,ArabadjievaEtAl2024}]
\label{ass:encoded-lk}
There is a constant $c>0$ such that for every classical algorithm of
time at most
$2^{c\sqrt{n}}$ in the LK-$1/4$ experiment, there is an online
stateful
extractor with polynomial overhead and failure probability at most
$2^{-c\sqrt{n}}$, for the trapdoor-matrix ensemble of \Cref{sec:encoded-parameters}.
\end{assumption}

The assumption averages over the trapdoor key, the adversary's random tape,
and the extractor's independent random tape. It is a knowledge assumption
and is separate from LWE. Here ``online stateful extractor'' means a uniform,
straight-line classical algorithm given only the public tuple, the adversary
code and random tape, and its own independent coins; it receives no trapdoor.
Its code can therefore be run unchanged on a syntactically identical equation
key; this is used in the soundness proof of \Cref{thm:encoded-formal}.

\subsection{The AGGM counting argument}

Soundness of our protocol rests on the following counting argument, which
is the single-round argument of AGGM~\cite{ArabadjievaEtAl2024} stated
with explicit error terms.

\begin{lemma}[Counting argument, based on \cite{ArabadjievaEtAl2024}]
\label{lem:aggm-counting}
Consider a single-round proof of quantumness whose generator chooses
uniformly between two public-key modes, an \emph{image} mode and an
\emph{equation} mode, and fix a classical prover. Suppose that:
\begin{enumerate}
 \item the two key modes have distinguishing advantage at most
  $\eps_{\rm hid}$ for every procedure composed from the prover and the
  extraction procedure below;
 \item in image mode there is an extraction procedure, using only public
  data, that recovers a \emph{checked} preimage---one verified against
  the prover's output by a public validity check---whenever the verifier
  accepts, except with probability $\eps_{\rm ext}$; and
 \item in equation mode, the prover together with the extraction
  procedure produces both such a checked preimage and an
  accepting equation with probability at most
  $1/2+\eps_{\rm hc}$.
\end{enumerate}
Then the prover is accepted with probability at most
\[
 \frac34+\frac12(\eps_{\rm hid}+\eps_{\rm ext}+\eps_{\rm hc}).
\]
\end{lemma}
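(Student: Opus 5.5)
Let $p_I$ and $p_E$ denote the prover's acceptance probabilities in image mode and in equation mode. Since the generator picks each mode with probability $1/2$, the overall acceptance probability is $\frac12(p_I+p_E)$. So it suffices to show
\[
p_I+p_E\le \tfrac32+\eps_{\rm hid}+\eps_{\rm ext}+\eps_{\rm hc}.
\]
The plan is to introduce one auxiliary event $G$: the extraction procedure, run alongside the prover, outputs a preimage that passes the public validity check against the prover's output. Crucially, $G$ is defined using only public data, the prover's output, and the extractor's output. It is therefore computable by a procedure composed from the prover and the extraction procedure, in either key mode.

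\textbf{Image mode.} By hypothesis~2, acceptance implies $G$ except with probability $\eps_{\rm ext}$, so $p_I\le \Pr_I[G]+\eps_{\rm ext}$.

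\textbf{Transfer to equation mode.} The indicator of $G$ is computed by a procedure built from the prover and the extractor. Hypothesis~1 therefore gives $\Pr_I[G]\le \Pr_E[G]+\eps_{\rm hid}$. Here one must make sure the extractor's code runs unchanged on an equation-mode key. This is exactly the syntactic-identity remark made after \Cref{ass:encoded-lk}.

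\textbf{Equation mode.} Write $\mathrm{Acc}_E$ for the acceptance event in equation mode. Inclusion--exclusion gives
\[
\Pr_E[G]+p_E=\Pr_E[G\cup \mathrm{Acc}_E]+\Pr_E[G\cap \mathrm{Acc}_E]\le 1+\tfrac12+\eps_{\rm hc},
\]
where the intersection is bounded by hypothesis~3. Chaining the three bounds gives
\[
p_I+p_E\le \Pr_E[G]+\eps_{\rm hid}+\eps_{\rm ext}+p_E\le \tfrac32+\eps_{\rm hid}+\eps_{\rm ext}+\eps_{\rm hc},
\]
which yields the claimed $\frac34+\frac12(\eps_{\rm hid}+\eps_{\rm ext}+\eps_{\rm hc})$.

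The main obstacle is making the transfer step legitimate. The distinguisher must decide $G$ without the trapdoor, which is why a \emph{checked} preimage, verified by a public validity check, is required. I would state explicitly that the validity check, meaning recomputing the encoding from the extracted extended preimage and comparing with the prover's output, uses only public data. I would also state that the probability space (prover coins, extractor coins, key) is the same in both modes, so that hypothesis~1 applies directly to the indicator of $G$.
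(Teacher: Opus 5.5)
Your proposal is correct and matches the paper's proof: it uses the same event (extraction returns a checked preimage), transfers it from image mode to equation mode via hypothesis~1, and bounds $\Pr_E[G]+p_E\le 1+\Pr_E[G\cap\mathrm{Acc}_E]\le \tfrac32+\eps_{\rm hc}$ before averaging over the two modes. Your extra remarks about public computability of the check and the extractor running unchanged on equation keys concern the instantiation rather than the abstract lemma, but they are consistent with how the paper applies it.
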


\begin{proof}
Let $\mathcal Q$ be the event that the extraction returns a checked
preimage and let $\mathcal E$ be equation-mode acceptance, both under an
equation key. Since the extraction procedure
uses only public data, key-mode indistinguishability transfers
$\mathcal Q$ between the two key distributions, so image-mode acceptance is at
most $\Pr[\mathcal Q]+\eps_{\rm ext}+\eps_{\rm hid}$. Moreover,
\[
 \Pr[\mathcal Q]+\Pr[\mathcal E]\leq1+\Pr[\mathcal Q\cap\mathcal E]
 \leq\frac32+\eps_{\rm hc}.
\]
Since the verifier chooses the two modes uniformly, the prover's acceptance
probability is at most
\[
 \frac12\bigl(\Pr[\mathcal Q]+\eps_{\rm ext}+\eps_{\rm hid}\bigr)
 +\frac12\Pr[\mathcal E]
 \leq\frac34
 +\frac12(\eps_{\rm hid}+\eps_{\rm ext}+\eps_{\rm hc}),
\]
which is the claimed bound.
\end{proof}

In our instantiation, $\eps_{\rm hid}$ is negligible under the
carry-predicate assumption by
\Cref{lem:encoded-key-hiding}; hence key-mode indistinguishability
contributes only a negligible term to the soundness bound.

\subsection{Randomized encodings}
\label{sec:re-prelims}

A randomized encoding of a function $F$ replaces $F(\bv)$ by the output
of a randomized function $\widehat F(\bv;\rho)$ that reveals $F(\bv)$
but nothing else about $\bv$. We use the following properties.

\begin{definition}[Randomized encoding, based on
\cite{IshaiKushilevitz2000,AIK2006}]
\label{def:randomized-encoding}
Let $F:V\to Y$ be a function. A function
$\widehat F:V\times R\to\widehat Y$ is a \emph{perfect randomized
encoding of $F$ with randomness reconstruction} if the following
conditions hold.
\begin{enumerate}
 \item \textbf{Decoding.}  There is a function $\Dec:\widehat Y\to Y$
  such that
  $\Dec(\widehat F(\bv;\rho))=F(\bv)$ for every $\bv\in V$ and
  $\rho\in R$.
 \item \textbf{Perfect privacy.}  There is a randomized simulator
  $\mathsf{Sim}$ such that, for every $\bv\in V$, the distribution of
  $\mathsf{Sim}(F(\bv))$ equals the distribution of
  $\widehat F(\bv;\rho)$ for uniform $\rho\leftarrow R$.
 \item \textbf{Randomness reconstruction.}  There is a function
  $\Rrc:V\times\widehat Y\to R$ such that, for every $\bv\in V$ and
  every
  $\widehat\by$ in the support of $\widehat F(\bv;\cdot)$,
  $\rho=\Rrc(\bv,\widehat\by)$ is the unique element of $R$ with
  $\widehat F(\bv;\rho)=\widehat\by$.
\end{enumerate}
All three functions are required to be computable in classical polynomial
time.
\end{definition}

Randomized encodings were introduced by Ishai and
Kushilevitz~\cite{IshaiKushilevitz2000} and developed further, with
constant locality, by Applebaum, Ishai, and Kushilevitz~\cite{AIK2006};
randomness reconstruction is the additional property used
in~\cite{Gheorghiu2026}. For shallow implementations we also track
locality: every output symbol should depend on only a constant number of
source symbols, and no source symbol should occur in more than a constant
number of output formulas. These two properties are the encoding's
constant \emph{output locality} and constant \emph{input locality}.

%% file: sections/06_randomized_encoding.tex
\section{\texorpdfstring{The Encoded-LWE Sampling Problem}{The Encoded-LWE Sampling Problem}}
\label{sec:encoded-construction}

In this section, we give a construction with polynomial width and nearly
exponential assumed hardness. Its security is conditional on
\Cref{ass:encoded-lk,ass:carry-ahcb}. The latter is an
adaptive-hardcore-bit assumption for LWE in which the adversary may include
carry bits of the partial sums of the LWE equations in the parity it must
predict; it also implies the
key-mode indistinguishability needed for soundness. The randomized encoding
supplies
the remaining structural properties. The prime-modulus LWE assumption of
\appref{app:encoded-ahcb-evidence} yields the ordinary
adaptive-hardcore-bit property of the decoded maps
$(b,\bx,\boldeta)\mapsto\bA\bx+b\bu+\boldeta$, used as partial evidence for
\Cref{ass:carry-ahcb}; it is not needed for the main theorem.

We fix parameters in \Cref{sec:encoded-parameters}, present the randomized
encoding and its properties in \Cref{sec:qary-encoding}, and state the
carry-predicate assumption in \Cref{sec:encoded-ahcb}. We
then present the protocol in \Cref{sec:encoded-protocol} and prove its
completeness, soundness, and circuit bounds in \Cref{sec:encoded-security}.

\subsection{Parameters}
\label{sec:encoded-parameters}

Let $\lambda$ be the security parameter; all parameters are functions
of it. Set
\begin{equation}
 \begin{aligned}
  L&=4\left\lceil(\log_2(\lambda+2))^2\right\rceil,&
  n&=2\bigl\lceil c_1(\lambda^2+\lambda+L)/2\bigr\rceil,\\
  m&=\lceil c_2(n+1)L\rceil,
 \end{aligned}
 \label{eq:encoded-dimensions}
\end{equation}
where $c_1,c_2$ are sufficiently large for the BCMVV and trapdoor
theorems. Sample a prime
\begin{equation}
 3\cdot2^{L-2}\leq q<2^{L}.
 \label{eq:encoded-prime-modulus}
\end{equation}
The lower end $3\cdot2^{L-2}$, rather than $2^{L-1}$, guarantees that
every nonzero mask parity $\bd\cdot J_{q}(\cdot)$ takes each value on at
least $2^{L-2}\geq q/4$ canonical residues: every such parity is balanced
on the full $L$-bit cube, and the residues omit fewer than $2^{L-2}$ of
its strings. The admissibility analysis of
\appref{app:encoded-good-directions} uses this.
Concretely, the setup tests $c_3\lambda L$ independent uniform odd
candidates in the range of \Cref{eq:encoded-prime-modulus}, for a sufficiently large constant $c_3$,
and, if none is prime, returns a fixed challenge on which the verifier
rejects every response. Prime
density bounds make this abort probability $2^{-\Omega(\lambda)}$, while
keeping strict polynomial worst-case generation time.

We assume LK-$1/4$ and the carry-predicate
assumption are secure for this parameter ensemble against adversaries
of time $2^{c\sqrt{n}}$, with failure or advantage at most
$2^{-c\sqrt{n}}$, for a fixed constant $c>0$; both assumptions
are thus subexponential in the
binary-secret dimension $n$. The LWE
assumption used only for the partial-evidence results is stated in
\appref{app:encoded-ahcb-evidence}. Since
$n=\Theta(\lambda^2)$, the assumed security level is
$2^{\Theta(\lambda)}$ in the security parameter. The
modulus
$q=2^{\Theta(\log^2\lambda)}$ is quasipolynomial in the security
parameter, while its bit length remains polylogarithmic. Together with the
two superpolynomial noise gaps verified in \appref{app:encoded-parameters}, its primality places the
parameters in the BCMVV regime. A uniform
$\mathbb Z_{q}$ register has the exact shallow preparation proved
in \Cref{lem:prime-uniform-preparation} (\Cref{sec:encoded-security}).

Let $C_T$ be the constant of \Cref{thm:trapgen}, and define
\[
 r_G=
 \frac{q}
 {C_T\sqrt{(n+1)L}},\qquad
 P=\frac{r_G}{2m}.
\]
Since $L=\lceil\log_2 q\rceil$, the radius $r_G$ equals the inversion
radius $r_{n+1}$ of \Cref{thm:trapgen} at dimension $n+1$; in
particular $r_G<r_n$. Further define
\begin{equation}
 j_W=\max\!\left(\{1\}\cup
 \left\{j\in\mathbb Z_{\geq1}:
 4m^2C_T^2(n+1)L\,2^{2j}
 \leq q^2\right\}\right),
 \qquad W=2^{j_W}.
 \label{eq:encoded-width-definition}
\end{equation}
This defines a positive integer for every $\lambda$. For all sufficiently
large $\lambda$, the second set in \Cref{eq:encoded-width-definition}
is nonempty, and maximality gives
$P/2<W\leq P$. Set
\[
 Q=2^{L/2},\qquad
 B_V=2^{L/4}.
\]
Let
\begin{align*}
 E
  &=\{z-W/2\bmod q:
          z=0,\ldots,W-1\},\\
 X
  &=\{0,\ldots,Q-1\}^{n},\\
 X_0
  &=\{1,\ldots,Q-1\}^{n},\qquad
 X_1
  =\{0,\ldots,Q-2\}^{n}.
\end{align*}
Thus $E$ is a centered interval of power-of-two size, but it
is not a contiguous interval in the canonical binary ordering. Its uniform
state is prepared exactly by applying Hadamards to an unsigned index
$z\in\{0,\ldots,W-1\}$ and reversibly replacing it by
$\eta=z-W/2\pmod{q}$. This public translation has
bounded-fan-in depth $O(\log L)$ and constant $\QAC$ depth.

The calculation in \appref{app:encoded-parameters}\iftoggle{full}{ (\Cref{lem:parameter-estimates})}{} gives
\begin{equation}
 n=\Theta(\lambda^2),\qquad
 m=\Theta(\lambda^2L),\qquad
 \frac{mB_V}{W}
   +\frac{2n}{Q}
 =\negl(\lambda).
 \label{eq:encoded-scales}
\end{equation}

For the key error we use exactly the BCMVV truncated discrete Gaussian
$D_{\mathbb Z_{q},B_V}$~\cite{BCMVV2021}: its
centered weights are
proportional to $\exp(-\pi z^2/B_V^2)$ on
$|z|\leq B_V$, and zero outside this interval. The classical
key generator samples this finite distribution to statistical
distance at most $2^{-\Omega(n)}$ in polynomial time, using a
sampler whose output always lies in the support $|z|\leq B_V$; this
sampling error is included in the negligible
key-generation error below.
In equation mode, sample
\[
 (\bA,t_\bA)\leftarrow\TrapGen(1^{n},
       1^{m},q),\qquad
 \bs\leftarrow\bits^{n},\qquad
 \be\leftarrow
 D_{\mathbb Z_{q},B_V}^{m},\qquad
 \bu=\bA\bs+\be,
\]
so every coordinate of $\be$ has centered magnitude at most
$B_V$. The generator additionally samples \emph{private shifts}
$\varpi_{i,\tau}\leftarrow\Zq\setminus\{0\}$ independently for
$i\in[m]$ and $\tau\in[N_{\rm sh}]$, where
$N_{\rm sh}=\lceil\sqrt{n}\rceil$; they are used only by the
admissibility predicate of \Cref{def:admissible} and are retained in the
private data $t=(t_\bA,\bs,\be,\boldsymbol\varpi)$. The decoded map, computable from the public key, is
\[
 (b,\bx,\boldeta)\longmapsto
 \by=\bA\bx+b\bu+\boldeta\pmod{q},
 \qquad
 \boldeta\in E^{m}.
\]
In image mode,
$(\bK,t_\bK)\leftarrow\TrapGen(1^{n+1},
1^{m},q)$, and the
$m\times(n+1)$ matrix is parsed as
$\bK=[\bA\mid\bu]$; the decoded map has the same form. In either mode,
the complete public key is
\[
 k=(q,n,m,
       Q,W,\bA,\bu).
\]
The required indistinguishability of these two key modes follows from the
carry-predicate assumption (\Cref{lem:encoded-key-hiding}).

For an equation key, trapdoor inversion gives the two original preimages
related by $\bx_1=\bx_0-\bs$. We retain only the truncated domains
$X_0$ and $X_1$; this loses at most
$2n/Q$ honest probability. Shifting the function
noise by $\be$ loses at most
$mB_V/W$. In image mode,
$W\leq r_G/(2m)$ places every valid output
inside the trapdoor and LK-$1/4$ decoding radii, as verified explicitly in
\Cref{lem:encoded-image-extraction}.

\subsection{The encoding}
\label{sec:qary-encoding}

Here, we construct a $\mathbb Z_q$ analog of the encoding for binary
linear maps used in recent work on constant-depth pseudoentanglement
\cite{Gheorghiu2026}.

Let $\bM\in\mathbb Z_{q}^{m\times(n+1)}$. For
$\bv\in\mathbb Z_{q}^{n+1}$, sample independent uniform masks
\[
 \br=(\br_{j,i})_{j\in[n+1],\,i\in[m]},
 \qquad
 \bh=(\bh_{i,j})_{i\in[m],\,j\in[n]},
\]
and set $\bh_{i,0}=\bh_{i,n+1}=0$. Define
$\mathsf{RE}_{\bM}(\bv;\br,\bh)=(\bw,\bz)$ by
\begin{align}
 \bw_{j,0}&=\bv_j+\br_{j,1},&
 \bw_{j,i}&=\br_{j,i}-\br_{j,i+1}\quad(i\in[m-1]),
 \label{eq:re-w}\\
 \bz_{i,j}&=\bh_{i,j-1}-\bh_{i,j}+\bM_{i,j}\br_{j,i}\quad(j\in[n+1]).
 \label{eq:re-z}
\end{align}
All operations are modulo $q$. The $\bh$-masks are called \emph{path
masks}: the symbols $\bz_{i,1},\ldots,\bz_{i,n+1}$ of row $i$ form a
path, and $\bh_{i,j}$ sits on the edge between the $j$-th and
$(j+1)$-st symbols, subtracted from the first and added to the second,
so that summing along the path cancels every mask and leaves the row
sum.

\begin{lemma}[Properties of the linear-map encoding]
\label{lem:re-properties}
The map in \Cref{eq:re-w,eq:re-z} is a perfect randomized encoding of
$\bv\mapsto \bM\bv$ with randomness reconstruction, in the sense of
\Cref{def:randomized-encoding}. Moreover,
each output symbol depends on at most three source symbols, and every
source symbol occurs in only a constant number of output formulas.
\end{lemma}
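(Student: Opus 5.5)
We verify the four requirements of \Cref{def:randomized-encoding} separately, plus locality; all of them follow from the triangular structure of \Cref{eq:re-w,eq:re-z}.

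\textbf{Decoding.} Define $\Dec(\bw,\bz)$ row by row. First form the prefix differences $\widetilde\bv_{j,i}=\bw_{j,0}-\sum_{i'<i}\bw_{j,i'}$. By telescoping in \Cref{eq:re-w}, $\widetilde\bv_{j,i}=\bv_j+\br_{j,i}$. Then output $\by_i=\sum_{j}\bM_{i,j}\widetilde\bv_{j,i}-\sum_{j=1}^{n+1}\bz_{i,j}$. Since $\bh_{i,0}=\bh_{i,n+1}=0$, the $\bh$-terms in $\sum_j\bz_{i,j}$ cancel in adjacent pairs. This leaves $\sum_j\bM_{i,j}\br_{j,i}$, so $\by_i=(\bM\bv)_i$ exactly, for every choice of masks. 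Everything is polynomial-time arithmetic mod $q$.

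\textbf{Randomness reconstruction.} Given $\bv$ and $(\bw,\bz)$, solve the triangular system directly. Set $\br_{j,1}=\bw_{j,0}-\bv_j$ and $\br_{j,i+1}=\br_{j,i}-\bw_{j,i}$. Then set $\bh_{i,j}=\bh_{i,j-1}+\bM_{i,j}\br_{j,i}-\bz_{i,j}$ for $j=1,\ldots,n$. Each step is forced, which gives uniqueness. If $(\bw,\bz)$ lies in the support of $\mathsf{RE}_\bM(\bv;\cdot)$, the remaining equation for $\bz_{i,n+1}$ holds automatically, so the recovered masks do reproduce the output. Hence $\Rrc$ is well defined and is the unique preimage.

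\textbf{Perfect privacy.} This is the step needing the most care, though it is still a counting argument. For fixed $\bv$, the map $(\br,\bh)\mapsto(\bw,\bz)$ is injective by the reconstruction argument. Its domain has size $q^{(n+1)m+mn}$. Its image lies in the set $S_{\by}$ of strings $(\bw,\bz)$ that decode to $\by=\bM\bv$. In $S_\by$, $\bw$ is free ($(n+1)m$ symbols), and in each row $\bz_{i,1},\ldots,\bz_{i,n}$ are free while $\bz_{i,n+1}$ is fixed by the decoding constraint. So $|S_\by|=q^{(n+1)m+mn}$. An injection between sets of equal size is a bijection, so the encoding is uniform on $S_{\bM\bv}$. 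The simulator $\mathsf{Sim}(\by)$ therefore samples $\bw$ and $\bz_{i,j}$ ($j\le n$) uniformly and solves for $\bz_{i,n+1}$. Its output distribution depends on $\bv$ only through $\bM\bv$, as required.

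\textbf{Locality.} Read it off \Cref{eq:re-w,eq:re-z}. Each $\bw$-symbol involves at most two sources ($\bv_j,\br_{j,1}$ or two $\br$'s). Each $\bz_{i,j}$ involves $\bh_{i,j-1},\bh_{i,j},\br_{j,i}$, with $\bM$ public. Conversely, $\bv_j$ occurs once, $\br_{j,i}$ in at most two $\bw$-symbols and one $\bz$-symbol, and $\bh_{i,j}$ in two $\bz$-symbols.
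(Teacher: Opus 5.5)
Your proposal is correct and follows essentially the paper's proof: the same prefix-difference decoder with telescoping path masks, the same forced triangular formulas for reconstructing $\br$ and then $\bh$ (your remark that the $\bz_{i,n+1}$ equation holds automatically on the support matches the paper's final check $\bh_{i,n+1}=0$), and the same locality count read off from the formulas. The only difference is in privacy. You show injectivity plus equal cardinality onto the decoding fiber. The paper instead notes that $\br\mapsto\bw$ is a bijection and that each row of $\bz$ is conditionally uniform subject to the one decoding constraint. These are the same fact, and your counting version justifies it more explicitly.
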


\begin{proof}
From $\bw$, compute
\[
 \widetilde{\bv}_{j,i}(\bw)=\bw_{j,0}-\sum_{i'=1}^{i-1}\bw_{j,i'}
           =\bv_j+\br_{j,i}.
\]
The path masks telescope, so
\[
 \sum_{j=1}^{n+1}\bz_{i,j}
 =\sum_{j=1}^{n+1}\bM_{i,j}\br_{j,i}.
\]
Consequently the decoder returns
\[
 \sum_j \bM_{i,j}\widetilde{\bv}_{j,i}(\bw)-\sum_j\bz_{i,j}=(\bM\bv)_i.
\]

For fixed $\bv$, the map $\br\mapsto \bw$ is a bijection. Conditioned on
$\bw$, each row of $\bz$ is uniform subject to the single constraint imposed
by the decoded value. This gives a perfect simulator from $\bM\bv$, as in the
path-mask argument above.

Given $\bv,\bw$, reconstruct
\[
 \br_{j,1}=\bw_{j,0}-\bv_j,\qquad
 \br_{j,i+1}=\br_{j,i}-\bw_{j,i}.
\]
Given $\br,\bz$, reconstruct the path masks from
\[
 \bh_{i,j}=\bh_{i,j-1}+\bM_{i,j}\br_{j,i}-\bz_{i,j}.
\]
The last coordinate checks $\bh_{i,n+1}=0$. This proves randomness
reconstruction. Locality follows directly from
\Cref{eq:re-w,eq:re-z}. For occurrence, $\bv_j$ occurs once, each
$\br_{j,i}$ occurs in at most two $\bw$-formulas and one $\bz$-formula,
and each $\bh_{i,j}$ occurs twice.
\end{proof}

Suppressing the metadata $(q,n,m,Q,W)$ of $k$ in the notation, put
$\bM_k=[\bA\mid \bu]$ and
$\bv=(\bx,b)$. To add function noise
$\boldeta\in E^{m}$, replace
\begin{equation}
 \bz_{i,1}\quad\text{by}\quad \bz_{i,1}-\boldeta_i.
 \label{eq:encoded-noise}
\end{equation}
The decoder then returns $\bM_k\bv+\boldeta=\bA\bx+b\bu+\boldeta$.
Each noise symbol $\boldeta_i$ occurs in exactly one output symbol, so
the locality and occurrence bounds of \Cref{lem:re-properties} persist for
the noisy encoding. We write
\[
 \widehat F_k(b,\bx,\boldeta;\br,\bh)
\]
for the complete encoded output, and parse such an output as
$\widehat{\by}=(\bw,\widehat{\bz})$, where
$\widehat\bz_{i,1}=\bz_{i,1}-\boldeta_i$ and the other coordinates are
unchanged. Given $b,\bx$ and $\widehat\by$, first decode $\by$ and
compute
$\boldeta=\by-\bA\bx-b\bu$. Before invoking the reconstruction algorithm
of \Cref{lem:re-properties}, restore the noiseless symbol by setting
\begin{equation}
 \bz_{i,1}=\widehat\bz_{i,1}+\boldeta_i
 \quad\text{for every }i.
 \label{eq:restore-encoded-noise}
\end{equation}
Randomness reconstruction applied to $(\bw,\bz)$ then recovers the unique
$(\br,\bh)$, including the final-coordinate consistency checks. Thus the
complete coins $(\boldeta,\br,\bh)$ are efficiently recoverable; applying
the noiseless reconstruction directly to
$(\bw,\widehat\bz)$ would be incorrect.

\begin{lemma}[Privacy of the noisy encoding]
\label{lem:noisy-re-privacy}
For uniform independent $\br,\bh$, the distribution
\[
 \widehat F_k(b,\bx,\boldeta;\br,\bh)
\]
depends on
$(b,\bx,\boldeta)$ only through the decoded value
$\bA\bx+b\bu+\boldeta$. Moreover, for a fixed original input and a fixed
encoded output in its support, the encoding coins are unique.
\end{lemma}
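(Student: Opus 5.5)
The plan is to reduce the statement to \Cref{lem:re-properties} through the invertible, input-dependent relabeling that \Cref{eq:encoded-noise} introduces. Fix $(b,\bx,\boldeta)$, put $\bv=(\bx,b)$ and $\bM=\bM_k$, and write $\by=\bM\bv+\boldeta$ for the decoded value. Define the map $\Phi_{\boldeta}:(\bw,\bz)\mapsto(\bw,\widehat\bz)$ that subtracts $\boldeta_i$ from $\bz_{i,1}$ for every $i$ and leaves all other coordinates unchanged. This map is a bijection of the output space, with inverse given by \Cref{eq:restore-encoded-noise}. By construction,
\[
 \widehat F_k(b,\bx,\boldeta;\br,\bh)=\Phi_{\boldeta}\bigl(\mathsf{RE}_{\bM}(\bv;\br,\bh)\bigr).
\]

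\textbf{Privacy.} I would not transport the simulator from \Cref{lem:re-properties}, because $\Phi_{\boldeta}$ depends on $\boldeta$ and not only on $\by$. Instead I would describe the output distribution directly, as in that lemma's proof.

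For fixed $\bv$, the map $\br\mapsto\bw$ is a bijection, so $\bw$ is uniform on $\Zq^{(n+1)m}$ and independent of $\bh$. The relevant quantity is $\sum_j\bM_{i,j}\widetilde\bv_{j,i}(\bw)-\sum_j\widehat\bz_{i,j}$, which equals $(\bM\bv)_i+\boldeta_i=\by_i$. Hence, conditioned on $\bw$ (equivalently on $\br$), each row $\widehat\bz_{i,\cdot}$ must satisfy the single linear constraint
\[
 \sum_j\widehat\bz_{i,j}=\sum_j\bM_{i,j}\widetilde\bv_{j,i}(\bw)-\by_i .
\]

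I would then show that each row is uniform on this affine hyperplane. The map $(\bh_{i,1},\ldots,\bh_{i,n})\mapsto(\widehat\bz_{i,1},\ldots,\widehat\bz_{i,n})$ is an affine bijection: it is triangular, with each $\bh_{i,j}$ first entering $\widehat\bz_{i,j}$ with coefficient $-1$. The last coordinate $\widehat\bz_{i,n+1}$ is then forced by the constraint. Rows are independent because the $\bh$-masks of different rows are disjoint.

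The resulting joint law is: $\bw$ uniform, and each row uniform on a hyperplane determined by $\bw$ and $\by_i$. It depends on $(b,\bx,\boldeta)$ only through $\by$. This also yields the simulator $\mathsf{Sim}(\by)$: sample $\bw$ uniformly, sample the first $n$ entries of each row uniformly, and solve for the last entry.

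\textbf{Uniqueness of coins.} Fix the original input $(b,\bx)$ and an output $\widehat\by$ in the support. Decoding is deterministic, so $\Dec(\widehat\by)=\by$ fixes $\boldeta=\by-\bA\bx-b\bu$ uniquely. It follows that any two coin tuples producing $\widehat\by$ share the same $\boldeta$. Applying $\Phi_{\boldeta}^{-1}$ gives a fixed noiseless output $(\bw,\bz)$ in the support of $\mathsf{RE}_{\bM}(\bv;\cdot)$. Randomness reconstruction in \Cref{lem:re-properties} then makes $(\br,\bh)$ unique.

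\textbf{Main obstacle.} The only delicate point is that $\Phi_{\boldeta}$ is not determined by $\by$ alone. For that reason I would redo the hyperplane-uniformity argument with $\by_i$ in place of $(\bM\bv)_i$, rather than cite the noiseless simulator as a black box. The triangular-bijection check for the $\bh$-masks is the step I would write out carefully.
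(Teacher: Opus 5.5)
Your proposal is correct and follows the paper's proof. For privacy, the paper uses the $\br\mapsto\bw$ bijection and shows that each row of $\widehat\bz$ is uniform on the hyperplane fixed by $\bw$ and $\by_i$; for uniqueness, it decodes to get $\boldeta$, restores $\bz_{i,1}=\widehat\bz_{i,1}+\boldeta_i$, and applies the reconstruction of \Cref{lem:re-properties}. You follow the same steps, and you also write out the triangular bijection on the path masks that the paper leaves implicit.
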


\begin{proof}
Fix $\bv=(\bx,b)$. As in \Cref{lem:re-properties}, the map
$\br\mapsto\bw$ is a bijection. Conditioned on $\bw$, every row of
$\widehat\bz$ is uniform subject to the single constraint that its decoder
equals $(\bM_k\bv+\boldeta)_i$: subtracting $\boldeta_i$ from one path
symbol changes only that constraint. This gives the same perfect simulator
from the decoded value for every $(\bv,\boldeta)$ in its fiber.
Uniqueness follows by computing $\boldeta$, restoring
\Cref{eq:restore-encoded-noise}, and using the reconstruction formulas in
\Cref{lem:re-properties}.
\end{proof}

After writing the encoded output, the honest prover clears the
function-noise register through the row-local operation
\begin{equation}
 \boldeta_i\longleftarrow
 \boldeta_i+\widehat{\bz}_{i,1}+\bh_{i,1}
       -\bM_{i,1}\br_{1,i}\pmod{q}.
 \label{eq:eta-erasure}
\end{equation}
\Cref{eq:encoded-noise} makes the new value identically zero on
every honest basis state.

\subsection{The carry-predicate assumption}
\label{sec:encoded-ahcb}

We first fix the validity relation used by both the verifier and the
security analysis, then define carry predicates and admissibility and state
the assumption. We then decompose the verifier's parity into the form the
assumption covers and derive key-mode indistinguishability.

\begin{definition}[Extended preimages and the validity relation]
\label{def:validity-relation}
Let
\[
 \boldzeta=(J_{q}(\bx),
          J_{q}(\br),J_{q}(\bh))
 \in\bits^{N_{\rm ext}},
 \qquad
 N_{\rm ext}=L\bigl(n
   +m(2n+1)\bigr),
\]
denote the canonical binary encoding of the registers other than the branch
bit; a \emph{complete extended preimage} is a pair $(b,\boldzeta)$.
On input $(k,\widehat\by,b,\bx)$, let
$\mathsf{Rec}_k$ decode $\widehat\by$ to $\by$, form the centered
residual
\[
 \boldeta=\by-\bA\bx-b\bu\pmod{q},
\]
reject unless $\boldeta\in E^{m}$, restore the
symbols in \Cref{eq:restore-encoded-noise}, and run randomness
reconstruction. It returns $(\boldeta,\br,\bh)$ only if all reconstruction
checks pass and recomputing
$\widehat F_k(b,\bx,\boldeta;\br,\bh)$ gives exactly
$\widehat\by$. Define
\begin{equation}
 \widehat R_k(\widehat\by;b,\boldzeta)=1
 \label{eq:encoded-valid-relation}
\end{equation}
if and only if $\boldzeta$ parses canonically as
$(J_{q}(\bx),J_{q}(\br),
J_{q}(\bh))$, $\bx\in X_b$, and
$\mathsf{Rec}_k(\widehat\by,b,\bx)=(\boldeta,\br,\bh)$ succeeds.
\end{definition}

\begin{definition}[Paired images]
\label{def:paired-images}
For an equation key with private data
$t=(t_\bA,\bs,\be,\boldsymbol\varpi)$, define
$\mathsf{ExtInv}_t(\widehat\by)$ as follows. Decode to $\by$, apply
$\TrapInv(t_\bA,\by)$ and
$\TrapInv(t_\bA,\by-\bu)$, and obtain candidates
$(\bx_0,\boldeta_0)$ and $(\bx_1,\boldeta_1)$. Check
\[
 \bx_1=\bx_0-\bs,\qquad
 \boldeta_0=\boldeta_1+\be,
\]
$\bx_0\in X_0$, $\bx_1\in X_1$, and the two instances of
\Cref{eq:encoded-valid-relation}; use $\mathsf{Rec}_k$ to fill in the
encoding coins. On success, $\mathsf{ExtInv}_t$ returns
$(\boldzeta_0,\boldzeta_1)$. A \emph{valid paired image} is an encoded
output on which this algorithm succeeds.
\end{definition}

For an equation key, the trapdoor recovers both extended preimages of an
honest encoded output; the algorithm in \Cref{def:paired-images} makes this
explicit. Honest encoded outputs are valid paired images except on the
boundary, shifted-noise-overlap, and trapdoor-failure events quantified
below.

The verifier's equation test checks a parity of
$\boldzeta_0\oplus\boldzeta_1$ along a direction produced by the prover's
final Hadamard measurement. These parities have a useful structure.
Because the secret is binary, the $\bx$- and $\br$-registers of the two
extended preimages differ by shifts in $\{0,\pm1\}$, so their
contribution to any parity is a linear function of $\bs$ with
coefficients computable from the transcript and either extended
preimage. The $\bh$-registers differ
by partial sums of the LWE equations, and their contribution consists of
binary carry bits. The assumption below states that parities of this form
are unpredictable. It is stated directly in terms of the public key, with
no reference to the encoding.

\begin{definition}[Carry predicates]
\label{def:carry-predicates}
Fix a public key $k$. For $\bs'\in\bits^{n}$,
$i\in[m]$, and $j\in[n]$, put
\begin{equation}
 \sigma_{i,j}(\bs')=u_i-\sum_{j'=j+1}^{n}\bA_{i,j'}s'_{j'}
 \pmod{q}.
 \label{eq:partial-sums}
\end{equation}
A \emph{carry predicate} is a tuple $p=(i,j,\xi,\bd)$ with
$i\in[m]$, $j\in[n-1]$,
$\xi\in\mathbb Z_{q}$ (the \emph{base point}), and
$\mathbf 0\neq\bd\in\bits^{L}$. A \emph{predicate list}
$\mathcal P$ is a finite set of carry predicates with pairwise distinct
index pairs $(i,j)$. For a sign $\chi\in\{\pm1\}$, define
$\Phi^{\chi}_{\mathcal P}:\bits^{n}\to\bits$ by
\begin{equation}
 \Phi^{\chi}_{\mathcal P}(\bs')
 =\bigoplus_{(i,j,\xi,\bd)\in\mathcal P}
 \bd\cdot\bigl(J_{q}(\xi)\oplus
 J_{q}\bigl(\xi+\chi\,\sigma_{i,j}(\bs')\bigr)\bigr).
 \label{eq:carry-parity}
\end{equation}
\end{definition}

The map $\sigma_{i,j}$ of \Cref{eq:partial-sums} is defined from the public key alone. At
$\bs'=\bs$ it equals the partial sum
$e_i+\sum_{j'\leq j}\bA_{i,j'}s_{j'}$ of the $i$-th LWE equation,
while $\sigma_{i,n}(\bs')=u_i$ is public, which is why the
terminal index $j=n$ is excluded from predicate lists. The string
$J_{q}(\xi)\oplus J_{q}(\xi+\omega)$ consists of the bit positions that
flip when $\omega$ is added to $\xi$ modulo $q$; away from the modular wrap
it equals $J_{q}(\omega)$ XORed with the vector of carry bits of the
addition, so $\Phi^{\chi}_{\mathcal P}(\bs)$ is an XOR of selected bits of
the partial sums and of their carries, evaluated at base points of the
adversary's choice. The function noise appears nowhere: the error $\be$
enters only through the public $\bu$.
In particular, the adversary can cancel the key error exactly from
the secret-dependent argument of an individual predicate: choosing
$\xi=-\chi u_i$ gives
$\xi+\chi\sigma_{i,j}(\bs)=-\chi\sum_{j'>j}\bA_{i,j'}s_{j'}$.
Thus neither the presence of $e_i$ nor a base point chosen after seeing the
public key is, by itself, a source of unpredictability.

The assumption restricts the adversary's choices through an admissibility
condition. For an integer $\nu$, call a string $\mathbf c\in\bits^\nu$
\emph{balanced} if
\[
 \frac{\nu}{3}\leq\operatorname{wt}(\mathbf c)\leq\frac{2\nu}{3},
\]
where $\operatorname{wt}(\mathbf c)$ denotes the Hamming weight of
$\mathbf c$. Let $\mathcal J=\{3L,\ldots,n-3L\}$, which is nonempty for all
sufficiently large parameters. Columns within $3L$ of either end are not
used to certify row sensitivity: given the public key, the partial sum
$\sigma_{i,j}(\bs)$ is $e_i$ plus at most $3L$ secret-dependent terms for
$j<3L$ and $u_i$ minus at most $3L$ such terms for $j>n-3L$, so predicates
placed there depend on few secret coordinates. The proof of
\Cref{prop:encoded-good-density} uses only that $\mathcal J$ contains all
but $O(L)$ of the columns, and $3L$ is a convenient choice. Let
\[
 \mathcal R=\{\,\pm2^a\bmod q:
             0\leq a\leq L/4\,\}.
\]
We split a bit vector indexed by $[n]$ into two \emph{halves}, its
first and last $n/2$ coordinates, and a bit vector indexed by
$[m]$ into halves after the first
$\lfloor m/2\rfloor$ coordinates.

\begin{definition}[Admissible pairs]
\label{def:admissible}
Fix an equation key $k$ with private data $t=(t_\bA,\bs,\be,\boldsymbol\varpi)$, a sign
$\chi$, a vector $\ba\in\bits^{n}$, and a predicate list
$\mathcal P$. Put
$\Psi(\bs')=\langle\ba,\bs'\rangle\oplus\Phi^{\chi}_{\mathcal P}(\bs')$ and
$\bs^{(j)}=\bs\oplus\boldsymbol\delta_j$, and define
\begin{align}
 S_j&=\Psi(\bs)\oplus \Psi(\bs^{(j)})
 \qquad(j\in[n]),
 \label{eq:adm-secret-profile}\\
 \delta_{i,j}(\rho)&=
 \begin{cases}
  \bd\cdot\bigl(J_{q}(\xi+\chi\sigma_{i,j}(\bs))\oplus
               J_{q}(\xi+\chi\sigma_{i,j}(\bs)+\chi\rho)\bigr)
   &\text{if }(i,j,\xi,\bd)\in\mathcal P,\\
  0&\text{otherwise,}
 \end{cases}
 \label{eq:adm-error-profile}
\end{align}
for $i\in[m]$, $j\in[n-1]$, and $\rho\in\Zq\setminus\{0\}$: the bit
$\delta_{i,j}(\rho)$ is the change in the predicate at $(i,j)$ when the
$i$-th key coordinate $u_i$, equivalently the error coordinate $e_i$, is
shifted by $\rho$. For a nonempty interval
$I\subseteq[n-1]$, a set of consecutive columns, put
$\Delta^{I}_{\rho,i}=\bigoplus_{j\in I}\delta_{i,j}(\rho)$.
For a window $V\in\{[n-1],\mathcal J\}$, the carry phase of row $i$ and
its secret gradient are
\[
 \phi^{V}_i(\bs')=\bigoplus_{\substack{(i,j,\xi,\bd)\in\mathcal P\\ j\in V}}
 \bd\cdot\bigl(J_{q}(\xi)\oplus J_{q}(\xi+\chi\sigma_{i,j}(\bs'))\bigr),
 \qquad
 \Gamma^{V}_{i,k}=\phi^{V}_i(\bs)\oplus\phi^{V}_i(\bs^{(k)})
 \quad(k\in[n]);
\]
$\phi^{V}_i$ contains the carry part only, not the linear term
$\langle\ba,\cdot\rangle$. For a secret half $h\in\{1,2\}$, let
$E^{V}_{i,h}=\{k\text{ in half }h:\ k-1\in V,\ \bA_{i,k}\neq0\}$.
The public key is \emph{regular} if $|E^{V}_{i,h}|\geq\lfloor n/4\rfloor$
for every $i\in[m]$, $V$, and $h$, and every column of $\bA$ has at most
$m/8$ zero entries; regularity depends only on $\bA$.
The bit $S_j$ is the directional derivative of $\Psi$ at $\bs$ in the
direction $\boldsymbol\delta_j$.
The pair $(\ba,\mathcal P)$ satisfies the \emph{sensitivity condition}
for $\chi$ if the key is regular and the following four conditions
hold.
\begin{enumerate}
 \item \textbf{Balance.} Both halves of $\ba$ and both halves of
  $(S_j)_{j\in[n]}$ are balanced.
 \item \textbf{Interval shifts.} For every nonempty interval
  $I\subseteq[n-1]$, both halves of the vector
  $\bigl(\Delta^{I}_{\rho,i}\bigr)_{i\in[m]}$ are balanced, for every
  $\rho\in\mathcal R$, and likewise for the vector
  $\bigl(\Delta^{I}_{\varpi_{i,\tau},i}\bigr)_{i\in[m]}$ for every
  $\tau\in[N_{\rm sh}]$.
 \item \textbf{Row gradients.} For every $i\in[m]$, every window $V$,
  and both halves $h$, the string $(\Gamma^{V}_{i,k})_{k\in E^{V}_{i,h}}$
  has Hamming weight between $|E^{V}_{i,h}|/3$ and
  $2|E^{V}_{i,h}|/3$.
 \item \textbf{Rank.} For every $\rho\in\mathcal R$ and every
  $\tau\in[N_{\rm sh}]$, the $m\times(n-1)$ matrices
  $\bigl(\delta_{i,j}(\rho)\bigr)_{i,j}$ and
  $\bigl(\delta_{i,j}(\varpi_{i,\tau})\bigr)_{i,j}$ have
  $\mathbb F_2$-rank $n-1$ on each half of their rows, and, for every
  window $V$ and secret half $h$, the matrix
  $\bigl(\Gamma^{V}_{i,k}\bigr)_{i,k}$ with columns restricted to
  $\{k\text{ in half }h:k-1\in V\}$ has full column rank on each half
  of its rows.
\end{enumerate}
It satisfies the
\emph{linearity condition} if $\mathcal P=\emptyset$, so that
$\Psi(\bs')=\langle\ba,\bs'\rangle$ is linear, and $\ba$ is nonzero on
the half designated for $\chi$: the first $n/2$
coordinates for $\chi=+1$ and the last $n/2$ for
$\chi=-1$. Write $\Adm(k,t,\chi,\ba,\mathcal P)=1$ if
$(\ba,\mathcal P)$ satisfies either condition.
\end{definition}

The interval family subsumes the single window $\mathcal J$, which is
itself an interval; the shifts $\varpi_{i,\tau}$ are known only to the
verifier; and the linearity condition is unchanged, so the case
$\mathcal P=\emptyset$ of the assumption below, and with it
\Cref{lem:encoded-key-hiding}, is unaffected.
\appref{app:encoded-good-directions} motivates each layer of the
sensitivity condition by an attack that succeeds without it.
The private shifts and rank requirements are tests against the
specific cancellation witnesses analyzed there; their being hidden from
the prover does not turn honest-direction density into hardness for an
adversarially selected direction.

\begin{assumption}[Adaptive hardcore bit with carry predicates]
\label{ass:carry-ahcb}
Let $\Gen_{\Eq}$ denote the equation-key generation of
\Cref{sec:encoded-parameters}. Define
\begin{align*}
 H^{\rm carry}_{k,t}&=\bigl\{(\chi,\ba,\mathcal P,C):
   \Adm(k,t,\chi,\ba,\mathcal P)=1,\
   C=\langle\ba,\bs\rangle\oplus\Phi^{\chi}_{\mathcal P}(\bs)\bigr\},\\
 \overline H^{\rm carry}_{k,t}&=\bigl\{(\chi,\ba,\mathcal P,C):
   \Adm(k,t,\chi,\ba,\mathcal P)=1,\
   C=1\oplus\langle\ba,\bs\rangle\oplus\Phi^{\chi}_{\mathcal P}(\bs)\bigr\}.
\end{align*}
There is a constant $c>0$ such that for every quantum algorithm
$\mathcal A$ of time at most $2^{c\sqrt{n}}$,
\[
  \left|
   \Pr_{(k,t)\leftarrow\Gen_{\Eq}(1^\lambda)}
       [\mathcal A(k)\in H^{\rm carry}_{k,t}]
   -
   \Pr_{(k,t)\leftarrow\Gen_{\Eq}(1^\lambda)}
       [\mathcal A(k)\in\overline{H}^{\rm carry}_{k,t}]
  \right|
  \leq 2^{-c\sqrt{n}}.
\]
In particular, $\mathcal A$ may choose the sign, the linear function, the
predicate list, and $C$ adaptively after seeing the public key; the
private data $t$ is used only to evaluate membership in the two sets.
As for the BCMVV adaptive hardcore bit, the honest quantum prover does
not contradict the assumption: the induced data of a transcript is computed
from $\boldzeta_b$, which the prover's Hadamard measurement consumes in
producing $\bD$ and $C$. The soundness proof of \Cref{thm:encoded-formal}
uses the assumption only against classical algorithms.
\end{assumption}

The admissibility predicate is deliberately secret-dependent: the
adversary sees only $k$, while membership in the two hardcore-bit sets is
evaluated using $t$.  Moreover, the adversary may choose one direction in
one execution after seeing $k$, so the high acceptance probability of a
uniform honest direction does not control its choice.  The security
statement above explicitly assumes hardness even with this private filter
and this one-shot adaptive choice; admissibility alone does not establish
either conclusion.

For $\mathcal P=\emptyset$, the game asks the adversary to predict a
self-chosen linear function of the binary secret, and the linearity
condition is the retained-branch part of the admissibility requirement
$\bd\in G_{k,0,\bx_0}\cap G_{k,1,\bx_1}$ of the adaptive-hardcore-bit
theorem of BCMVV~\cite{BCMVV2021}; the direction set of
\Cref{eq:direction-set} imposes it at both branches (see
\appref{app:encoded-ahcb-evidence}). That case of the assumption is
supported at the usual negligible-advantage level by the
subexponential prime-modulus LWE assumption, as shown in
\appref{app:encoded-ahcb-evidence}.  That reduction does not supply the
specific $2^{-c\sqrt n}$ advantage bound in \Cref{ass:carry-ahcb}; this
quantitative rate remains assumed even for empty predicate lists.  The
additional qualitative content of the assumption is that adaptively chosen
carry bits of the partial sums do not help. The carries make
$\Phi^{\chi}_{\mathcal P}$ nonlinear in $\bs$, and this nonlinearity is
what separates the assumption from the BCMVV theorem.
The balance, interval, gradient, private-shift, and rank conditions
are designed to obstruct the concrete carry-cancellation and padding
attacks analyzed in \appref{app:encoded-good-directions}, among them a
boundary-padding predictor against the weaker test using only the window
$\mathcal J$ and public shifts.  The appendix shows that those particular
witnesses fail a check; it does not show that the checks exclude every
exact or correlated cancellation.  Passing them is not proved to imply
hardness, and we do not claim that standard LWE implies this assumption.

Because the underlying modulus is prime and the error scales follow the
BCMVV hierarchy, the prime-modulus LWE assumption of
\appref{app:encoded-ahcb-evidence} supplies key-mode indistinguishability
and the ordinary adaptive-hardcore-bit property for the decoded maps
directly; we prove this, together with three restricted consequences,
there. They rule out a classical
predictor that commits to its base points and can then be rerun from one
saved state on independent uniform challenges\iftoggle{full}{ (\Cref{prop:encoded-uniform-challenge})}{}; a classical sampler that can
be repeatedly invoked from the same state and satisfies a pointwise
conditional min-entropy bound and overwhelming equation correctness\iftoggle{full}{ (\Cref{prop:encoded-high-entropy-sampler})}{}; and
every adversary restricted to empty predicate lists, in this last case with
full adaptivity\iftoggle{full}{ (\Cref{prop:linear-predicates})}{}. None of these restricted statements establishes the fully
adaptive game above. \Cref{ass:carry-ahcb} remains an independent
assumption.

We now connect the assumption to the protocol. Parse a direction as
\[
 \bD=\bigl((\bD^x_j)_{j\in[n]},
                  (\bD^r_{j,i})_{j\in[n+1],i\in[m]},
                  (\bD^h_{i,j})_{i\in[m],j\in[n]}
       \bigr),
\]
where every displayed block is an $L$-bit string. Fix a valid
paired image and a branch $b$, write $\chi_b=(-1)^b$, and parse
$\boldzeta_b=(J_{q}(\bx_b),J_{q}(\br^{(b)}),
J_{q}(\bh^{(b)}))$. The \emph{induced data} of the transcript
at branch $b$ is $(\ba_b(\bD),\beta_b(\bD),\mathcal P_b(\bD))$, where
\begin{align}
 \ba_{b}(\bD)_j
 &=\bD^x_j\cdot\bigl(J_{q}(x_{b,j})\oplus
   J_{q}(x_{b,j}-\chi_b)\bigr)
  \oplus\bigoplus_{i}\bD^r_{j,i}\cdot
   \bigl(J_{q}(r^{(b)}_{j,i})\oplus
   J_{q}(r^{(b)}_{j,i}+\chi_b)\bigr),
 \label{eq:induced-coefficient}\\
 \beta_b(\bD)
 &=\bigoplus_i \bD^r_{n+1,i}\cdot
   \bigl(J_{q}(r^{(b)}_{n+1,i})\oplus
   J_{q}(r^{(b)}_{n+1,i}-\chi_b)\bigr)
  \oplus\bigoplus_i \bD^h_{i,n}\cdot
   \bigl(J_{q}(h^{(b)}_{i,n})\oplus
   J_{q}(h^{(b)}_{i,n}+\chi_bu_i)\bigr),
 \label{eq:induced-offset}
\end{align}
and
$\mathcal P_b(\bD)
 =\{(i,j,h^{(b)}_{i,j},\bD^h_{i,j}):
   \bD^h_{i,j}\neq\mathbf 0,\ j\leq n-1\}$.
All three are computable in deterministic polynomial time from
$(k,b,\boldzeta_b,\bD)$.

\begin{lemma}[Decomposition]
\label{lem:decomposition}
For every valid paired image, every branch $b\in\bits$, and every
direction $\bD$,
\[
 \bD\cdot(\boldzeta_0\oplus\boldzeta_1)
 =\langle\ba_b(\bD),\bs\rangle
 \oplus\beta_b(\bD)
 \oplus\Phi^{\chi_b}_{\mathcal P_b(\bD)}(\bs).
\]
\end{lemma}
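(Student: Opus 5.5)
The plan is to compare the two extended preimages register by register. I would write each component of $\boldzeta_{1-b}$ as the corresponding component of $\boldzeta_b$ plus an explicit shift, and then expand $\bD\cdot(\boldzeta_0\oplus\boldzeta_1)$ blockwise along the parsing of $\bD$. Since XOR is symmetric, $\boldzeta_0\oplus\boldzeta_1=\boldzeta_b\oplus\boldzeta_{1-b}$. The identity therefore reduces to computing, for each $L$-bit block, the string $J_q(\text{value at }b)\oplus J_q(\text{value at }1-b)$, and showing that it has the form used in \Cref{eq:induced-coefficient,eq:induced-offset} and \Cref{eq:carry-parity}.

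The first step is the shift computation, using only the definition of a valid paired image (\Cref{def:paired-images}) and the reconstruction formulas of \Cref{lem:re-properties} together with \Cref{eq:restore-encoded-noise}.
\begin{itemize}
\item Inputs: $\bx_1=\bx_0-\bs$ gives $x_{1-b,j}=x_{b,j}-\chi_b s_j$.
\item $\br$-masks, columns $j\le n$: the encoded output $\widehat\by$ is common to both branches, and $\br_{j,1}=\bw_{j,0}-\bv_j$ with later masks fixed by differences of $\bw$-symbols. Hence $r^{(1-b)}_{j,i}=r^{(b)}_{j,i}+\chi_b s_j$.
\item $\br$-masks, column $n+1$: here $\bv_{n+1}=b$, so $r^{(1-b)}_{n+1,i}=r^{(b)}_{n+1,i}-\chi_b$.
\item $\bh$-masks: telescoping gives $\bh_{i,j}=\sum_{j'\le j}(\bM_{i,j'}\br_{j',i}-\bz_{i,j'})$. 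The only $\bz$-symbol that differs between branches is the restored $\bz_{i,1}=\widehat\bz_{i,1}+\boldeta_i$, and $\boldeta_0=\boldeta_1+\be$. Going from branch $0$ to branch $1$, the $\br$-shifts contribute $\sum_{j'\le j}\bA_{i,j'}s_{j'}$ and the restored symbol contributes $e_i$. So
\[
 h^{(1-b)}_{i,j}=h^{(b)}_{i,j}+\chi_b\bigl(e_i+\textstyle\sum_{j'\le j}\bA_{i,j'}s_{j'}\bigr)
 =h^{(b)}_{i,j}+\chi_b\sigma_{i,j}(\bs),
\]
where the last equality uses $\bu=\bA\bs+\be$ in \Cref{eq:partial-sums}.
\item Column $n$: since $j\le n$ ranges only over columns of $\bA$, we get $\sigma_{i,n}(\bs)=u_i$, which is public.
\end{itemize}

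The second step assembles the parity.
\begin{itemize}
\item Linear terms: for a bit $s_j\in\bits$ and any $y\in\Zq$, $\epsilon\in\{\pm1\}$, we have $J_q(y)\oplus J_q(y+\epsilon s_j)=s_j\cdot\bigl(J_q(y)\oplus J_q(y+\epsilon)\bigr)$, checked separately for $s_j=0$ and $s_j=1$. Applied to the $\bx$- and $\br$-blocks of columns $j\le n$, this yields exactly $\ba_b(\bD)_j\,s_j$, and summing over $j$ gives $\langle\ba_b(\bD),\bs\rangle$.
\item Offset: the column-$(n+1)$ $\br$-blocks and the $j=n$ $\bh$-blocks (with shift $\chi_b u_i$) do not depend on $\bs$. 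They give exactly $\beta_b(\bD)$.
\item Carry terms: the $\bh$-blocks with $j\le n-1$ contribute $\bD^h_{i,j}\cdot\bigl(J_q(h^{(b)}_{i,j})\oplus J_q(h^{(b)}_{i,j}+\chi_b\sigma_{i,j}(\bs))\bigr)$. Blocks with $\bD^h_{i,j}=\mathbf 0$ contribute nothing. The remaining ones are precisely the predicates in $\mathcal P_b(\bD)$, with base point $\xi=h^{(b)}_{i,j}$. Their index pairs are distinct, so this is $\Phi^{\chi_b}_{\mathcal P_b(\bD)}(\bs)$.
\end{itemize}

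I expect the main obstacle to be the sign and orientation bookkeeping in the $\bh$-shift. Three conventions must line up: the minus sign in $\bh_{i,j-1}-\bh_{i,j}$, the sign with which $\boldeta$ enters \Cref{eq:encoded-noise}, and the definition of $\sigma_{i,j}$ via the suffix sum $u_i-\sum_{j'>j}$, which must be converted to the prefix form $e_i+\sum_{j'\le j}$. Getting these right is what makes the shift $\chi_b\sigma_{i,j}(\bs)$ for both values of $b$ rather than only for $b=0$. I would first verify it carefully for $b=0$ and then obtain $b=1$ by swapping the roles of the branches, which flips $\chi$. All of these computations hold exactly modulo $q$, so no boundary argument is needed beyond validity of the paired image.
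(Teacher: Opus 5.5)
Your proposal is correct and follows essentially the paper's route. The paper first isolates your shift computation as a separate lemma on preimage differences, with the same telescoping of the path masks and the restored noise symbol yielding $h^{(1-b)}_{i,j}=h^{(b)}_{i,j}+\chi_b\sigma_{i,j}(\bs)$; it then splits the parity over register blocks, linearizes the $\bx$- and $\br$-terms using $s_j\in\bits$, and collects the branch column and the terminal $\bh$-column into $\beta_b(\bD)$, exactly as you do.
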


The proof is in \appref{app:encoded-good-directions}, together with a
converse\iftoggle{full}{ (\Cref{lem:realizability})}{}: every pair $(\ba,\mathcal P)$ arises, with
$\beta_b(\bD)=0$, as the induced data of a transcript that the
adversary can generate itself. The assumption therefore grants the
adversary at least the freedom that the protocol exposes, and every pair it
ranges over is one that the adversary can realize at some branch.

The verifier accepts directions whose induced data is admissible at both
branches:
\begin{equation}
 \widehat G_{k,t,\boldzeta_0,\boldzeta_1}
 =\bigl\{\bD:\ \Adm\bigl(k,t,\chi_b,\ba_b(\bD),\mathcal P_b(\bD)\bigr)=1
 \text{ for both }b\in\bits\bigr\}.
 \label{eq:direction-set}
\end{equation}

\begin{proposition}[Checkability and density]
\label{prop:encoded-good-density}
The predicate $\Adm$ is computable in uniform deterministic polynomial time given
$(k,t)$. Except on a key event of probability $2^{-\Omega(n)}$, the
failure of the regularity condition of \Cref{def:admissible}, pointwise
for every valid paired image,
\[
 \Pr_{\bD\leftarrow\bits^{N_{\rm ext}}}
  [\bD\notin
   \widehat G_{k,t,\boldzeta_0,\boldzeta_1}]
 \leq 2^{-\Omega(n)}.
\]
\end{proposition}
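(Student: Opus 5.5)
Checkability comes first and is direct: every quantity in \Cref{def:admissible} can be computed by a polynomial-time loop. Given $t$, the verifier knows $\bs$, $\be$ and $\boldsymbol\varpi$. It can therefore evaluate $\Psi(\bs)$ and the $n$ values $\Psi(\bs^{(j)})$, each a sum over at most $m(n-1)$ predicates. It can also form the bits $\delta_{i,j}(\rho)$ for the $O(L)$ values $\rho\in\mathcal R$ and the $N_{\rm sh}$ private shifts. There are $O(n^2)$ intervals $I$, and each $\Delta^I$ follows from prefix XORs. The gradients $\Gamma^V_{i,k}$ cost $O(mn)$ evaluations, the ranks come from Gaussian elimination over $\mathbb F_2$, and regularity is a count on $\bA$.

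For density I would first isolate the key event. Over a uniform $\bA$ (statistically close to the $\TrapGen$ output, up to $2^{-\Omega(n)}$), each entry is zero with probability $1/q$. A Chernoff bound together with a union bound over the $m$ rows, the two windows, the two halves and the $n$ columns then shows that regularity fails with probability $2^{-\Omega(n)}$. On regular keys I would fix a valid paired image and a branch $b$, and take $\bD$ uniform.

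The key structural observation is that the blocks of $\bD$ are independent. The block $\bD^h_{i,j}$ decides whether $(i,j)$ is in $\mathcal P_b(\bD)$ and with which selection vector. Meanwhile $\ba_b(\bD)_j$ is an XOR of $\bD^x_j$- and $\bD^r_{j,\cdot}$-parities. Each such parity is nonzero on the nonzero string $J_q(x)\oplus J_q(x\mp1)$, so it is a uniform bit, and these are independent across $j$. Hence $\ba_b$ is uniform and independent of $\mathcal P_b$.

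I would then check the conditions one at a time:
\begin{itemize}
\item \textbf{Balance of $\ba$.} This is a Chernoff bound.
\item \textbf{Balance of $S_j$.} Write $S_j=a_j\oplus(\text{carry part})$. Conditioned on $\mathcal P$, the bits $a_j$ are independent uniform, so $S$ is uniform and Chernoff applies.
\item \textbf{Interval, gradient and rank conditions.} For fixed $(i,j)$ and fixed nonzero string $J_q(y)\oplus J_q(y+\rho)$, the bit $\bd\cdot(\cdot)$ with $\bd=\bD^h_{i,j}$ uniform is uniform, and independent across $(i,j)$. In particular, for a fixed interval $I$ and fixed $\rho$, each $\Delta^I_{\rho,i}$ involves the fresh block $\bD^h_{i,\max I}$, which makes the bits independent uniform across rows $i$. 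Balance then fails with probability $2^{-\Omega(m)}$, and a union bound over $O(n^2L\sqrt n)$ choices of $(I,\rho)$ or $(I,\varpi)$ suffices. The case $\bd=\mathbf 0$, i.e. $(i,j)\notin\mathcal P$, must be folded in as "$\bD^h_{i,j}$ uniform in $\bits^L$". This is exactly the uniform-parity statement, so no separate case is needed.
\item \textbf{Row gradients.} Here $\Gamma^V_{i,k}$ for $k\in E^V_{i,h}$ changes $\sigma_{i,k-1}$ by $\pm\bA_{i,k}\neq0$. I would use the fresh block $\bD^h_{i,k-1}$ to make these bits independent uniform, then apply Chernoff with $|E|\geq n/4$.
\item \textbf{Rank.} A random $\lfloor m/2\rfloor\times(n-1)$ matrix with independent uniform entries, where $m/2\gg n$, has full column rank except with probability $2^{-(m/2-n)}$. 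I would reduce to this by showing that the entries, as functions of the $\bD^h$-blocks, are independent uniform.
\end{itemize}
Finally I would union bound over both branches $b$.

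The main obstacle is establishing the independence claims for the gradient and rank matrices. There $\Gamma^V_{i,k}$ aggregates many predicates in row $i$, since changing $s_k$ shifts every $\sigma_{i,j}$ with $j<k$. The entries for different $k$ in the same row therefore share blocks, so they are not obviously independent.

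My first attempt would be a triangular change of variables. Ordering by $k$, the entry $\Gamma^V_{i,k}\oplus\Gamma^V_{i,k+1}$ should depend on the block $\bD^h_{i,k-1}$ through a uniform parity, provided the relevant shift is nonzero mod $q$, which regularity ($\bA_{i,k}\neq0$) guarantees. The map from the blocks to the vector $(\Gamma^V_{i,k})_k$ is then triangular with uniform diagonal, hence the vector is uniform. For the interval and rank conditions at private shifts $\varpi_{i,\tau}$, I would use that $\varpi$ is independent of $\bD$, so the same argument applies conditionally.

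A remaining subtlety is that the nonzero strings must avoid non-canonical residues. This is where the bound $q\geq3\cdot2^{L-2}$ ensures every nonzero mask parity is at least $1/4$-balanced. That bound may be needed in place of exact uniformity, with the Chernoff thresholds $1/3,2/3$ still satisfiable as long as the bias stays small; I would verify that.
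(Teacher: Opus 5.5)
Most of your plan follows the paper's own argument. Checkability is by direct enumeration, and regularity comes from counting zero entries of a uniform $\bA$. Exact uniformity comes from the blocks of $\bD$: $\ba_b(\bD)$ and $(S_j)_j$ via the $\bD^x$-blocks, each interval vector via $\bD^h_{i,\max I}$, and each gradient string via the triangular pivots $\bD^h_{i,k-1}$ on $E^V_{i,h}$. Hoeffding and a union bound then finish.

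The step that fails as written is the rank condition for the gradient matrices $(\Gamma^V_{i,k})_{i,k}$. Their entries are not independent uniform bits. Whenever $\bA_{i,k}=0$, flipping $s_k$ changes no $\sigma_{i,j}$, so $\Gamma^V_{i,k}\equiv0$. The permitted columns $\{k\text{ in half }h:k-1\in V\}$ are not restricted to $E^V_{i,h}$, so such forced zeros do occur. Discarding rows does not rescue the independence claim either. Regularity only bounds the zeros in each \emph{column} by $m/8$, so rows with some zero entry may make up all rows. A key event excluding every zero entry has probability only about $mn/q$, far from $2^{-\Omega(n)}$.

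The missing idea is to argue one column combination at a time, as the paper does. Fix a nonzero $\mathbf c$ over the permitted columns and some $k_0$ in its support. Set aside the at most $m/8$ rows with $\bA_{i,k_0}=0$. In every other row, let $k^\ast$ be the largest supported $k$ with $\bA_{i,k}\neq0$. Supported columns above $k^\ast$ contribute identically zero. $\Gamma^V_{i,k^\ast}$ carries the fresh pivot $\bD^h_{i,k^\ast-1}$ against a nonzero word, and this block is absent from every $\Gamma^V_{i,k}$ with $k<k^\ast$. Hence $\bigoplus_k c_k\Gamma^V_{i,k}$ is a fair bit, independent across rows. The combination therefore vanishes on a row half with probability at most $2^{m/8-\lfloor m/2\rfloor}$. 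A union bound over the at most $2^{n/2}$ combinations gives $2^{n/2+m/8-\lfloor m/2\rfloor}=2^{-\Omega(n)}$.

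Two smaller corrections.

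First, in your triangular argument use $\Gamma^V_{i,k}$ itself, whose new block is $\bD^h_{i,k-1}$, rather than the differences $\Gamma^V_{i,k}\oplus\Gamma^V_{i,k+1}$. In a difference, the coefficient word on $\bD^h_{i,k-1}$ is $J_q(w+\chi(2s_k-1)\bA_{i,k})\oplus J_q(w+\chi(2s_{k+1}-1)\bA_{i,k+1})$, which can vanish.

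Second, your worry about non-canonical residues and the bound $q\geq3\cdot2^{L-2}$ is misplaced. Here the randomness is a block of $\bD$, uniform over all of $\bits^L$. Its inner product with any fixed nonzero word is exactly fair. Canonical encoding enters only through the nonzeroness of that word, which follows from injectivity of $J_q$. The lower bound on $q$ matters only when one randomizes over residues, as in the paper's private-shift discussion. Likewise, the private shifts need no independence from $\bD$: pointwise $\varpi_{i,\tau}\neq0$ suffices.
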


The direction set is evaluated only by the classical verifier, so it does
not affect the honest prover's depth. We prove
\Cref{prop:encoded-good-density}, together with an analysis of the
elementary attacks motivating \Cref{def:admissible}, in
\appref{app:encoded-good-directions}.

The counting argument also needs key-mode indistinguishability; we show
that it already follows from \Cref{ass:carry-ahcb}.

\begin{lemma}[The carry-predicate assumption implies key-mode indistinguishability]
\label{lem:encoded-key-hiding}
Under \Cref{ass:carry-ahcb}, no
uniform quantum algorithm of time
$2^{o(\sqrt{n})}$ distinguishes the public equation and image
keys in \Cref{sec:encoded-parameters} with non-negligible advantage.
\end{lemma}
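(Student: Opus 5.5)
The plan is a direct reduction. Suppose a uniform quantum algorithm $\mathcal D$ of time $2^{o(\sqrt n)}$ distinguishes equation keys from image keys with advantage $\eps$. We turn it into an adversary $\mathcal A$ for \Cref{ass:carry-ahcb} that uses only empty predicate lists, so the linearity condition of \Cref{def:admissible} is the admissibility requirement. The structure of the equation key matters here. In equation mode $\bu=\bA\bs+\be$ with $\bs\in\bits^n$. In image mode $[\bA\mid\bu]$ is a trapdoor matrix, so $\bu$ is statistically close to uniform and independent of $\bA$, up to the $2^{-\Omega(n)}$ error of \Cref{thm:trapgen}. Hence $\mathcal D$ distinguishes $(\bA,\bA\bs+\be)$ from $(\bA,\bu)$ with $\bu$ uniform, up to negligible error.

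We then use the standard Goldreich--Levin/hybrid conversion: a distinguisher for binary-secret LWE yields a predictor for a linear function of $\bs$. Given $k=(\bA,\bu)$, the adversary $\mathcal A$ samples a uniform $\ba\in\bits^n$ and uniform $\gamma\in\Zq^m$. It sets $\bu'=\bu+\gamma\,\mathbf c$ for a suitable public re-randomization $\mathbf c$, chosen so that the modified key is an LWE sample with secret $\bs$ when a guess is correct and is uniform otherwise. Concretely, we replace $\bA$ by $\bA'=\bA-\gamma\ba^{\!\top}$ and keep $\bu$. Then
\[
 \bu=\bA'\bs+\be+\gamma\langle\ba,\bs\rangle_{\mathbb Z},
\]
and $(\bA',\bu)$ is distributed as an equation key if $\langle\ba,\bs\rangle=0$ over $\mathbb Z$. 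It is close to an image-type key otherwise, because $\gamma$ is uniform and hides $\bu$. To obtain the parity rather than the integer inner product, we choose $\gamma$ of the form $\gamma=\tfrac{q+1}{2}\gamma'$ and randomize over $\bA'$. Here we use that $q$ is odd. Whenever the inner product is odd, the resulting shift is uniform.

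$\mathcal A$ runs $\mathcal D$ on the modified key and outputs $(\chi,\ba,\emptyset,C)$, with $C$ equal to $\mathcal D$'s guess and $\chi=+1$. It resamples $\ba$ until $\ba$ is nonzero on the first half, which costs only a constant factor. A standard averaging argument then gives a gap of $\Omega(\eps)$ between $\Pr[\mathcal A(k)\in H^{\rm carry}]$ and $\Pr[\mathcal A(k)\in\overline H^{\rm carry}]$. The assumption bounds this gap by $2^{-c\sqrt n}$, so $\eps$ is negligible.

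The step I expect to be the main obstacle is making the modified key distributions exactly match the two key modes. This requires two things. First, the trapdoor ensemble must be statistically close to uniform; this follows from \Cref{thm:trapgen}. Second, the rerandomized $\bA'$ must remain uniform. This holds because $\gamma\ba^{\!\top}$ is independent of the uniform $\bA$. I would handle the parity-versus-integer issue with a hybrid over the coordinates of $\ba$, as in the BCMVV key-hiding argument. I would also check that the admissibility event for empty lists is public, since it depends only on $\ba$. This ensures that no private filter distorts the reduction's advantage.
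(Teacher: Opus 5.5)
\paragraph{Gap in the reduction.}
Your reduction breaks at the step where the modified key is supposed to be an equation key exactly when $\langle\ba,\bs\rangle$ is even. With $\bA'=\bA-\gamma\ba^{\top}$ and $\bu$ unchanged, you get $\bu=\bA'\bs+\be+\gamma t$, where $t=\langle\ba,\bs\rangle_{\mathbb Z}\in\{0,\ldots,n\}$. Since $q$ is a prime larger than $n$, every $t\neq0$ is invertible modulo $q$. So $\gamma t$ is uniform and independent of $(\bA',\ba,\bs)$ whether $t$ is odd or even. Rescaling $\gamma$ by $2^{-1}=\frac{q+1}{2}$ changes nothing, and no $\Zq$-shift can detect the parity of $t$.

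Consequently $(\bA',\bu)$ is an ideal image key except on the event $t=0$. For $\ba$ nonzero on the first half, that event has probability at most $(3/4)^{n/2}$. Even on it, the secret is conditioned to vanish on the support of $\ba$, so the key is not distributed as a genuine equation key.

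Your output bit $C=Y$ also carries no guess. Writing $Y$ for $\mathcal D$'s output and using $\mathbb E[(-1)^{\langle\ba,\bs\rangle}]=0$, one gets
$\bigl|\mathbb E[(-1)^{C\oplus\langle\ba,\bs\rangle}]\bigr| =\Pr[t=0]\cdot\bigl|\mathbb E[(-1)^{Y}\mid t=0]-\mathbb E_{\Img}[(-1)^{Y}]\bigr| \leq 2(3/4)^{n/2}$,
up to the trapdoor-marginal error. This bound holds regardless of $\eps$, so the claimed $\Omega(\eps)$ gap does not follow. Your fallback, a hybrid ``as in the BCMVV key-hiding argument,'' does not close the gap either. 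That argument derives key hiding from LWE via lossy matrices, and LWE is not a hypothesis of this lemma.

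\paragraph{The missing idea and the paper's route.}
You need to guess the secret on a constant-size support and re-randomize, so that a correct guess yields an exactly distributed equation key with a uniform secret. The guess must then be folded into the output bit. The paper does this with $\mathcal T=\{j_1,j_2\}$, one index in each half:
\begin{itemize}
 \item it guesses $\bg\leftarrow\bits^2$ for $\bs_{\mathcal T}$ and replaces $\bA_{\mathcal T}$ by fresh uniform columns $\bR$;
 \item it sets $\bu'=\bu-\bA_{\mathcal T}\bg+\bR\bg'$ with fresh $\bg'\leftarrow\bits^2$;
 \item it outputs $(+1,\boldsymbol\delta_{j_1}\oplus\boldsymbol\delta_{j_2},\emptyset,g_{j_1}\oplus g_{j_2}\oplus Y)$.
\end{itemize}
A correct guess, which has probability $1/4$, gives an equation key. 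A wrong guess leaves a discarded uniform column multiplied by $\pm1$, which makes $\bu'$ uniform. The sign bookkeeping then gives correlation $(p_{\Img}-p_{\Eq})/2$, where $p_{\Eq},p_{\Img}$ are $\mathcal D$'s probabilities of outputting $1$ on ideal equation and image keys.

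\paragraph{Repairing your column-shift idea.}
Your shift can be fixed in the same spirit. Take $\ba=\boldsymbol\delta_j$ with $j$ in the first half, a guess $g\leftarrow\bits$, the key $(\bA-\gamma\boldsymbol\delta_j^{\top},\ \bu-g\gamma)$, and $C=g\oplus Y$. Because $g$ is independent of $\bs$, this key is exactly an ideal equation key with uniform secret when $g=s_j$, and exactly an ideal image key otherwise. The resulting correlation is $p_{\Img}-p_{\Eq}$.

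Your remark that admissibility for an empty list reduces to the public linearity condition is correct. It is what makes either output admissible for every key.
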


\begin{proof}
Suppose that $\mathcal D$ distinguishes the two public-key modes, and let
$\Delta_{\rm real}$ be its equation-mode acceptance probability minus its
image-mode acceptance probability. Let
$\eps_{\rm stat}=\negl(\lambda)$ collect the two public-marginal
distances of the trapdoor generators and the finite-precision error-sampling
distance. Replacing these distributions by their idealized versions gives
\[
 K_{\Eq}=[\bA\mid\bA\bs+\be]
 \qquad\text{and}\qquad
 K_{\Img}=[\bA\mid\bu],
\]
where $\bA$ and $\bu$ are uniform and
$\bs\leftarrow\bits^{n}$ and
$\be\leftarrow
D_{\mathbb Z_{q},B_V}^{m}$; all suppressed
public metadata is unchanged. Write
$p_{\Eq}$ and $p_{\Img}$ for the probabilities that $\mathcal D$
outputs one in these two experiments. Then
\[
 \bigl|\Delta_{\rm real}-(p_{\Eq}-p_{\Img})\bigr|
 \leq\eps_{\rm stat}.
\]

We construct an adversary $\mathcal B$ for \Cref{ass:carry-ahcb}. Fix
$j_1\in[n/2]$ and
$j_2\in\{n/2+1,\ldots,n\}$, and let
$\mathcal T=\{j_1,j_2\}$. Given an equation key
$(\bA,\bu=\bA\bs+\be)$, sample
$\bg=(g_{j_1},g_{j_2}),\bg'=(g'_{j_1},g'_{j_2})\leftarrow\bits^2$ and
$\bR\leftarrow\mathbb Z_{q}^{m\times2}$.
Replace the columns $\bA_{\mathcal T}$ by $\bR$, obtaining $\bA'$, and set
\[
 \bu'=\bu-\bA_{\mathcal T}\bg+\bR\bg'.
\]
If $\bg=\bs_{\mathcal T}$, then $[\bA'\mid\bu']$ is distributed as
$K_{\Eq}$ when $\bA$ and $\be$ are ideal, and within $\eps_{\rm stat}$ of
it otherwise, with the two replacement secret coordinates given by
$\bg'$. If $\bg\neq\bs_{\mathcal T}$, some coordinate of
$\bs_{\mathcal T}-\bg$ is $1$ or $-1$; hence the discarded uniform columns in
$\bA_{\mathcal T}(\bs_{\mathcal T}-\bg)$ make $\bu'$ uniform and independent of $\bA'$.
In that case $[\bA'\mid\bu']$ is distributed as $K_{\Img}$ when $\bA$ is
ideal, and within $\eps_{\rm stat}$ of it otherwise.
Run $\mathcal D$ on this rerandomized key, let $Y$ be its output, set
\[
 C=g_{j_1}\oplus g_{j_2}\oplus Y,
\]
and output $(+1,\ \boldsymbol\delta_{j_1}\oplus\boldsymbol\delta_{j_2},\ \emptyset,\ C)$.

This output satisfies the linearity condition of \Cref{def:admissible} for every key:
the predicate list is empty and the linear function is nonzero at the
first-half coordinate $j_1$. Membership in $H^{\rm carry}_{k,t}$ or
$\overline H^{\rm carry}_{k,t}$ is therefore exactly the test of $C$
against
$\langle\boldsymbol\delta_{j_1}\oplus\boldsymbol\delta_{j_2},\bs\rangle=s_{j_1}\oplus s_{j_2}$.
By data processing, replacing the real public matrix
marginal and implemented error sampler by their ideal distributions changes
the resulting expectation by at most $O(\eps_{\rm stat})$.

Let $\mu_{\Eq}=1-2p_{\Eq}$ and
$\mu_{\Img}=1-2p_{\Img}$. Among the four choices of $\bg$, one equals
$\bs_{\mathcal T}$; of the other three, two have the opposite parity from $\bs_{\mathcal T}$
and one has the same parity. Therefore
\begin{align*}
 \mathbb E\bigl[(-1)^{C\oplus(s_{j_1}\oplus s_{j_2})}\bigr]
 &=\frac14(\mu_{\Eq}-\mu_{\Img}-\mu_{\Img}+\mu_{\Img})\\
 &=\frac{p_{\Img}-p_{\Eq}}2.
\end{align*}
Up to the statistical error above, the advantage of $\mathcal B$ is
therefore, in absolute value, half the key-distinguishing gap:
\[
 \Adv(\mathcal B)
 \geq \frac12|\Delta_{\rm real}|
       -O(\eps_{\rm stat}).
\]
The reduction invokes $\mathcal D$ once and otherwise has polynomial
overhead. If $\mathcal D$ runs in time
$2^{o(\sqrt{n})}$, then for all sufficiently large parameters
the
reduction runs in time at most $2^{c\sqrt{n}}$, where $c$
is the constant in \Cref{ass:carry-ahcb}, contradicting the assumption
whenever the original gap is non-negligible.
\end{proof}

\subsection{The protocol}
\label{sec:encoded-protocol}

Our single-round proof of quantumness is \Cref{prot:encoded} below.

\begin{protocolframe}
\begin{protocol}[Single-round proof of quantumness from the encoded LWE
family]
\label{prot:encoded}
\leavevmode
\smallskip

\noindent\textbf{Input:} a security parameter $\lambda$, in unary.

\begin{enumerate}
 \item The verifier generates the parameters of
  \Cref{sec:encoded-parameters}, samples a mode
  $\mathsf{mode}\leftarrow\{\Eq,\Img\}$ uniformly, and generates the corresponding
  equation or image key. For an equation key it retains
  $t=(t_\bA,\bs,\be,\boldsymbol\varpi)$; for an image key it retains $t=t_\bK$. It
  sends the public key
  $k=(q,n,m,
        Q,W,\bA,\bu)$
  to the prover.
 \item The prover prepares a uniform superposition over
  \[
   b\in\bits,\quad \bx\in X,\quad
   \boldeta\in E^{m},\quad \br,\quad \bh
  \]
  as follows:
  \begin{enumerate}
   \item the $b$-register and the low $L/2$ qubits of each $L$-qubit
    $x_j$-register are Hadamard states, because their side lengths are
    powers of two; the top $L/2$ qubits of each $x_j$-register hold the
    leading zeros of $J_{q}(x_j)$ and remain $\ket{0}$ until the final
    Hadamards;
   \item each function-noise register is prepared as an unsigned
    $z_i\in\{0,\ldots,W-1\}$ by Hadamards and translated in
    place to $\boldeta_i=z_i-W/2\pmod{q}$;
   \item every coordinate of $\br,\bh$ is prepared independently in the
    exact state $\ket{U_{q}}$ using
    \Cref{lem:prime-uniform-preparation}.
  \end{enumerate}
  It computes $\widehat F_k(b,\bx,\boldeta;\br,\bh)$ into an output
  register, applies the operation in \Cref{eq:eta-erasure} to clear
  $\boldeta$, and clears all arithmetic scratch. It then applies
  Hadamards to the registers containing $(b,\bx,\br,\bh)$ and measures
  all registers; write $C\in\bits$ for the outcome on the
  $b$-register and $\bD\in\bits^{N_{\rm ext}}$ for the outcome on the
  registers containing $(\bx,\br,\bh)$. It sends the verifier the encoded output
  $\widehat{\by}$, the direction $\bD$, the bit
  $C$, and the measured scratch registers.
 \item The verifier rejects malformed responses and nonzero scratch
  registers, and decodes $\widehat{\by}$ to $\by$.
  \begin{itemize}
   \item \textbf{if} $\mathsf{mode}=\Eq$: it runs
    $\mathsf{ExtInv}_t(\widehat\by)$ (\Cref{def:paired-images}) and
    rejects on failure; otherwise it obtains $(\boldzeta_0,\boldzeta_1)$
    and accepts if and only if
    $\bD\in
    \widehat G_{k,t,\boldzeta_0,\boldzeta_1}$ (\Cref{eq:direction-set}) and
    \[
      C=\bD\cdot(\boldzeta_0\oplus\boldzeta_1).
    \]
   \item \textbf{if} $\mathsf{mode}=\Img$: it runs $\TrapInv(t,\by)$ to obtain a
    candidate $(\bv,\boldeta)$, parses $\bv=(\bx,b)$, rejects unless
    $b\in\bits$, and runs
    $\mathsf{Rec}_k(\widehat\by,b,\bx)$. It accepts if and only if
    these steps produce a string $\boldzeta$ satisfying
    $\widehat R_k(\widehat\by;b,\boldzeta)=1$
    (\Cref{def:validity-relation}).
  \end{itemize}
\end{enumerate}
\end{protocol}
\end{protocolframe}

All of the honest prover's measurements are performed at the end. Measuring
the encoded output before the Hadamard registers is used only as an
equivalent description in the completeness proof; the projectors commute.

\subsection{Complexity and security}
\label{sec:encoded-security}

We first give the exact preparation of the uniform $\mathbb Z_q$ state
(\Cref{lem:prime-uniform-preparation}) and bound the depth and width of the
honest prover (\Cref{lem:encoded-circuit}), then compile it into standard
$\QAC$ (\Cref{lem:encoded-qac}), show that an accepting image transcript
yields an LK query (\Cref{lem:encoded-image-extraction}), and finally prove
\Cref{thm:encoded-formal} by the counting argument of
\Cref{lem:aggm-counting}.

\begin{lemma}[Exact uniform prime-field state]
\label{lem:prime-uniform-preparation}
For every public prime satisfying
\Cref{eq:encoded-prime-modulus}, the state
\[
 \ket{U_{q}}
 =\frac{1}{\sqrt{q}}
   \sum_{z=0}^{q-1}\ket{z}
\]
has an exact, bounded-arity preparation of depth
$O(\log L)$ and an exact constant-depth $\QAC$ preparation.
Polynomially many independent copies can be prepared in parallel with the
same depth.
\end{lemma}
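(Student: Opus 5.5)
The plan is to prepare $\ket{U_q}$ \emph{exactly} by a one-shot reversible computation, with no repeat-until-success and no amplitude amplification that would need feed-forward. Since $q$ is an odd prime with $3\cdot2^{L-2}\le q<2^L$, a power-of-two Hadamard state cannot be restricted to $\{0,\dots,q-1\}$ directly. Instead I would use exact amplitude amplification on a single $L$-qubit register. Let $\ket{+^L}=2^{-L/2}\sum_{z<2^L}\ket z$ and let $p=q/2^L\in[3/4,1)$ be the good fraction.

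I would first attach one ancilla qubit and apply a controlled rotation $R_y(\theta)$, controlled on nothing, i.e.\ a fixed rotation. This multiplies the good amplitude by $\sin(\theta/2)$ so that the total good probability becomes exactly $1/4$. This is possible because $p\ge 1/4$: take the good subspace to be ``$z<q$ and ancilla $=1$'', choose $\sin^2(\theta/2)=1/(4p)$, and then one round of Grover iteration with the reflection about the prepared state maps the start state exactly onto the good subspace. The reflection is $U_0$, reflection about $\ket{0^{L+1}}$, conjugated by the preparation.

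The output is then $\ket{U_q}\otimes\ket1$ exactly, since the good component is uniform over $z<q$. A final $X$ on the ancilla resets it to $\ket0$. The depth comes from three ingredients:
\begin{itemize}
\item Hadamards, of depth $1$.
\item The oracle phase $(-1)^{[z<q]\wedge a}$. This is a comparison of an $L$-bit register with the public constant $q$.
\item The reflection about $\ket{0^{L+1}}$, which is an $(L+1)$-controlled $Z$.
\end{itemize}

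With bounded-arity gates, the comparator with a constant is a standard prefix (carry-lookahead) computation of depth $O(\log L)$ using $O(L)$ ancillas, computed, phased and uncomputed. The multi-controlled $Z$ is a balanced tree of Toffolis of depth $O(\log L)$.

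In $\QAC$, the multi-controlled $Z$ is a single generalized Toffoli. The comparison $z<q$ is an OR over positions $i$ of ``$z_i=0$, $q_i=1$, and $z_k=q_k$ for all $k>i$''. Each term is a generalized Toffoli on literals (with $X$ conjugations for the known bits of $q$), so they are $L$ AND gates sharing inputs. The main obstacle is exactly this fan-out: each bit $z_k$ feeds up to $L$ terms. I would handle it as the paper does elsewhere, by invoking the Grier--Morris--Wu exact polynomial-size $\QAC$ fan-out on $\poly(L)$ qubits~\cite{GrierMorrisWu2026} to copy $z$ into $L$ ancilla copies in constant depth. Then compute all terms in parallel, take the OR via one generalized Toffoli with negations, apply $Z$, and uncompute.

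Finally, copies act on disjoint registers, so polynomially many run in parallel without increasing depth. I would double-check that the exact-Grover angle makes the amplitude precisely $1$. With initial good amplitude $\sin\alpha=1/2$, one iteration gives $\sin 3\alpha=1$. This settles exactness.
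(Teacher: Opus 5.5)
Your proof is correct and follows the same route as the paper. Both apply Hadamards to an $L$-qubit register and use exact amplitude amplification onto the marked set $\{z<q\}$, built from a comparator against the public constant and a multi-controlled phase (each of bounded-arity depth $O(\log L)$), with the Grier--Morris--Wu polylogarithmic fan-out making the comparison constant-depth $\QAC$. The only differences are in the details: you get exactness by diluting the good probability to exactly $1/4$ with one rotated ancilla, so a single standard Grover iteration suffices, where the paper uses H{\o}yer's phase-matched generalized Grover steps; and you write $[z<q]$ as an explicit OR of ANDs of literals rather than citing a $\TC$/$\QNCf$ circuit for comparison.
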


\begin{proof}
Apply Hadamards to obtain the uniform state on
$\{0,\ldots,2^{L}-1\}$, and mark a basis string precisely when
$z<q$. The marked weight is
$q/2^{L}\in[1/2,1)$. Exact amplitude amplification
with phase matching therefore maps this state to its normalized marked
projection using a constant number of generalized Grover steps
\cite{Hoyer2000}. Symmetry within the marked subspace makes that projection
exactly $\ket{U_{q}}$.

Each marked-subspace phase is implemented by computing the public comparison
$[z<q]$, applying the prescribed one-qubit phase to its flag,
and uncomputing. A phase about the initial Hadamard state is obtained by
conjugating a phase on $\ket{0^{L}}$ by Hadamards. The
phase angles are efficiently computable functions of the known rational
$q/2^{L}$, and arbitrary one-qubit rotations are
allowed in our circuit model. Here exactness is relative to that
declared gate model: the phase angles have succinct, effectively computable
descriptions determined by $q/2^L$.  Over a fixed finite gate set their
synthesis would generally be approximate, giving negligible preparation
error rather than literal zero error. Comparisons and the two multi-controlled
phases have clean bounded-arity depth $O(\log L)$. For the
$\QAC$ implementation, comparison is in uniform
$\TC$~\cite{HesseAllenderBarrington2002}; its standard
$\QNCf$ realization uses fan-out only within a
$\poly(L)$-size block, and the exact polylogarithmic-fan-out
compiler of Grier--Morris--Wu converts it to constant-depth
$\QAC$~\cite[Corollary~10]{GrierMorrisWu2026}. A multi-controlled phase is obtained by
computing its conjunction into one flag, applying the one-qubit phase, and
uncomputing. All work registers return to zero, and disjoint copies of the
construction run in parallel.
\end{proof}

\begin{lemma}[Encoded circuit complexity]
\label{lem:encoded-circuit}
The honest prover in \Cref{prot:encoded} has an exact bounded-arity
implementation of depth
\[
 O(\log L)=O(\log\log\lambda).
\]
The source and output registers contain
\[
 \Theta(mnL)
 =\Theta(\lambda^4L^2)
 =\Theta(\lambda^4\log^4\lambda)
\]
qubits. A direct parallel reversible implementation has width
\[
 \widetilde O(mnL^2)
 =\widetilde O(\lambda^4L^3)
 =\widetilde O(\lambda^4\log^6\lambda).
\]
\end{lemma}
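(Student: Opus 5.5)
We follow the honest prover of \Cref{prot:encoded} stage by stage and bound each stage separately. Every stage is a parallel composition of blocks acting on disjoint sets of $O(1)$ words of $L$ qubits. Each block is a reversible arithmetic circuit of depth $O(\log L)$ over bounded-arity gates. Once all blocks within a layer act on pairwise disjoint registers, total depth is the sum over a constant number of stages, which gives $O(\log L)=O(\log\log\lambda)$ since $L=\Theta(\log^2\lambda)$.

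The stages are as follows.
\begin{enumerate}
 \item \textbf{State preparation.} Hadamards act on the $b$-register and on the low $L/2$ qubits of each $x_j$ (depth $1$). Each noise register gets Hadamards followed by the public translation $z\mapsto z-W/2\bmod q$, a constant-plus-modular-reduction circuit of depth $O(\log L)$. Each of the $m(n+1)+mn$ mask words gets its own copy of $\ket{U_q}$ in parallel, in depth $O(\log L)$ by \Cref{lem:prime-uniform-preparation}.
 \item \textbf{Encoding.} We compute $\widehat F_k$ using \Cref{eq:re-w,eq:re-z,eq:encoded-noise}. Each output symbol is a constant number of modular additions, subtractions, and multiplications by public constants $\bM_{i,j}$ on $L$-bit words. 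Each such operation is a reversible circuit of depth $O(\log L)$: carry-lookahead addition, Wallace-tree multiplication by a classical constant, and a Barrett-style reduction modulo the public $q$. All of these are $\NC$-type circuits on $O(L)$ bits, made reversible with $O(L^2)$ or $\widetilde O(L^2)$ ancillas.
 \item \textbf{Noise erasure and cleanup.} The erasure step \Cref{eq:eta-erasure} is again $O(1)$ word operations. Scratch is then cleaned by running the forward arithmetic in reverse (Bennett), which doubles depth. A final Hadamard layer follows.
\end{enumerate}

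The key structural point is scheduling. \Cref{lem:re-properties} says each source symbol occurs in at most a constant number $c_0$ of output formulas, and each noise word in exactly one. Each block only reads a source word, to compute a fresh output, so the blocks can be grouped into $c_0'=O(1)$ rounds. In each round every source word is read by at most one block; such a grouping exists by greedy coloring of the bounded-degree conflict graph on blocks. Within a round, blocks act on disjoint qubits. Across rounds, depth adds a constant factor.

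For width, the source registers hold $b$, $\bx$ ($nL$ qubits), $\boldeta$ ($mL$), $\br$ ($m(n+1)L$), and $\bh$ ($mnL$). The output holds the $\bw$-symbols ($(n+1)mL$) and the $\widehat\bz$-symbols ($m(n+1)L$). The total is $\Theta(mnL)$, and with $n=\Theta(\lambda^2)$ and $m=\Theta(\lambda^2L)$ from \Cref{eq:encoded-scales} this is $\Theta(\lambda^4L^2)$. For the direct parallel implementation, each of the $\Theta(mn)$ blocks uses $\widetilde O(L)$ ancillas per word, up to $\widetilde O(L^2)$ for shallow multiplication; we charge the looser $\widetilde O(L)$ factor beyond the word size. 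This gives $\widetilde O(mnL^2)=\widetilde O(\lambda^4L^3)$, and substituting $L=\Theta(\log^2\lambda)$ gives $\widetilde O(\lambda^4\log^6\lambda)$. The ancillas needed for $\ket{U_q}$ preparation are of the same per-word order.

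\textbf{Main obstacle: exactness.} I expect this to be the hardest step. Modular multiplication by a constant and modular reduction must be \emph{exact} reversible circuits of depth $O(\log L)$ that return all scratch to $\ket{0}$ on every basis state, including non-canonical values. I would handle this in two parts:
\begin{itemize}
 \item I would restrict attention to canonical inputs: every register holds a residue below $q$ by construction, because preparation, the translation, and $\ket{U_q}$ all produce canonical values.
 \item I would implement $a\mapsto (ca\bmod q)$ by computing the integer product in depth $O(\log L)$ using a carry-save tree. The reduction then subtracts the exact quotient, computed by an $O(\log L)$-depth division-by-constant circuit that uses an iterated-addition tree with a precomputed reciprocal, plus a final constant number of conditional subtractions.
\end{itemize}
Everything is uncomputed after the result is copied, so the computation is exact.
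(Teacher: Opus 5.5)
Your proposal is correct and follows essentially the same route as the paper: a constant number of $O(\log L)$-depth local arithmetic operations per encoded symbol, run in parallel thanks to the constant occurrence bound of \Cref{lem:re-properties}, mask preparation in parallel via \Cref{lem:prime-uniform-preparation}, Bennett uncomputation of scratch, and a count of $\Theta(mn)$ symbols with $\widetilde O(L^2)$ workspace each. Your explicit round scheduling by greedy coloring of the conflict graph, and your Barrett-style exact modular reduction restricted to canonical inputs, fill in details that the paper treats as standard.
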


\begin{proof}
Every symbol in \Cref{eq:re-w,eq:re-z,eq:encoded-noise} uses a constant
number of additions, subtractions, and multiplications by public
$L$-bit constants. Standard bounded-fan-in addition and
public-constant multiplication have depth $O(\log L)$ and use
$\widetilde O(L^2)$ reversible workspace per encoded
symbol. Distinct encoded symbols use distinct mask inputs, except for the
constant number of occurrences displayed in the formulas, so no dense
fan-out tree is needed. All symbols are evaluated in parallel and their
scratch is uncomputed before the final Hadamards. Counting the $\bw,\bz,\br,\bh$
symbols gives $\Theta(mn)$ symbols and proves the
claims. The operation in \Cref{eq:eta-erasure} is one further constant
composition of the same local arithmetic operations and has the same depth
bound. All prime-field mask registers are prepared in parallel by
\Cref{lem:prime-uniform-preparation}, adding $O(\log L)$ depth
and only $\widetilde O(L)$ clean workspace per mask.
\end{proof}

We next compile the same prover into standard $\QAC$ by replacing every
fan-out gate of polylogarithmic arity with the exact constant-depth circuit
of Grier, Morris, and Wu.

\begin{lemma}[Exact compilation into standard $\QAC$]
\label{lem:encoded-qac}
The honest prover in \Cref{prot:encoded} has an exact,
polynomial-size, constant-depth $\QAC$ implementation. Every circuit layer
is a tensor product of gates on pairwise disjoint qubit sets.
\end{lemma}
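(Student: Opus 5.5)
The plan is to take the circuit of \Cref{lem:encoded-circuit} stage by stage and recompile each stage into constant depth. We first show that each stage is a constant-depth $\QNCf$ circuit. In that circuit, every fan-out gate will act only on a block of $\poly(L)=\polylog(\lambda)$ qubits. We then replace each such gate by the exact polynomial-size constant-depth $\QAC$ circuit of Grier, Morris, and Wu~\cite[Corollary~10]{GrierMorrisWu2026}. The prover has four stages, and each has constant depth:
\begin{enumerate*}[label=(\roman*)]
\item state preparation;
\item evaluation of $\widehat F_k$;
\item erasure of $\boldeta$ via \Cref{eq:eta-erasure}, followed by uncomputation of scratch;
\item final Hadamards and measurement.
\end{enumerate*}
So a constant number of constant-depth layers suffices.

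\textbf{Stage (i).} The $b$- and $\bx$-registers need only Hadamards. For each function-noise register, we prepare Hadamards and then apply the public translation $z\mapsto z-W/2\pmod q$; this is a reversible addition of a public $L$-bit constant modulo $q$. The $\br$- and $\bh$-registers are prepared exactly in constant $\QAC$ depth by \Cref{lem:prime-uniform-preparation}, with all copies in parallel.

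\textbf{Stages (ii) and (iii).} By \Cref{lem:re-properties} and the remark after \Cref{eq:encoded-noise}, each encoded symbol is a constant composition of three operations on $L$-bit words:
\begin{enumerate*}[label=(\alph*)]
\item modular addition or subtraction;
\item multiplication by a public constant modulo $q$;
\item an in-place modular update.
\end{enumerate*}
Iterated addition, multiplication and reduction modulo $q$ all lie in uniform $\TC$~\cite{HesseAllenderBarrington2002}. Moreover, $\TC$ circuits on $\poly(L)$ inputs admit exact constant-depth $\QNCf$ realizations using fan-out only within the block~\cite{HoyerSpalek2005,TakahashiTani2016}. We compute each result into fresh scratch, XOR it into the target (or apply the in-place modular update, which is reversible because it is a bijection of $\Zq$), and then uncompute. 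This gives clean, exact local blocks of size $\poly(L)$.

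\textbf{Disjointness across blocks.} Each source word occurs in only a constant number of output formulas. Therefore the blocks can be scheduled in a constant number of rounds, such that within each round the blocks act on pairwise disjoint registers. Before a round begins, we copy a source word into a constant number of fresh ancilla copies using bounded CNOTs, and we uncopy it afterward. Within each block, every polylogarithmic fan-out is replaced by the Grier--Morris--Wu gadget. The gadget is exact and leaves its ancillas clean, so the layers remain tensor products on disjoint qubits, as the statement requires. The size is polynomial: there are $\poly(\lambda)$ blocks, each of size $\poly(L)$.

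\textbf{Main obstacle.} The step I expect to be hardest is exactness at the level of the gate model. Three points need checking:
\begin{enumerate*}[label=(\alph*)]
\item The arithmetic, including modular reduction by an odd prime that is not a power of two, must be realized exactly, not just approximately, inside $\QNCf$ with block-local fan-out. I would first reduce modular reduction to a constant number of comparisons and conditional subtractions on $O(L)$-bit integers, which are exactly in $\TC$.
\item Every scratch qubit must be returned to $\ket0$, so that the final measurement sees the same state as the bounded-arity circuit. Since the verifier rejects nonzero scratch, this follows if each block computes, copies out the result, and uncomputes in the standard way.
\item The translation from the $\TC$ circuits to the $\QNCf$ circuits of \cite{HoyerSpalek2005,TakahashiTani2016} must be exact.
\end{enumerate*}
Once these hold, the final circuit has the same transcript distribution as the bounded-arity implementation.
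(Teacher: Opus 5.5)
Your proposal is correct and follows essentially the same route as the paper: exact block-local $\QNCf$ realizations of the $\TC$ arithmetic, replacement of each $\poly(L)$-arity fan-out by the exact Grier--Morris--Wu circuit, constant-occurrence copying with CNOT layers so that all blocks act on disjoint supports, and \Cref{lem:prime-uniform-preparation} for the mask states. One small accounting fix: after the replacement, each block is only guaranteed to have size polynomial in $\lambda$, not $\poly(L)$, because the cited bound for fan-out on $\log^{d}\lambda$ targets is polynomial in $\lambda$; the total size is still polynomial.
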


\begin{proof}
First consider one output-symbol formula in
\Cref{eq:re-w,eq:re-z,eq:encoded-noise}. It is a constant composition of
addition, subtraction, and multiplication by a public constant on
$L$-bit strings. These operations have uniform
polynomial-size $\TC$ circuits~\cite{HesseAllenderBarrington2002}.
Their clean reversible implementations are therefore constant-depth
$\QNCf$ circuits: exact unbounded fan-out realizes the required threshold
and arithmetic operations~\cite{HoyerSpalek2005,TakahashiTani2016}.
The local circuit has size $\poly(L)$, so every fan-out gate in
it has arity at most
\[
  \poly(L)=\log^{O(1)}\lambda.
\]

Grier, Morris, and Wu prove that, for every fixed $d$, fan-out on
$\log^d\lambda$ targets has an exact polynomial-size, constant-depth
standard-$\QAC$ implementation~\cite[Corollary~10]{GrierMorrisWu2026}.
Choose $d$ larger than the fixed exponent in the preceding display and
replace every local fan-out gate by that circuit. A layer of the original
$\QNCf$ circuit contains disjoint fan-out gates; their replacement circuits
use private ancillas and hence run in parallel as a constant number of
$\QAC$ layers.
Exact uncomputation returns those ancillas to zero.

It remains to run all output-symbol blocks at once. The occurrence bound in
\Cref{lem:re-properties} lets us make the constant number of copies of each
source symbol required by distinct blocks using a constant number of
bounded-arity CNOT layers. The blocks then have disjoint supports and can
run in parallel. Reversing the copy layers after compute--copy--uncompute
clears their ancillas. The row-local operations in
\Cref{eq:eta-erasure} use the same compilation and execute in parallel.
There are polynomially many blocks, each of
polynomial size in $\lambda$, so the total size is polynomial and the
depth is constant. State preparation, the final Hadamards, and measurement
add only constant depth: the translation used to prepare
$E$ is one more public-constant arithmetic block of the same
kind, and the uniform prime-field mask preparation is constant-depth
$\QAC$ by \Cref{lem:prime-uniform-preparation}.
\end{proof}

The next lemma turns an accepting image-mode transcript into an
LK-$1/4$ query and shows that the extracted lattice point determines the
unique valid extended preimage; it supplies the extraction procedure of
\Cref{lem:aggm-counting}.

\begin{lemma}[Image-mode extraction]
\label{lem:encoded-image-extraction}
Except with negligible probability over an image key, an accepting image
transcript gives an LK-$1/4$ query whose extracted lattice point determines
the unique valid extended preimage in polynomial time using only public
data.
\end{lemma}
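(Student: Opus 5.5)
The plan is to exhibit the LK-$1/4$ query as the decoded image $\by=\Dec(\widehat\by)$ itself, which the prover's output determines publicly. The first step fixes a good image key: outside the trapdoor-failure event of \Cref{thm:trapgen} (probability $2^{-\Omega(n)}$), \Cref{eq:encoded-lattice-separation} gives $\lambda_1^\infty(\mathcal L(\bK))\geq r_{n+1}/\sqrt m=r_G/\sqrt m$.

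Now fix an accepting image transcript. Acceptance means that $\TrapInv(t_\bK,\by)$ returned $(\bv,\boldeta)$ with $\bv=(\bx,b)$, $b\in\bits$, $\bx\in X_b$, $\boldeta\in E^m$, and that $\mathsf{Rec}_k$ succeeded. Every coordinate of $\centerq(\boldeta)$ then has magnitude at most $W/2\leq P/2=r_G/(4m)$. This yields two bounds:
\[
 \|\centerq(\boldeta)\|_\infty\leq \frac{r_G}{4m}<\frac{r_G}{4\sqrt m}\leq\frac{\lambda_1^\infty(\mathcal L(\bK))}{4},
 \qquad
 \|\centerq(\boldeta)\|_2\leq\sqrt m\,\frac{r_G}{4m}<r_G.
\]
The first bound shows that $\by$ lies within $\lambda_1^\infty/4$ of the lattice point $\bK\bv \bmod q$, so it is a legitimate LK-$1/4$ output. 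The second bound shows that $(\bv,\boldeta)$ is the unique bounded-error representation, so it agrees with the trapdoor's answer.

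Next I apply \Cref{ass:encoded-lk}. The prover, composed with the public decoder $\Dec$, is a classical algorithm in the LK-$1/4$ experiment. Its extractor, run on public data only, returns a lattice point $\bp$ except with probability $2^{-c\sqrt n}$, and by the LK relation $\bp$ is the lattice point nearest to $\by$. Separation at $\lambda_1^\infty/4$ makes this point unique, so $\bp\equiv\bK\bv\pmod q$.

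It remains to turn $\bp$ into the extended preimage, and this is the step I expect to need the most care. Given $\bp$, compute $\boldeta=\by-\bp$ and solve the linear system $\bK\bv=\bp$ over the prime field $\Zq$ by Gaussian elimination. This solution is unique when $\bK$ has full column rank $n+1$, which holds for good keys because a kernel vector $\bv'$ would make $\bK\bv'\equiv 0$ with $\bv'\neq0$, giving two zero-error representations and contradicting the uniqueness in \Cref{thm:trapgen}. The computation uses no trapdoor. From $\bv$ we parse $(\bx,b)$, check $b\in\bits$ and $\bx\in X_b$, run $\mathsf{Rec}_k(\widehat\by,b,\bx)$, and form $\boldzeta=(J_q(\bx),J_q(\br),J_q(\bh))$. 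The public check $\widehat R_k(\widehat\by;b,\boldzeta)=1$ then verifies the result against the output.

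Uniqueness of the valid extended preimage comes from two facts. Any valid $(b,\boldzeta)$ has residual in $E^m$ and so, by the argument above, decodes to the same $\bv$. Its coins are then unique by \Cref{lem:noisy-re-privacy}. The failure probability is the sum of the key failure term and the extractor error $2^{-c\sqrt n}$, both negligible.
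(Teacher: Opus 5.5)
Your proposal is correct and follows essentially the same route as the paper. The decoded point $\by$ is the LK-$1/4$ query, and the bounds $W/2\le r_G/(4m)$ and $\lambda_1^{\infty}(\mathcal L(\bK))\ge r_G/\sqrt m$ place it within the extraction radius. The vector $\bv$ is then recovered from $\bp$ by linear algebra over $\Zq$, after which $\mathsf{Rec}_k$ and \Cref{lem:noisy-re-privacy} give the unique valid extended preimage. The only minor difference is how you justify full column rank of $\bK$: you derive it from the uniqueness clause of \Cref{thm:trapgen}, since a kernel vector would give two zero-error representations of $\mathbf 0$, whereas the paper bounds rank deficiency by $q^{n+1-m}$ for a uniform matrix plus the generator's statistical error. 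Both justifications are valid.
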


\begin{proof}
Except with probability at most
$q^{n+1-m}$ for a uniform matrix (plus the trapdoor
generator's statistical error), $\bK$ has full column rank over
$\mathbb F_{q}$. Gaussian elimination over this field then
computes a public left inverse
\begin{equation}
 \mathbf B\bK=I_{n+1}\pmod{q}.
 \label{eq:encoded-left-inverse}
\end{equation}

If the image test accepts, it has found
$\bv=(\bx,b)$ and $\boldeta\in E^{m}$ with
$\by=\bK\bv+\boldeta$. Since every centered coordinate of
$\boldeta$ has magnitude at most $W/2$,
\[
 \|\centerq(\boldeta)\|_2
 \leq\frac{\sqrt{m}\,W}{2}
 \leq\frac{r_G}{4\sqrt{m}}
 <r_G.
\]
Moreover, \Cref{eq:encoded-lattice-separation} gives
\[
 \lambda_1^{\infty}(\mathcal L(\bK))
 \geq\frac{r_G}{\sqrt{m}},
 \qquad
 \|\centerq(\boldeta)\|_\infty
 \leq\frac{r_G}{4m}
 \leq\frac14\lambda_1^{\infty}(\mathcal L(\bK)).
\]
We apply \Cref{ass:encoded-lk} to the classical algorithm that runs the
prover and outputs the decoded point $\by$; decoding is public and
polynomial time, so this algorithm obeys the same time bound up to a
polynomial factor, and its online extractor is the one used here and in
the soundness proof. Thus the online LK extractor returns the nearby lattice point
$\bp\in\mathcal L(\bK)$. It need not return its coefficients: reducing
modulo $q$ and using \Cref{eq:encoded-left-inverse} recovers
\[
 \bv=\mathbf B\bp\pmod{q}.
\]
The extraction procedure then checks that $b\in\bits$ and
$\bx\in X_b$, computes
$\boldeta=\by-\bp\pmod{q}$, and runs
$\mathsf{Rec}_k$ on the encoded image. The verifier's relation
check and uniqueness in \Cref{lem:noisy-re-privacy} then give exactly one
valid $\boldzeta$.
\end{proof}

We can now state and prove the main result.

\begin{theorem}[Main result]
\label{thm:encoded-formal}
Assume
\Cref{ass:encoded-lk,ass:carry-ahcb} for the
parameter ensemble in
\Cref{eq:encoded-dimensions,eq:encoded-prime-modulus}. Then
\Cref{prot:encoded} is a single-round proof of quantumness
(\Cref{def:single-round-poq}) with the following properties.
\begin{enumerate}
 \item \textbf{Completeness.}  The honest quantum prover is accepted with
  probability $1-\negl(\lambda)$. It admits both an exact
  polynomial-size, constant-depth $\QAC$ implementation and an exact
  bounded-arity implementation of log-logarithmic depth $\QNCloglog$. Both implementations use only final
  measurements.
 \item \textbf{Soundness.}  Every fixed uniform probabilistic
  polynomial-time classical prover is accepted with probability at most
  $3/4+\negl(\lambda)$. Moreover, every fixed
  uniform classical prover of time $2^{o(\lambda)}$ is accepted
  with probability at most $3/4+\negl(\lambda)$ for all sufficiently
  large $\lambda$.
\end{enumerate}
In particular, the honest and classical acceptance
probabilities are separated by a constant (for example, by at least
$1/8$) for all sufficiently large $\lambda$.
\end{theorem}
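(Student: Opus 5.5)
The proof splits into completeness (with the circuit bounds) and soundness via \Cref{lem:aggm-counting}.

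\textbf{Completeness.} The circuit claims come directly from \Cref{lem:encoded-circuit,lem:encoded-qac,lem:prime-uniform-preparation}. For acceptance, I would follow the honest state as in the technical overview. In image mode, every honest output decodes as $\bK\bv+\boldeta$ with $\boldeta\in E^m$. By the radius estimates in the proof of \Cref{lem:encoded-image-extraction}, $\TrapInv$ recovers $(\bv,\boldeta)$ except on trapdoor failure, and $\mathsf{Rec}_k$ succeeds by \Cref{lem:noisy-re-privacy}. The only losses are $\bx\notin X_b$ and trapdoor failure, both $\negl(\lambda)$ by \Cref{eq:encoded-scales}.

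In equation mode, I would view the encoded output as measured first. By \Cref{lem:noisy-re-privacy}, conditioned on $\widehat\by$, the remaining $(b,\bx,\br,\bh)$ registers hold an equal superposition over the compatible extended preimages, and the noise register has been cleared by \Cref{eq:eta-erasure}. Outside the boundary event ($2n/Q$), the shifted-noise event ($mB_V/W$), and trapdoor failure, there are exactly two such preimages, with equal weight, and $\mathsf{ExtInv}_t$ succeeds. The Hadamard calculation from the preliminaries then gives a uniform $\bD$ and $C=\bD\cdot(\boldzeta_0\oplus\boldzeta_1)$. \Cref{prop:encoded-good-density} gives $\bD\in\widehat G$ except with probability $2^{-\Omega(n)}$ outside a key event of the same size. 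Summing these errors gives $1-\negl(\lambda)$.

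\textbf{Soundness.} Fix a uniform classical prover $\mathcal P$ of time $T=2^{o(\lambda)}=2^{o(\sqrt n)}$ (using $n=\Theta(\lambda^2)$); polynomial time is a special case. I would verify the three hypotheses of \Cref{lem:aggm-counting}.
\begin{enumerate}
 \item Extraction: \Cref{lem:encoded-image-extraction} with \Cref{ass:encoded-lk} gives an online extractor running alongside $\mathcal P$. Using only public data, it returns a checked complete extended preimage $(b,\boldzeta_b)$ whenever image mode accepts, except with probability $\eps_{\rm ext}\le 2^{-c\sqrt n}+\negl$.
 \item Hiding: the combined procedure ($\mathcal P$, the extractor, and the public check $\widehat R_k$) runs in time $2^{o(\sqrt n)}$, so \Cref{lem:encoded-key-hiding} gives negligible $\eps_{\rm hid}$. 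This is needed because the extractor is stateless with respect to the trapdoor and can be run unchanged on an equation key.
 \item Hardcore: this is the main step. I would build an adversary $\mathcal B$ for \Cref{ass:carry-ahcb}. On an equation key, $\mathcal B$ runs $\mathcal P$ and the extractor and checks $\widehat R_k$; if the check fails, it outputs a random guess. Otherwise it takes $(b,\boldzeta_b)$ and $(\widehat\by,\bD,C)$, computes the induced data $(\ba_b,\beta_b,\mathcal P_b)$ from public information, and outputs $(\chi_b,\ba_b,\mathcal P_b,C\oplus\beta_b)$.
\end{enumerate}

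The step I expect to be the main obstacle is relating $\mathcal B$'s success to $\Pr[\mathcal Q\cap\mathcal E]$. On $\mathcal Q\cap\mathcal E$, the paired image is valid and $\bD\in\widehat G$, so $\Adm=1$ at branch $b$. \Cref{lem:decomposition} then puts the output in $H^{\rm carry}$. The delicate part is the extracted preimage: I must argue that it is one of the two preimages found by $\mathsf{ExtInv}_t$, so that the decomposition holds at the correct branch. Uniqueness of coins given $(b,\bx)$ and the trapdoor's bounded-distance uniqueness should settle this.

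On the complementary events, I need $\Pr[H]-\Pr[\overline H]$ to be controlled by $2\Pr[\mathcal Q\cap\mathcal E]-\Pr[\mathcal Q]$. On $\mathcal Q\setminus\mathcal E$ with an admissible direction, the output lands in $\overline H$. When the direction is inadmissible, the output lies in neither set. On $\neg\mathcal Q$ the guess contributes zero bias. The technical care is in arranging this accounting so that the assumption's two-sided bound yields $\Pr[\mathcal Q\cap\mathcal E]\le\frac12+\eps_{\rm hc}$ in the form the counting lemma uses. If needed, I would define $\mathcal B$ to output a fixed admissible linear guess (as in \Cref{lem:encoded-key-hiding}) whenever $\mathcal Q$ fails, so that these events contribute no bias.

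Combining the three bounds, \Cref{lem:aggm-counting} gives acceptance at most $3/4+\negl(\lambda)$. Since $1-\negl-(3/4+\negl)\ge 1/8$ for large $\lambda$, this yields the stated separation.
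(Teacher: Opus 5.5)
Your proposal is correct and follows the paper's proof essentially step for step. Completeness comes from \Cref{lem:encoded-circuit,lem:encoded-qac}, the two-branch state analysis and \Cref{prop:encoded-good-density}; soundness comes from \Cref{lem:aggm-counting} with LK extraction, \Cref{lem:encoded-key-hiding}, and the adversary outputting $(\chi_b,\ba_b(\bD),\mathcal P_b(\bD),C\oplus\beta_b(\bD))$, with the extracted $\boldzeta_b$ identified with $\mathsf{ExtInv}_t$'s branch-$b$ string via trapdoor uniqueness and \Cref{lem:noisy-re-privacy}.

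The accounting you flag as the main obstacle is simpler than your difference-based case analysis. Since $H^{\rm carry}_{k,t}$ and $\overline H^{\rm carry}_{k,t}$ are disjoint, the two-sided bound of \Cref{ass:carry-ahcb} already gives $\Pr[H^{\rm carry}_{k,t}]\le\frac12+\frac12\eps_{\rm ahcb}$ for every adversary. And $\mathcal Q\cap\mathcal E$ (on the trapdoor-uniqueness key event) lies inside $H^{\rm carry}_{k,t}$, so what $\mathcal B$ outputs elsewhere is irrelevant. This matters because your alternative fallback of a fixed admissible linear guess with a fixed bit need not be unbiased on $\neg\mathcal Q$, whereas a fresh random bit or an inadmissible tuple (as in the paper) would be.
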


\begin{proof}
\textbf{Completeness.}  The two implementations are given by
\Cref{lem:encoded-circuit,lem:encoded-qac}. For the acceptance
probability, first restrict to
$\bx_0\in X_0$,
$\bx_1=\bx_0-\bs\in X_1$, and pairs
$\boldeta_0=\boldeta_1+\be$ lying in
$E^{m}$ on both branches. The excluded boundary
and shifted-noise mass is at most
$2n/Q+
mB_V/W$, as in
\Cref{eq:encoded-scales}. The two original preimages then have the same
decoded image. By \Cref{lem:noisy-re-privacy}, they induce identical
distributions on encoded outputs, and randomness reconstruction gives one
coin string per original preimage and encoded output. Moreover,
outside the trapdoor-failure event of $\bA$, for each $b\in\bits$ there is
at most one $\bx$ with $\by-b\bu-\bA\bx\in E^{m}$, by the uniqueness
statement of \Cref{thm:trapgen} at dimension $n$: every
$\boldeta\in E^{m}$ has
$\|\centerq(\boldeta)\|_2\leq\sqrt{m}\,W/2<r_G<r_n$. Hence $(0,\bx_0)$
and $(1,\bx_1)$ are the only sources with noise in $E^{m}$ that map to
this output. Conditioned on such
an output, the surviving registers are therefore the equal superposition
\[
 \frac{1}{\sqrt2}
 \bigl(\ket{0,\boldzeta_0}+
       \ket{1,\boldzeta_1}\bigr).
\]
The final Hadamards act on all $L$ qubits of every $x_j$-register,
so they give a uniform $\bD\in\bits^{N_{\rm ext}}$ and a branch
outcome satisfying
$C=\bD\cdot(\boldzeta_0\oplus\boldzeta_1)$ exactly.
By \Cref{prop:encoded-good-density}, the verifier rejects at most a
$2^{-\Omega(n)}$ fraction of these directions, outside the
regularity-failure key event, which is absorbed below. In image mode,
restrict to $\bx\in X_b$, which excludes mass at most $n/Q$. Every
retained honest basis state then passes
\Cref{eq:encoded-valid-relation} by construction, and trapdoor inversion
recovers it within the radius proved in
\Cref{lem:encoded-image-extraction}. The excluded branch-boundary mass,
shifted-noise mass, and prime-sampling abort together contribute
$\negl(\lambda)$ by \Cref{eq:encoded-scales} and the $2^{-\Omega(\lambda)}$ prime-sampling abort bound of \Cref{sec:encoded-parameters}; matrix-regularity, trapdoor,
key-sampling, and image-mode full-rank failures are smaller. This proves
completeness $1-\negl(\lambda)$.

\textbf{Soundness.}  Fix a classical prover, and let $\mathcal I$ be
image-mode acceptance and $\mathcal E$ be
equation-mode acceptance. In the image experiment, run the online
extractor from \Cref{lem:encoded-image-extraction}, and let $\mathcal Q$
be the event that it returns a tuple
$(\widehat\by,b,\boldzeta_b,\bD,C)$ satisfying
$\widehat R_k(\widehat\by;b,\boldzeta_b)=1$. The lemma and
\Cref{ass:encoded-lk} give
\[
 \Pr_{\Img}[\mathcal I]\leq\Pr_{\Img}[\mathcal Q]+\eps_{\rm ext}
\]
for $\eps_{\rm ext}=2^{-\Omega(\sqrt{n})}$. The extraction
procedure, including
the public-left-inverse computation, uses only the public key and the online
extractor. The equation/image key indistinguishability established in
\Cref{lem:encoded-key-hiding} therefore transfers the event $\mathcal Q$:
\[
 \bigl|\Pr_{\Img}[\mathcal Q]-\Pr_{\Eq}[\mathcal Q]\bigr|
 \leq\eps_{\rm key},
\]
where $\eps_{\rm key}=\negl(\lambda)$.

In the equation experiment, let $\mathcal U_{\rm uniq}$ be the good-key event on
which trapdoor inversion is unique throughout the noise radius.
Its complement has probability
$\delta_{\rm uniq}=2^{-\Omega(n)}$. On
$\mathcal Q\cap\mathcal E\cap \mathcal U_{\rm uniq}$, the extracted $\boldzeta_b$ equals the
branch-$b$ string returned by $\mathsf{ExtInv}_t$:
\Cref{lem:noisy-re-privacy} rules out a second valid mask string for that
branch.
Consider the algorithm that computes the induced data of the extracted
transcript and outputs
\[
 \bigl(\chi_b,\ \ba_b(\bD),\ \mathcal P_b(\bD),\
 C\oplus\beta_b(\bD)\bigr).
\]
Whenever $\mathcal Q\cap\mathcal E\cap \mathcal U_{\rm uniq}$ occurs, this output lies in
$H^{\rm carry}_{k,t}$: the verifier's direction test includes the
branch-$b$ admissibility check, and the parity is correct by
\Cref{lem:decomposition}. The same output lies in
$\overline H^{\rm carry}_{k,t}$ when the direction test passes but the parity
is flipped; on extraction failure the algorithm outputs a fixed tuple with
$\ba=\mathbf 0$ and an empty predicate list, which is inadmissible. The
two sets are disjoint, so \Cref{ass:carry-ahcb} implies
\[
 \Pr_{\Eq}[\mathcal Q\cap\mathcal E]
 \leq \Pr[H^{\rm carry}_{k,t}]+\delta_{\rm uniq}
 \leq\frac12+\frac12\eps_{\rm ahcb}+\delta_{\rm uniq},
\]
where $\eps_{\rm ahcb}=2^{-\Omega(\sqrt{n})}$.
Applying \Cref{lem:aggm-counting} with
$\eps_{\rm hid}=\eps_{\rm key}$, extraction failure $\eps_{\rm ext}$,
and hardcore-bit advantage
$\eps_{\rm hc}=\eps_{\rm ahcb}/2+\delta_{\rm uniq}$ gives
\[
 s(\lambda)
 \leq\frac34+
 O(\eps_{\rm ext}+\eps_{\rm key}+\eps_{\rm ahcb}+\delta_{\rm uniq})
 =\frac34+\negl(\lambda).
\]
The last equality uses key-mode indistinguishability from
\Cref{lem:encoded-key-hiding}; the rank, trapdoor, key-sampling, LK, and
hardcore-bit errors are negligible as well.

For the time-bounded claim, fix a uniform prover of time
$2^{o(\lambda)}$. Since $n=\Theta(\lambda^2)$, we have
$\sqrt{n}=\Theta(\lambda)$, so the prover runs in time
$2^{o(\sqrt{n})}$, and the polynomial overhead in the
key-hiding, extractor, and hardcore-bit reductions stays below the time
bounds of \Cref{ass:encoded-lk} and \Cref{ass:carry-ahcb} for all
sufficiently large $\lambda$. The preceding bound applies unchanged.
\end{proof}

%% file: sections/appendix_encoded_good_directions.tex
\section{\texorpdfstring{The Admissibility Predicate}{The Admissibility Predicate}}
\label{app:encoded-good-directions}

This appendix proves \Cref{lem:decomposition} and
\Cref{prop:encoded-good-density}, and explains why the balance tests in
\Cref{def:admissible} rule out several elementary classical strategies. Passing the tests is not known to imply
a hardcore bit, and the arguments below do not derive
\Cref{ass:carry-ahcb} from LWE.
Rather, they show that the assumption is being made for an explicit
predicate that accepts almost all of the honest challenges while excluding
the carry-cancellation and padding strategies described below. The BCMVV result uses a similar admissible set.

\subsection{Preimage differences and the decomposition}

We first make explicit how the two extended preimages of a valid paired
image differ. The decomposition of \Cref{lem:decomposition} then follows by
splitting the parity over the register blocks, and the realizability lemma at
the end of the subsection shows that every pair $(\ba,\mathcal P)$ arises
from a transcript that the adversary can generate itself.

\begin{lemma}[Preimage differences]
\label{lem:preimage-differences}
Fix a valid paired image, a branch $b$, and write $\chi_b=(-1)^b$.
Then, modulo $q$,
\[
 \bx_{1-b}=\bx_b-\chi_b\bs,
 \qquad
 r^{(1-b)}_{j,i}=r^{(b)}_{j,i}+\chi_bs_j
 \quad(j\leq n),
 \qquad
 r^{(1-b)}_{n+1,i}
  =r^{(b)}_{n+1,i}-\chi_b,
\]
and
\[
 h^{(1-b)}_{i,j}
 =h^{(b)}_{i,j}+\chi_b\left(
     e_i+\sum_{j'\leq j}\bA_{i,j'}s_{j'}\right)
 =h^{(b)}_{i,j}+\chi_b\,\sigma_{i,j}(\bs).
\]
\end{lemma}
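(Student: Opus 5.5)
The plan is to read all three identities directly off the reconstruction formulas of \Cref{lem:re-properties}, applied to the same encoded output $\widehat\by=(\bw,\widehat\bz)$ with the two different original inputs. By \Cref{def:paired-images}, a valid paired image comes with $\bx_1=\bx_0-\bs$ and $\boldeta_0=\boldeta_1+\be$. Moreover, both extended preimages are produced by $\mathsf{Rec}_k$ from the same symbols $(\bw,\widehat\bz)$. I would first prove everything for $b=0$ (so $\chi_0=+1$). The case $b=1$ then follows by exchanging the roles of the two branches, which flips every difference and hence multiplies it by $\chi_1=-1$. The statement for $\bx$ is immediate from the relation $\bx_1=\bx_0-\bs$.

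For the $\br$-masks, I would use $\br_{j,1}=\bw_{j,0}-\bv_j$ and the recursion $\br_{j,i+1}=\br_{j,i}-\bw_{j,i}$. Since $\bw$ is common to both branches, an induction on $i$ shows that $r^{(1)}_{j,i}-r^{(0)}_{j,i}=-(\bv^{(1)}_j-\bv^{(0)}_j)$ for every $i$. Here $\bv^{(b)}=(\bx_b,b)$. For $j\le n$ the input difference is $-s_j$, giving $r^{(1)}_{j,i}=r^{(0)}_{j,i}+s_j$. For $j=n+1$ the branch coordinate changes by $+1$, giving $r^{(1)}_{n+1,i}=r^{(0)}_{n+1,i}-1$.

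For the $\bh$-masks, I would first restore the noiseless symbols as in \Cref{eq:restore-encoded-noise}. Because the noise differs between branches, these restored symbols differ too: only $\bz_{i,1}$ changes, with $\bz^{(1)}_{i,1}-\bz^{(0)}_{i,1}=\boldeta^{(1)}_i-\boldeta^{(0)}_i=-e_i$. Unrolling $\bh_{i,j}=\bh_{i,j-1}+\bM_{i,j}\br_{j,i}-\bz_{i,j}$ from $\bh_{i,0}=0$ gives $\bh_{i,j}=\sum_{j'\le j}(\bM_{i,j'}\br_{j',i}-\bz_{i,j'})$. For $j\le n$ only the columns of $\bA$ enter, so subtracting the branch-$0$ expression from the branch-$1$ expression yields
\[
 h^{(1)}_{i,j}-h^{(0)}_{i,j}=\sum_{j'\le j}\bA_{i,j'}s_{j'}+e_i .
\]

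Finally, using $u_i=e_i+\sum_{j'=1}^n\bA_{i,j'}s_{j'}$ in \Cref{eq:partial-sums} rewrites this as $\sigma_{i,j}(\bs)$. The only step I expect to need care is the noise restoration. One must remember that $\mathsf{Rec}_k$ reconstructs from the restored $\bz_{i,1}$, not from $\widehat\bz_{i,1}$; this is exactly what makes the $e_i$ term appear with the correct sign. One must also check that the final consistency check $\bh_{i,n+1}=0$ holds on both branches, which is guaranteed because both are valid preimages.
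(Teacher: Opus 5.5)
Your proposal is correct and follows essentially the same route as the paper. The $\br$-relations come from running the column recursion $\br_{j,1}=\bw_{j,0}-\bv_j$, $\br_{j,i+1}=\br_{j,i}-\bw_{j,i}$ on the common $\bw$. The $\bh$-relation comes from telescoping the row recursion on the restored symbols, where the noise difference $\boldeta_0-\boldeta_1=\be$ supplies the $e_i$ term with the right sign.

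The differences are only cosmetic. The paper tracks the sign $\chi_b$ directly instead of reducing to $b=0$ by symmetry. It also explicitly checks that the telescoped difference returns to zero at $j=n+1$, which is a consistency check and is not needed for the identities at $j\le n$.
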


\begin{proof}
The relation between the decoded preimages is part of
\Cref{def:paired-images}. Equality of the $\bw$-part of the two
encodings then gives
\[
 r^{(1-b)}_{j,i}=r^{(b)}_{j,i}+\chi_bs_j
 \quad(j\leq n),
 \qquad
 r^{(1-b)}_{n+1,i}
  =r^{(b)}_{n+1,i}-\chi_b,
\]
because passing to the other branch changes $\bv_j$ by $-\chi_bs_j$
for $j\leq n$ and the branch coordinate
$\bv_{n+1}$ by $\chi_b$, and
$\br_{j,1}=\bw_{j,0}-\bv_j$ negates these changes and propagates them
down each column. Equality
of the $\widehat\bz$-part and
$\boldeta_0-\boldeta_1=\be$ similarly gives the $h$-relation, by
telescoping \Cref{eq:re-z} along row $i$. At the last stored path mask,
$j=n$, the parenthesized value is
$(\bA\bs+\be)_i=u_i$; the branch column then contributes
$\bM_{i,n+1}\bigl(r^{(1-b)}_{n+1,i}-r^{(b)}_{n+1,i}\bigr)=-\chi_bu_i$, so the
telescoped difference returns to zero at $j=n+1$, as required by the boundary
condition $\bh_{i,n+1}=0$ on both branches.
\end{proof}

\begin{proof}[of \Cref{lem:decomposition}]
Fix a branch $b$ and abbreviate $\chi=\chi_b$ and
$J=J_{q}$. By \Cref{lem:preimage-differences}, the parity
splits over the register blocks as
\begin{align*}
 \bD\cdot(\boldzeta_0\oplus\boldzeta_1)
 =\ &\bigoplus_{j}\bD^x_j\cdot
   \bigl(J(x_{b,j})\oplus J(x_{b,j}-\chi s_j)\bigr)\\
 {}\oplus\ &\bigoplus_{j\leq n}\bigoplus_i
   \bD^r_{j,i}\cdot
   \bigl(J(r^{(b)}_{j,i})\oplus J(r^{(b)}_{j,i}+\chi s_j)\bigr)
 \ \oplus\ \bigoplus_i \bD^r_{n+1,i}\cdot
   \bigl(J(r^{(b)}_{n+1,i})\oplus
   J(r^{(b)}_{n+1,i}-\chi)\bigr)\\
 {}\oplus\ &\bigoplus_{i,j}\bD^h_{i,j}\cdot
   \bigl(J(h^{(b)}_{i,j})\oplus
   J(h^{(b)}_{i,j}+\chi\,\sigma_{i,j}(\bs))\bigr).
\end{align*}
Since $s_j\in\bits$, the $j$-th $x$-term equals
$s_j\cdot\bD^x_j\cdot(J(x_{b,j})\oplus J(x_{b,j}-\chi))$, and the
$(j,i)$-th $r$-term equals
$s_j\cdot\bD^r_{j,i}\cdot(J(r^{(b)}_{j,i})\oplus
J(r^{(b)}_{j,i}+\chi))$; summing over $j$ and $i$ gives
$\langle\ba_b(\bD),\bs\rangle$ by \Cref{eq:induced-coefficient}. The
branch column and the terminal $h$-column, where
$\sigma_{i,n}(\bs)=u_i$, do not depend on $\bs$ and sum to
$\beta_b(\bD)$ by \Cref{eq:induced-offset}. The remaining $h$-terms
are exactly $\Phi^{\chi}_{\mathcal P_b(\bD)}(\bs)$ by
\Cref{eq:carry-parity}.
\end{proof}

\begin{lemma}[Realizability]
\label{lem:realizability}
For every sign $\chi$, every $\ba\in\bits^{n}$, and every predicate
list $\mathcal P$, there is a
transcript with a direction, computable in polynomial time, whose induced
data at the branch $b$ with $(-1)^b=\chi$ is exactly
$(\ba,0,\mathcal P)$, and whose encoded output is a valid paired image
for every $(\bs,\be)$ in the support of the key generation, except on
the trapdoor-failure event.
\end{lemma}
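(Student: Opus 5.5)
The plan is to build the transcript directly, from the branch-$b$ side, by choosing the source values, masks and direction so that every term of \Cref{eq:induced-coefficient,eq:induced-offset} is either zero or exactly what we want. Fix $\chi$ and let $b$ satisfy $(-1)^b=\chi$.

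\textbf{Step 1: choose the branch-$b$ source and the encoded output.} Take $\bx_b$ with every coordinate equal to $\lfloor Q/2\rfloor$. Since $\bs\in\bits^n$, the other branch $\bx_{1-b}=\bx_b-\chi\bs$ still has all coordinates in $\{1,\ldots,Q-2\}$, so $\bx_0\in X_0$ and $\bx_1\in X_1$ hold for every secret. Take $\boldeta_b=\mathbf 0$, which lies in $E^m$. The shifted noise $\boldeta_{1-b}=\boldeta_b\pm\be$ has centered coordinates of magnitude at most $B_V<W/2$, so it also lies in $E^m$ for every $\be$ in the support; this uses $B_V/W=\negl$ from \Cref{eq:encoded-scales}.

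For the masks, set $\br^{(b)}=\mathbf 0$ and $\bh^{(b)}_{i,j}=0$ by default. For each predicate $(i,j,\xi,\bd)\in\mathcal P$, overwrite $\bh^{(b)}_{i,j}=\xi$. This is allowed because the $\bh$-masks are free inputs: each index pair appears at most once in $\mathcal P$, and $j\le n-1$, so the boundary values $\bh_{i,0}=\bh_{i,n+1}=0$ are untouched. Then output $\widehat\by=\widehat F_k(b,\bx_b,\boldeta_b;\br^{(b)},\bh^{(b)})$. Every quantity here is public or chosen, so this is polynomial-time computable.

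\textbf{Step 2: choose the direction.} Put $\bD^x=\mathbf 0$ and $\bD^r_{n+1,\cdot}=\mathbf 0$. Put $\bD^h_{i,n}=\mathbf 0$ for all $i$. For $j\le n-1$, set $\bD^h_{i,j}=\bd$ when $(i,j,\xi,\bd)\in\mathcal P$ and $\mathbf 0$ otherwise.

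For the linear part, set $\bD^r_{j,i}=\mathbf 0$ for $i\ge2$. The string $J_q(0)\oplus J_q(\chi)$ is nonzero: it equals $J_q(1)$ for $\chi=+1$ and $J_q(q-1)$ for $\chi=-1$. Let $\mathbf e^\ast$ be a unit vector on one of its set bits. Set $\bD^r_{j,1}=a_j\,\mathbf e^\ast$.

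Reading off \Cref{eq:induced-coefficient} with $r^{(b)}_{j,1}=0$ gives $\ba_b(\bD)=\ba$. Reading off \Cref{eq:induced-offset} gives $\beta_b(\bD)=0$, because all the $\bD$-blocks it involves vanish. The predicate list $\mathcal P_b(\bD)$ collects exactly the tuples $(i,j,h^{(b)}_{i,j},\bD^h_{i,j})$ with $\bD^h_{i,j}\ne\mathbf 0$ and $j\le n-1$, which by construction is $\mathcal P$.

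\textbf{Step 3: show $\widehat\by$ is a valid paired image.} This is the step I expect to need the most care, because it must hold for every $(\bs,\be)$ in the support of key generation.

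On the complement of the trapdoor-failure event, \Cref{thm:trapgen} at dimension $n$ makes $\TrapInv(t_\bA,\by)$ and $\TrapInv(t_\bA,\by-\bu)$ return the unique short-noise preimages. Here the relevant noise bounds are $\|\centerq(\boldeta)\|_2\le\sqrt m\,W/2<r_G<r_n$, exactly as in the completeness argument of \Cref{thm:encoded-formal}. Hence the two candidates are $(\bx_0,\boldeta_0)$ and $(\bx_1,\boldeta_1)$, with $\bx_1=\bx_0-\bs$ and $\boldeta_0=\boldeta_1+\be$.

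It remains to check the two validity relations \Cref{eq:encoded-valid-relation}. For branch $b$, $\mathsf{Rec}_k$ succeeds by construction, since $\widehat\by$ is an honest encoding. For branch $1-b$, I invoke \Cref{lem:noisy-re-privacy}: both sources decode to the same $\by$ and have noise in $E^m$, so $\widehat\by$ lies in the support of the other source's encoding. Randomness reconstruction then returns the unique coins, with the final-coordinate checks passing, as the telescoping computation in \Cref{lem:preimage-differences} confirms.

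The only subtle points are these three:

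1. The domain restrictions $\bx_b\in X_b$ hold uniformly in $\bs$, which the midpoint choice guarantees.
2. The shifted noise stays inside $E^m$ for every $\be$, which $\boldeta_b=\mathbf 0$ guarantees.
3. The chosen path masks are arbitrary field elements, so overwriting $\bh^{(b)}_{i,j}$ does not break reconstruction.
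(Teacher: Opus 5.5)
Your construction is correct and is essentially the paper's: encode honestly from branch $b$ with $\boldeta=\mathbf 0$, $\br=\mathbf 0$ and $\bh_{i,j}=\xi$ on the listed pairs; put $\bD^h_{i,j}=\bd$ on those blocks and zero on the terminal column; and use $B_V<W/2$ together with trapdoor uniqueness to get a valid paired image. The only differences are cosmetic. You take the midpoint $\lfloor Q/2\rfloor\mathbf 1$ instead of $\mathbf 1$ or $(Q-2)\mathbf 1$, and you place the linear coefficient in the $\bD^r_{j,1}$ blocks via the fixed nonzero word $J_q(\chi)$, where the paper uses $\bD^x_j$ with $\bD^r=\mathbf 0$; your Step~3 just spells out what the paper calls holding ``by construction''.
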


\begin{proof}
Take $b$ with $(-1)^b=\chi$, $\bx=\mathbf 1$ if $b=0$ and
$\bx=(Q-2)\mathbf 1$ if $b=1$, $\boldeta=\mathbf 0$,
$\br=\mathbf 0$, $\bh_{i,j}=\xi$ for each
$(i,j,\xi,\bd)\in\mathcal P$ and $\bh_{i,j}=0$ otherwise, and encode
honestly. Outside the trapdoor-failure event, the checks of
\Cref{def:paired-images} succeed for every $(\bs,\be)$ in the support: the
two relations hold by construction, $\bx$ and $\bx-\chi\bs$ lie in the two
branch domains, and $\boldeta=\mathbf 0$ and $\boldeta-\chi\be$ lie in
$E^{m}$ because $B_V<W/2$. Choose $\bD^h_{i,j}=\bd$ on the
listed blocks and zero elsewhere, in particular on the terminal column;
$\bD^r=\mathbf 0$; and, for each $j$, a block $\bD^x_j$ with
$\bD^x_j\cdot(J_{q}(x_j)\oplus
J_{q}(x_j-\chi))=a_j$, which is solvable for both bit values
because the word in parentheses is nonzero. With this support $\beta_b(\bD)=0$, and
the induced data is $(\ba,0,\mathcal P)$.
\end{proof}

\subsection{Polynomial-time checkability and pointwise density}

The bits $S_j$, $\delta_{i,j}(\rho)$, and $\Gamma^{V}_{i,k}$ of
\Cref{def:admissible} are computed from the secret $\bs$, the private
shifts $\boldsymbol\varpi$, the public matrix and key vector, and the pair
under test. There are $O\bigl(mn(L+N_{\rm sh})\bigr)$ of them, each
computed by polynomially many modular operations, canonical encodings, and
binary inner products; every interval vector is a difference of two prefix
XORs of the $\delta_{i,j}(\rho)$, the regularity condition depends only on
$\bA$, and the rank conditions are $O(L+N_{\rm sh})$ Gaussian
eliminations on binary matrices with at most $m$ rows and $n$ columns.
Hence even the direct implementation of $\Adm$ is deterministic polynomial
time.

We also verify the regular-key event used below.  First replace the
public marginal of $\TrapGen$ by a uniform matrix; this costs
$2^{-\Omega(n)}$ in statistical distance.  For a fixed row, window
$V\in\{[n-1],\mathcal J\}$, and secret half $h$, the set of candidate
indices $\{k\text{ in half }h:k-1\in V\}$ has size at least
$n/2-3L$ and hence at least $3n/8$ for all sufficiently large parameters.
If $|E^V_{i,h}|<\lfloor n/4\rfloor$, at least $n/8-O(1)$ of the
corresponding independent uniform entries of $\bA$ are zero.  Therefore,
for a fixed $(i,V,h)$, this event has probability at most
\[
 2^{n/2}q^{-n/8+O(1)}=2^{-\Omega(nL)}.
\]
Likewise, for a fixed column, the probability that more than $m/8$ of its
entries are zero is at most
$2^m q^{-m/8}=2^{-\Omega(mL)}$.  A union bound over all rows, windows,
halves, and columns shows that a uniform matrix is irregular with
probability $2^{-\Omega(nL)}$.  Restoring the trapdoor-generator marginal,
the regularity-failure probability is thus $2^{-\Omega(n)}$, as claimed in
\Cref{prop:encoded-good-density}.

It remains to prove density. If $U_\nu$ is uniform in $\bits^\nu$, then a
Hoeffding bound (deviation $\nu/6$ from the mean $\nu/2$) gives
\begin{equation}
 \Pr[U_\nu\text{ is not balanced}]
 \leq 2\exp(-\nu/18).
 \label{eq:balanced-tail}
\end{equation}
Fix a valid paired image and a branch $b$, and consider the induced data
of a uniform direction $\bD$. We show that every vector tested by the
sensitivity condition at $(\ba_b(\bD),\mathcal P_b(\bD))$ is individually
uniform. Joint independence between these vectors is neither claimed nor
needed.

Fix $j$. Conditional on all direction blocks other than $\bD^x_j$,
the bit $\ba_b(\bD)_j$ contains the term
\[
 \bD^x_j\cdot
 \bigl(J_{q}(x_{b,j})
       \oplus J_{q}(x_{b,j}-\chi_b)\bigr).
\]
The word in parentheses is nonzero because $x_{b,j}-\chi_b\neq x_{b,j}$
in $\mathbb Z_{q}$ and canonical encoding is injective. Distinct coordinates use disjoint
$\bD^x_j$-blocks. Consequently
$\ba_b(\bD)$ is exactly uniform.

The directional derivatives $S_j$ of \Cref{eq:adm-secret-profile} satisfy
$S_j=\ba_b(\bD)_j\oplus
\bigl(\Phi^{\chi_b}_{\mathcal P_b(\bD)}(\bs)\oplus
\Phi^{\chi_b}_{\mathcal P_b(\bD)}(\bs^{(j)})\bigr)$, and the
$\Phi$-difference does not involve $\bD^x$. The same block
$\bD^x_j$ therefore makes the complete vector
$(S_j)_{j\in[n]}$ uniform after conditioning on all
$r$- and $h$-blocks.

Next fix a row $i$, a nonzero shift $\rho\in\mathcal R_i=\mathcal
R\cup\{\varpi_{i,1},\ldots,\varpi_{i,N_{\rm sh}}\}$, and a nonempty
interval $I\subseteq[n-1]$ with largest element $j_I$. Writing
$w=h^{(b)}_{i,j_I}+\chi_b\sigma_{i,j_I}(\bs)$, the bit
$\Delta^{I}_{\rho,i}$ contains the term
\[
 \bD^h_{i,j_I}\cdot\bigl(
 J_{q}(w)
 \oplus J_{q}(w+\chi_b\rho)\bigr),
\]
including the case $\bD^h_{i,j_I}=\mathbf 0$, in which the pair
$(i,j_I)$ is simply not listed in $\mathcal P_b(\bD)$ and contributes
zero. The coefficient word is nonzero because $\rho\not\equiv0\pmod q$.
Conditioning on every block except $\bD^h_{i,j_I}$ therefore makes
$\Delta^{I}_{\rho,i}$ uniform, and different rows use disjoint
$\bD^h$-blocks, so each tested interval vector is exactly uniform.

For the gradients, fix $i$, a window $V$, and a half $h$, and order
$E^{V}_{i,h}$ increasingly. Expanding $\sigma_{i,j}$,
\[
 \Gamma^{V}_{i,k}
 =\bigoplus_{\substack{j\in V\\ j<k}}\bD^h_{i,j}\cdot
  \bigl(J_{q}(w_{i,j})\oplus
        J_{q}(w_{i,j}+\chi_b(2s_k-1)\bA_{i,k})\bigr),
 \qquad w_{i,j}=h^{(b)}_{i,j}+\chi_b\sigma_{i,j}(\bs),
\]
and the coefficient at $j=k-1$ is nonzero whenever $\bA_{i,k}\neq0$.
The block $\bD^h_{i,k-1}$ occurs in no $\Gamma^{V}_{i,k'}$ with
$k'<k$, so the map from the direction to the string
$(\Gamma^{V}_{i,k})_{k\in E^{V}_{i,h}}$ is block triangular with nonzero
pivots, and that string is exactly uniform.

For the rank conditions, each entry $\delta_{i,j}(\rho)$ of a derivative
matrix is the inner product of the fresh uniform block $\bD^h_{i,j}$ with
a nonzero word, so the entries are independent fair bits and a fixed row
half fails to have rank $n-1$ with probability at most
$2^{\,n-1-m_0}$. For a gradient matrix, fix a nonzero combination
$\mathbf c$ of the columns permitted for the half $h$. On a regular key, at
most $m/8$ rows have $\bA_{i,k}=0$ at every $k$ in the support of
$\mathbf c$; every other row $i$ contains, at the largest supported
$k\in E^{V}_{i,h}$, the fresh pivot $\bD^h_{i,k-1}$, so the bit
$\bigoplus_k c_k\Gamma^{V}_{i,k}$ is uniform. Hence the combination
vanishes on a row half with probability at most $2^{\,m/8-m_0}$, and a
union bound over the at most $2^{n/2}$ combinations gives
$2^{\,n/2+m/8-m_0}$.

Let $\nu=n/2$, put $m_0=\lfloor m/2\rfloor$ and
$m_1=\lceil m/2\rceil$, and let $\mathcal I$ denote the family of
nonempty intervals, $|\mathcal I|=\binom{n-1}{2}+n-1\leq n^2$. Applying
\Cref{eq:balanced-tail}, at each of the two branches, to the two halves
of $\ba_b(\bD)$ and of $(S_j)_{j\in[n]}$, to the two halves of every
tested interval vector, and to every eligible gradient string, whose
length is at least $\lfloor n/4\rfloor$ on a regular key, and adding the
rank failures, gives
\begin{align}
 \Pr[\bD\notin\widehat G_{k,t,\boldzeta_0,\boldzeta_1}]
 &\leq 16\exp(-\nu/18)
 +4|\mathcal I|\bigl(2\lfloor L/4\rfloor+2+N_{\rm sh}\bigr)
 \left(\exp(-m_0/18)+\exp(-m_1/18)\right)
 \nonumber\\
 &\quad{}+16m\exp\bigl(-\lfloor n/4\rfloor/18\bigr)
 +8\bigl(2\lfloor L/4\rfloor+2+N_{\rm sh}\bigr)2^{\,n-1-m_0}
 +16\cdot2^{\,n/2+m/8-m_0}\nonumber\\
 &=2^{-\Omega(n)}.
 \label{eq:good-density-proof}
\end{align}
The first term covers the two halves of $\ba$ and of $S$ at both
branches with the two tails of \Cref{eq:balanced-tail}; the second covers
the interval tests over all intervals, tested shifts, halves, and
branches; the third covers the $m\times2\times2$ gradient strings at both
branches; and the last two cover the rank tests. The total is
$2^{-\Omega(n)}$ because $m=\Theta(nL)$ and
$N_{\rm sh}=\lceil\sqrt n\rceil$. A direct union bound
is valid despite correlations among the tests. On a regular key, failing
the sensitivity condition at some branch is the only way to leave
$\widehat G_{k,t,\boldzeta_0,\boldzeta_1}$, so this proves
\Cref{prop:encoded-good-density} pointwise for every
valid paired image. The honest prover still samples $\bD$ uniformly by
its final Hadamards; it does not conditionally sample from the direction
set, so the predicate adds no quantum depth.

\subsection{Elementary attacks excluded by the balance tests}

Consider the adversary's predicate
\[
 \Psi(\bs')
 =\langle\ba,\bs'\rangle\oplus\Phi^{\chi}_{\mathcal P}(\bs').
\]
Its value at $\bs$ is the target parity, and $S_j$ is by definition the
directional derivative of $\Psi$ at $\bs$ in the direction
$\boldsymbol\delta_j$:
\[
 S_j
 =\Psi(\bs)
  \oplus \Psi(\bs\oplus\boldsymbol\delta_j).
\]
A carry cancellation is a choice of $(\ba,\mathcal P)$ in which the carry
term $\Phi^{\chi}_{\mathcal P}(\bs')$ agrees with $\langle\ba,\bs'\rangle$ up
to a constant, so that $\Psi(\bs)$ is known without knowledge of $\bs$. Any
cancellation that holds at $\bs$ and at all of its Hamming neighbors, so that
$\Psi$ is locally constant at $\bs$, has all directional derivatives zero and
is inadmissible. More generally, a pair whose
predicate has a nonzero derivative in fewer than one third of the
coordinates of either secret half
fails the corresponding balance test. This includes any carry
cancellation involving only a bounded number of secret coordinates, as
well as XOR padding by other locally constant functions.

Once carry predicates are used, the balance test on $\ba$ separately
prevents an attack from setting the
linear coefficient to zero or concentrating it on a few selected
coordinates. The terminal column cannot repair a failed test: its partial
sum is the public $u_i$, so it is excluded from predicate lists outright
and contributes only a known bit to the protocol's parities.

The interval tests address a second form of padding. A predicate list whose
predicates in an interval $I$ meet fewer than one third of the rows in
either half gives vectors $(\Delta^{I}_{\rho,i})_{i\in[m]}$ of weight below the
balanced range at every tested shift; taking $I=[n-1]$, a list
supported on one or a few rows is inadmissible, and no column of
$[n-1]$ escapes inspection. An empty list makes every interval
vector zero and every derivative matrix vanish; the sensitivity condition
fails, and the linearity condition forces
$\mathcal P=\emptyset$.

The interval family is not optional. Consider the variant of
\Cref{def:admissible} that tests only the single window $\mathcal J$ at
the public shifts, which is the definition of an earlier version of this
paper, and the boundary pair $j_0=n-3L\in\mathcal J$,
$j_1=j_0+1\notin\mathcal J$. Placing, in each row independently, a random
equal-mask, equal-base pair of predicates at $j_0,j_1$ cancels the pair's
contribution to the target on the slice $s_{j_1}=0$, of probability
$1/2$, because $\sigma_{i,j_0}(\bs)=\sigma_{i,j_1}(\bs)$ there, while
only the member at $j_0$ enters the tested vector, which is therefore
uniform across rows and passes. Combined with adjacent
least-significant-bit predicates at base points $\xi=-u_i$, whose
secret-dependent argument $\xi+\sigma_{i,j}(\bs)=-\sum_{k>j}
\bA_{i,k}s_k$ is free of the error $e_i$ and, for an odd centered
coefficient $\bA_{i,j}$, reduces to $s_j$ up to a modular-wrap indicator
of small probability, this yields a polynomial-time predictor with
constant bias against that variant. Under the interval family, the
interval $I=[j_0,j_1]$ sees both members, which cancel identically after
every shift, so all padding rows contribute zero and the vector is
unbalanced; and the row-gradient tests reject the terminal variant of the
same attack outright, since a predicate at $j=n-1$ depends only on
$s_n$ and its full-window gradient is supported on a single coordinate.

The private shifts broaden the audit beyond the fixed public
shifts, but they do not by themselves give a detection theorem.  For a
fixed evaluated word $z$, put
\[
 V_z=\operatorname{span}_{\mathbb F_2}
 \{J_q(z)\oplus J_q(z+\chi\rho):\rho\in\mathcal R\}.
\]
The bound $\dim V_z\leq|\mathcal R|<L$ shows only that
$V_z^\perp$ is nontrivial.  A public-audit-preserving perturbation
$\delta\bd\in V_z^\perp$ that also flips the target exists precisely when
$J_q(\xi)\oplus J_q(z)\notin V_z$; the dimension bound alone does not imply
this additional condition.  For any fixed nonzero $\delta\bd$, a uniform
nonzero private shift $\varpi$ changes the corresponding derivative bit
with probability at least $1/4$, by the residue balance guaranteed by
\Cref{eq:encoded-prime-modulus}.  However, the verifier tests only aggregate
balance and rank, rather than comparing that bit with a reference value, so
a changed bit need not cause rejection.  Indeed, if a direction set has
density $1-\epsilon$, then for every fixed direction perturbation
$\Delta$,
\[
 \Pr_{\bD}[\bD\text{ and }\bD\oplus\Delta\text{ are both accepted}]
 \geq 1-2\epsilon.
\]
Thus the private shifts make the audited profiles less amenable to
programming against a small public set of shifts, but excluding a
target-correlated perturbation remains part of
\Cref{ass:carry-ahcb}; no $(3/4)^{N_{\rm sh}}$ rejection bound follows from
the present tests.

The rank conditions address a separate elementary channel left by
contiguous-interval tests.  Column vectors
$\bv,\bv\oplus\bw,\bv$ of single-column derivative bits, with
$\bv,\bw$, and $\bv\oplus\bw$ balanced, pass every contiguous-interval test although the
first and third columns cancel exactly.  Full column rank rules out such
zero combinations in each audited derivative or gradient matrix at the
tested shifts.  It does not, on its own, rule out target correlations or
cancellations not witnessed by those matrices; those possibilities also
remain within the unproved content of the assumption.

The interval tests also exclude cancellation of the error within a row. Two
predicates in row $i$ at columns $j_1<j_2\in\mathcal J$ with the same word
$\bd$, say the low-order bit at base point $\xi=0$, have XOR
$\operatorname{lsb}(\varsigma)\oplus[\sigma_{i,j_1}(\bs)+\varsigma\geq q]$
with
$\varsigma=\sigma_{i,j_2}(\bs)-\sigma_{i,j_1}(\bs)
=\sum_{j_1<j'\leq j_2}\bA_{i,j'}s_{j'}\bmod q$, so $e_i$ enters only through
the wrap indicator, and
$\Delta^{\mathcal J}_{\rho,i}=[\sigma_{i,j_1}(\bs)\geq q-\rho]\oplus[\sigma_{i,j_2}(\bs)\geq q-\rho]$
is nonzero with probability $O(\rho/q)$; rows of this form give weight far
below $m/3$ at every tested shift. Conversely,
$\Delta^{\mathcal J}_{\rho,i}=1$ certifies only that the selected bits in
row $i$ change parity when $u_i$ is shifted by $\rho$.  For the low-bit word
$\bd=\{0,\ldots,\lfloor L/4\rfloor+1\}$ at $\xi=0$, the change probability
depends on the scale: in the ideal uniform low-bit cycle it is $1/2$ at
$a=L/4$, $3/4$ at $a=L/4-1$, and approaches $2/3$ at lower scales (with only
the negligible residue-boundary correction here).  Placing this word on
three quarters of the rows of each half therefore gives expected relative
weights ranging from $3/8$ to $9/16$, inside the balanced interval at every
tested scale.  The tests force a constant fraction of rows to depend on
$e_i$ at each scale, but they do not make that dependence unpredictable.

Adversarial base points expose an additional limitation that is
independent of these audits.  For any public $w_{i,j}\in\mathbb Z_q$, the
allowed choice
\[
 \xi_{i,j}=w_{i,j}-\chi u_i
\]
gives the exact identity
\[
 \xi_{i,j}+\chi\sigma_{i,j}(\bs)
 =w_{i,j}-\chi\sum_{k>j}\bA_{i,k}s_k\pmod q.
\]
Thus the key error $e_i$ cancels completely, leaving a noiseless binary
readout of a public-coefficient subset sum.  Such adaptively aligned base
points are not forbidden by admissibility; for these fixed bases, uniformly
random induced directions still satisfy the tests with the pointwise
probability proved above.
This is not by itself a predictor, because the suffix subset sum remains
hidden, but it shows that neither the key noise nor the shift audits can be
the source of hardness in the fully adaptive game.  The assumption must in
particular cover dense, correlated compositions of these noiseless
subset-sum carry readouts.

Shifts by positive and negative powers of two, together with the
private shifts $\varpi_{i,\tau}$, probe carry behavior at
every scale from one through the key-error scale $B_V$. A padding
strategy that passes the sensitivity condition must therefore spread
sensitivity of the selected bits over a constant fraction of the LWE
rows at every one of these
scales, in addition to spreading its secret-coordinate sensitivity over both
secret halves. The preceding arguments show only that terminal or locally
constant padding cannot repair these failures. Whether correlated
many-row padding can retain a bias is left to \Cref{ass:carry-ahcb}.

The vector $(S_j)_{j\in[n]}$ equals $(\Psi(\bs^{(j)}))_{j\in[n]}$ or its
bitwise complement according to the value of the target parity $\Psi(\bs)$.
Since the balance interval is invariant under complementation, the test on
$(S_j)_{j\in[n]}$ depends only on the values of $\Psi$ at the Hamming
neighbors of $\bs$ and not on the target parity itself. Nevertheless, the tests certify only
local sensitivity at the secret $\bs$. A many-row predicate could
in principle have balanced derivatives and remain predictable on the LWE
distribution, and the tests, public and private together, do not enumerate every possible
shift. Excluding such correlated predicates remains part of the
unproved content of \Cref{ass:carry-ahcb}.

%% file: sections/appendix_encoded_ahcb_evidence.tex
\section{\texorpdfstring{Partial Evidence for the Carry-Predicate Assumption}{Partial Evidence for the Carry-Predicate Assumption}}
\label{app:encoded-ahcb-evidence}

\Cref{ass:carry-ahcb} allows an adversary to choose the linear function,
the predicate list, and its base points jointly, after seeing the public
key. We do not know how to reduce this fully adaptive game to the
ordinary BCMVV adaptive-hardcore-bit property. In this appendix, we prove
three restricted results. First, key-mode indistinguishability rules out a
predictor that commits to its base points and can then be rerun from the
same saved state on independent uniform challenges. Second, the
ordinary adaptive-hardcore-bit property rules out
a sampler that can be repeatedly invoked from one saved state, produces
overwhelmingly correct equations, and chooses challenges of sufficiently
high conditional min-entropy. Third, every adversary restricted to empty
predicate lists has negligible advantage, with full adaptivity; the
case $\mathcal P=\emptyset$ of \Cref{ass:carry-ahcb}, in which
admissibility reduces to the linearity condition of \Cref{def:admissible},
therefore follows from \Cref{ass:encoded-prime-lwe} at the security level
of \Cref{prop:encoded-bcmvv}.
This establishes negligible advantage for the empty-list
case; it does not establish the stronger $2^{-c\sqrt n}$ advantage bound
postulated in \Cref{ass:carry-ahcb}.

Recall from \Cref{sec:encoded-parameters} the security parameter
$\lambda$, the prime modulus $q$, its bit length
$L=\lceil\log_2 q\rceil=\Theta(\log^2\lambda)$, the dimension $n$ of the
binary LWE secret, and the number $m$ of LWE samples. The
reductions rest on \Cref{prop:encoded-bcmvv}, which supplies key-mode
indistinguishability and the ordinary adaptive-hardcore-bit bound for
the decoded family under
\Cref{ass:encoded-prime-lwe}. Here and below, the
\emph{ambient decoded family} denotes the maps
$(\bx,\boldeta)\mapsto\bA\bx+b\bu+\boldeta$ on
$\mathbb Z_q^n\times E^m$, indexed by $b\in\bits$.  For the ordinary
hardcore-bit statement we use the ambient domain $\mathbb Z_q^n$ of the
BCMVV relation.  This family extends the maps used by the protocol, which
restricts branch $b$ to $X_b$; that protocol restriction is not imposed
inside the ordinary BCMVV relation. Its ordinary
adaptive-hardcore-bit game is the BCMVV game with the sets
$H^{\rm dec}_{k,t}$ and $\overline H^{\rm dec}_{k,t}$ of
\Cref{app:decoded-relation}. Conditions~A.1--A.4 of BCMVV hold at our
parameters as stated, without relaxation (verified in
\Cref{app:lwe-assumption}). The reductions use the canonical map $J_{q}$
of \Cref{sec:preliminaries} together with bitwise XOR and linear algebra
over $\mathbb F_2$ on its output.

\subsection{The LWE assumption}
\label{app:lwe-assumption}

Set
\[
 \ell=\left\lceil\frac{\lambda^2}{L}\right\rceil,
 \qquad
 B_L=\left\lceil2\sqrt\ell\right\rceil,
\]
so that $\ell=\Theta(\lambda^2/\log^2\lambda)$ and, by
\Cref{eq:encoded-dimensions}, $n\geq c_1(\ell L+\lambda)$.

As in~\cite{BCMVV2021}, the construction of
\Cref{sec:encoded-construction} uses an LWE instance whose secret is
\emph{binary}. This simply means that the entries of the secret are $0$
and $1$, though it is still viewed as a vector over $\mathbb Z_{q}$; its
length is the parameter $n$. The assumption below instead has a secret
uniform over $\mathbb Z_{q}^{\ell}$: by the lossy-mode argument of BCMVV
(their Theorem~2.8 and the proof of their Lemma~4.4), LWE with a uniformly
random binary secret of length $n=\Omega(\ell\log q+\lambda)$ is at least
as hard as LWE at dimension $\ell$ with a uniform secret, which is why the
assumption is stated at dimension $\ell$.
We use the truncated
discrete Gaussian $D_{\mathbb Z_{q},B_L}$ of
BCMVV~\cite{BCMVV2021}: its weight at a centered representative $z$ is
proportional to
$\exp(-\pi z^2/B_L^2)$ for
$|z|\leq B_L$, and is zero otherwise.

\begin{assumption}[Subexponential prime-modulus LWE, based on
\cite{regev,BCMVV2021}]
\label{ass:encoded-prime-lwe}
There is a fixed constant $1/2<\delta_{\rm LWE}<1$ such that, for every
polynomially bounded sample count $m'$, no uniform quantum algorithm of
time at most $2^{\ell ^{\delta_{\rm LWE}}}$ distinguishes
\[
 (\bA,\bA\bs+\be)
 \quad\text{from}\quad
 (\bA,\bu)
\]
with advantage that is non-negligible as a function of $\lambda$, where
$\bA\leftarrow
 \mathbb Z_{q}^{m'\times \ell }$,
$\bs\leftarrow\mathbb Z_{q}^{\ell }$,
$\be\leftarrow
 D_{\mathbb Z_{q},B_L}^{m'}$, and
$\bu\leftarrow\mathbb Z_{q}^{m'}$.
\end{assumption}

We apply it with $m'=m$.
This is a parameter-specific, fine-grained LWE assumption; worst-case LWE
reductions do not by themselves establish its precise time--advantage
curve. The modulus has quasipolynomial magnitude, and the error
satisfies the usual
$B_L\geq2\sqrt{\ell }$ threshold. Regev's
quantum worst-case-to-average-case reduction therefore applies and relates
this ensemble to worst-case lattice problems at approximation factor
\[
 \widetilde O\!\left(
   \frac{\ell q}{B_L}
 \right)
 =2^{O(\log^2\ell )}
\]
\cite{regev}. This places the assumption in the standard LWE regime at a
quasipolynomial approximation factor, although the reduction does not prove
the asserted subexponential running-time hardness. The
standard dual-lattice attack estimate has exponent
\[
 \widetilde\Theta\!\left(
  \frac{\ell }
       {\log(q/B_L)}
 \right)
 =\widetilde\Theta(\lambda^{2}),
\]
which is larger than
$\ell ^{\delta_{\rm LWE}}$~\cite[Remark~4.2]{BCMVV2021}.
Thus this attack does not contradict the assumption, but the comparison is a
parameter check rather than a proof of hardness.

The recent quasipolynomial algorithm for extrapolated dihedral coset
problems (EDCP) of Bai et al.~\cite{BaiEtAl2025EDCP} is also consistent with this
assumption. Their algorithm is specific to power-of-two moduli, whereas
$q$ is prime. Moreover, at our quasipolynomial modulus its
state requirement is
$2^{\Omega(\log \ell \log q)}
 =2^{\Omega(\log^3\ell )}$, whereas the applicable LWE-to-EDCP
reduction supplies at most $2^{O(\log^2\ell )}$ states. It
therefore does not yield an attack on the LWE ensemble used here.

We now check the hypotheses of the original BCMVV theorem for our
construction, rather than use a quantitative relaxation of them.
Recall also the equation-key Gaussian
width $B_V$ from \Cref{sec:encoded-parameters}, and let
\[
 B_P
 =\frac{q}
   {2C_T\sqrt{mn\log_2q}}
\]
be the BCMVV function-noise scale. The dimensions satisfy
\[
 n
 =\Omega(\ell \log q+\lambda),
 \qquad
 m=\Omega(n\log q).
\]
Moreover, since
$L=\Theta(\log^2\lambda)$,
$B_L=\poly(\lambda)$, and
$B_V=2^{L/4}$,
\begin{equation}
 \frac{B_V}{B_L}
 =2^{\Theta(\log^2\lambda)},
 \qquad
 \frac{B_P}{B_V}
 =2^{\Theta(\log^2\lambda)}.
 \label{eq:encoded-bcmvv-superpoly-gaps}
\end{equation}
Both ratios are superpolynomial in $\lambda$. Thus Conditions~A.1--A.4
of BCMVV~\cite{BCMVV2021}, namely the two dimension conditions, the
definition of $B_P$, and the superpolynomial noise gaps, hold at our
parameters, and $q$ is prime as their construction requires. In particular, the
statistical losses from the overlap of the two function distributions
(their Lemma~2.4) and from the error-shifting step in the lossy-matrix
hybrid of their Lemma~4.4 are negligible.

\begin{proposition}[Prime-modulus BCMVV consequences]
\label{prop:encoded-bcmvv}
Under \Cref{ass:encoded-prime-lwe}, both of the following advantages are
negligible for every uniform quantum polynomial-time algorithm:
\begin{enumerate}
 \item the advantage in distinguishing the equation key
 $(\bA,\bA\bs+\be)$, with
 $\bs\leftarrow\bits^{n}$, from the image-key distribution;
 \item the prediction advantage in the ordinary BCMVV adaptive-hardcore-bit game
 for the ambient decoded prime-modulus family, made precise in
 \Cref{app:decoded-relation}.
\end{enumerate}
The same conclusion holds for adversaries of time
$2^{\ell ^{\delta'}}$ for every fixed
$0<\delta'<\delta_{\rm LWE}$. The reductions have polynomial running-time
overhead.
\end{proposition}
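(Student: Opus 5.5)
The plan is to derive both items from the BCMVV machinery, applied to our prime-modulus ensemble; Conditions~A.1--A.4 were verified above. The work is to transport their reductions from the uniform-secret assumption at dimension $\ell$ to the binary-secret, dimension-$n$ objects used here, tracking running times so that the subexponential version also follows.

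For item~1, I would use the lossy-mode hybrid of BCMVV (their Theorem~2.8 and Lemma~4.4). First, invoke \Cref{ass:encoded-prime-lwe} to replace the uniform $\bA\in\Zq^{m\times n}$ by a lossy matrix $\bA=\mathbf B\mathbf C+\mathbf F$ with $\mathbf B\in\Zq^{m\times\ell}$ and short $\mathbf F$. Then $\bA\bs+\be=\mathbf B(\mathbf C\bs)+(\mathbf F\bs+\be)$. The superpolynomial gap $B_V/B_L$ from \Cref{eq:encoded-bcmvv-superpoly-gaps} lets the Gaussian $\be$ statistically absorb $\mathbf F\bs$. Since $n=\Omega(\ell\log q+\lambda)$, the leftover hash lemma makes $\mathbf C\bs$ statistically close to uniform in $\Zq^{\ell}$ for binary $\bs$. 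A second invocation of \Cref{ass:encoded-prime-lwe} at dimension $\ell$ then makes $\mathbf B\mathbf C\bs+\be$ indistinguishable from uniform. Undoing the lossy step gives uniform $(\bA,\bu)$. Finally, I would replace the image key by \TrapGen's output at cost $2^{-\Omega(n)}$ via \Cref{thm:trapgen}, and the truncated-Gaussian sampler by its ideal version at the statistical cost already recorded.

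For item~2, I would instantiate the BCMVV adaptive-hardcore-bit theorem for the ambient decoded family $(\bx,\boldeta)\mapsto\bA\bx+b\bu+\boldeta$ on $\Zq^n\times E^m$. The distributional and injectivity hypotheses of their proof are what require checking. The uniform noise on the power-of-two interval $E$ replaces their truncated Gaussian of width $B_P$. I would therefore recheck their overlap lemma (Lemma~2.4) in the form $\|f_0-f_1\|_{TV}\le mB_V/W=\negl(\lambda)$ from \Cref{eq:encoded-scales}. The trapdoor inversion radius also needs rechecking, using $W\le P$ as in \Cref{lem:encoded-image-extraction}. After that, the BCMVV proof goes through verbatim. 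It has two steps. First, a predictor for $\bd\cdot(J(\bx_0)\oplus J(\bx_1))$ with admissible $\bd$ is converted, via their Lemma~4.4 hybrid, into a distinguisher between a key with binary secret $\bs$ and a key whose secret is replaced by an independent binary vector. Second, item~1 makes this distinguishing impossible, using the same lossy argument together with Goldreich--Levin-style hardcore reasoning for $\bs$.

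The main obstacle is this adaptation in item~2. The BCMVV admissibility sets $G_{k,b,\bx}$ and their injective-invariance step are written for Gaussian function noise and power-of-two-free moduli with specific encodings. I would need to confirm that our canonical $J_q$ on a prime modulus, together with the uniform $E$, preserves their key identity: under a lossy key, the hardcore bit reduces to an inner product with a statistically hidden $\bs$. For the time bound, each reduction makes one call to the adversary plus polynomial work. Hence an adversary of time $2^{\ell^{\delta'}}$ with $\delta'<\delta_{\rm LWE}$ yields an LWE distinguisher of time $2^{\ell^{\delta'}}\poly(\lambda)\le2^{\ell^{\delta_{\rm LWE}}}$ for large $\lambda$, contradicting \Cref{ass:encoded-prime-lwe}.
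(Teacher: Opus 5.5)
Your top-level plan matches the paper's proof: verify Conditions~A.1--A.4, get item~1 from the lossy-matrix hybrid behind BCMVV's Lemma~4.4, get item~2 from their Lemma~4.3, and absorb the polynomial overhead by lowering $\delta_{\rm LWE}$ to $\delta'$. Your item-1 chain is sound; the second LWE call also needs the $B_V/B_L$ gap to flood its error from width $B_L$ to $B_V$, which you have available.

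The gap is your two-step account of item~2, which would not work as written. A predictor can only be scored against the secret of the key it was run on.
\begin{itemize}
 \item If your distinguisher is given $\bs$, then a key with secret $\bs$ and a key with an independent secret are trivially separated by testing whether $\bu-\bA\bs$ is short. Distinguishing them therefore contradicts nothing.
 \item If it is not given $\bs$, it cannot score the predictor at all.
\end{itemize}
Item~1 is a statement about public keys without their secrets, so it applies to neither version. Goldreich--Levin does not close the gap either. It needs many queries on independent uniform parities, whereas here the adversary commits once to an adaptively chosen parity $\hat{\bd}=I_{b,\bx_b}(\bd)$. This is exactly the limitation the paper records in \Cref{prop:encoded-uniform-challenge}.

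The missing ingredient is the lossy-mode leakage bound that the paper's proof cites from BCMVV's Lemma~4.4. It has one computational step and one statistical step.
\begin{itemize}
 \item \emph{Computational.} Swap the uniform $\bA$ for a lossy $\mathbf{BC}+\mathbf F$. In this hybrid the reduction samples $\bs$ and $\be$ itself, so it \emph{can} score the predictor.
 \item \emph{Statistical.} Under the lossy matrix, absorb $\mathbf F\bs$ into $\be$ using the $B_V/B_L$ gap. The key then depends on $\bs$ only through $\mathbf C\bs$. For random $\mathbf C$, with high probability every nonzero parity is simultaneously close to unbiased on every fiber of $\bs\mapsto\mathbf C\bs$. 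This simultaneity is what neutralizes the adaptive choice of $\hat{\bd}$.
\end{itemize}
The bound is applied to the branch-$b$ half of $\bs$. On that half, admissibility makes $\hat{\bd}$ nonzero, and the secret-dependent opposite-branch test does not depend on it. Your closing sentence names this mechanism, but your two steps do not implement it.

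Separately, the function-noise checks you call the main obstacle are unnecessary. The relation $H^{\rm dec}_{k,t}$ of \Cref{app:decoded-relation} involves only $(b,\bx_b,\bd,c)$ and $\bs$, and contains no function-noise sample. So none of the following enters:
\begin{itemize}
 \item the uniform noise on $E$ and the overlap bound $mB_V/W$;
 \item the inversion radius via $W\le P$;
 \item injective invariance.
\end{itemize}
This is the paper's parenthetical remark. The prime-modulus encoding $J_q$ matters only through \Cref{eq:decoded-linearization}, which uses just that $\bs$ is binary.
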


\begin{proof}
The preceding parameter check verifies Conditions~A.1--A.4 of
BCMVV~\cite{BCMVV2021}. Key-mode indistinguishability follows from the
hybrid in the proof of \cite[Lemma~4.4]{BCMVV2021}, which replaces the
public matrix by a lossy matrix (their Theorem~2.8) and shifts the
error; the same argument appears in Mahadev~\cite{Mahadev2018}. The
adaptive-hardcore-bit lemma, \cite[Lemma~4.3]{BCMVV2021}, proved from
the leakage bound of their Lemma~4.4, gives the ordinary
adaptive-hardcore-bit property. This property is applied
to the ambient relation of \Cref{app:decoded-relation}, without a hidden
test of membership in the protocol's truncated partner domain. (The
ordinary game contains no function-noise sample, so its statement
applies to our family even though our function noise is uniform rather
than the BCMVV Gaussian at scale $B_P$.) All
statistical hybrids are negligible: the noise-overlap and error-shifting
losses by \Cref{eq:encoded-bcmvv-superpoly-gaps}, and the finite-precision
key-sampling error by the sampler of \Cref{sec:encoded-parameters}, whose
statistical distance from $D_{\mathbb Z_{q},B_V}$ is $2^{-\Omega(n)}$;
every computational hybrid reduces with polynomial overhead to
\Cref{ass:encoded-prime-lwe}. For the
subexponential statement, such overhead is absorbed by replacing
$\delta_{\rm LWE}$ with any fixed smaller exponent $\delta'$.
\end{proof}

This proposition concerns only the decoded family. It does not establish
\Cref{ass:carry-ahcb}, which also admits carry bits of the partial sums of
the LWE equations. We use it as a standard-LWE basis for
the partial evidence in this appendix, rather than as a hypothesis of
the main result.

\subsection{The decoded relation}
\label{app:decoded-relation}

We now make the ordinary adaptive-hardcore-bit game for the decoded
family precise; it is the target of the reductions below. Put
\[
 N_X=nL,
\]
and, for $b\in\bits$, $\bx\in\mathbb Z_{q}^{n}$,
and $\bd\in\bits^{N_X}$ parsed as $n$ blocks of
$L$ bits, define the induced coefficient
\[
 I_{b,\bx}(\bd)_j
 =\bd_j\cdot
  \bigl(J_{q}(x_j)\oplus
  J_{q}(x_j-(-1)^b)\bigr),
 \qquad j\in[n].
\]
Following BCMVV~\cite{BCMVV2021}, the direction set $G_{k,b,\bx}$
consists of the $\bd$ for which $I_{b,\bx}(\bd)$ is nonzero on the
branch-$b$ half of $[n]$: the first $n/2$
coordinates for $b=0$ and the last $n/2$ for $b=1$.
Because the secret is binary, for the intended pair with
$\bx_{1-b}=\bx_b-(-1)^b\bs$ we have the pointwise identity
\begin{equation}
 \bd\cdot\bigl(J_{q}(\bx_0)\oplus
 J_{q}(\bx_1)\bigr)
 =\langle I_{b,\bx_b}(\bd),\bs\rangle.
 \label{eq:decoded-linearization}
\end{equation}
In the ordinary BCMVV adaptive-hardcore-bit game for the decoded
family, the adversary outputs a branch $b$, a preimage
$\bx_b\in\mathbb Z_q^n$, a
direction $\bd\in\bits^{N_X}$, and a bit; write
$\bx_{1-b}=\bx_b-(-1)^b\bs$ for the partner preimage.
Define $H^{\rm dec}_{k,t}$ to contain the tuples
\[
 \left(b,\bx_b,\bd,
   \bd\cdot
   \bigl(J_{q}(\bx_0)
         \oplus J_{q}(\bx_1)\bigr)\right)
\]
with $\bx_b\in\mathbb Z_q^n$ and
$\bd\in G_{k,0,\bx_0}\cap G_{k,1,\bx_1}$, and let
$\overline H^{\rm dec}_{k,t}$ be obtained by flipping the last bit. The
distinguished failure symbol $\bot$, as well as every malformed output, is
by convention in neither set. Item~2 of \Cref{prop:encoded-bcmvv} says that
every efficient algorithm $\mathcal B$ has negligible prediction advantage:
\[
 \left|
  \Pr_{(k,t)\leftarrow\Gen_{\Eq}(1^\lambda)}
       [\mathcal B(k)\in H^{\rm dec}_{k,t}]
  -
  \Pr_{(k,t)\leftarrow\Gen_{\Eq}(1^\lambda)}
       [\mathcal B(k)\in\overline H^{\rm dec}_{k,t}]
 \right|
 \leq\negl(\lambda).
\]
It is essential here not to intersect the relation with
the secret-dependent event $\bx_{1-b}\in X_{1-b}$.  Such an intersection
does not inherit the BCMVV adaptive-hardcore-bit security guarantee.  For
example, take $b=0$,
choose $j^\star$ in the first half, set $x_{j^\star}=Q-1$ and
$x_j=1$ for $j\ne j^\star$, sample $\bd$ from the fiber
$I_{0,\bx}(\bd)=\boldsymbol\delta_{j^\star}$, where
$\boldsymbol\delta_{j^\star}$ is the corresponding standard basis
vector, and output $1$.  The partner lies in
$X_1$ exactly when $s_{j^\star}=1$; on that event the predicted bit is
correct, and the opposite-branch direction test fails only with
probability $2^{-\Omega(n)}$.  The artificially truncated relation would
therefore give this adversary prediction advantage $1/2-2^{-\Omega(n)}$.

The second and third results instead use the following
safe-fiber procedure.  Its selected points lie in both protocol branch
domains for every binary secret, although the ambient BCMVV relation
itself does not require this extra property.

\begin{lemma}[Fiber realization]
\label{lem:fiber-realization}
There is a polynomial-time procedure that, given $\chi\in\{\pm1\}$ and
$\boldsymbol\phi\in\bits^{n}$, outputs $(b,\bx,\bd)$ with
$(-1)^b=\chi$, with $\bx\in X_b$ and
$\bx-\chi\bs\in X_{1-b}$ for every $\bs$ in the support of
the key generation, where we write $\bx_b=\bx$ and
$\bx_{1-b}=\bx-\chi\bs$, and with $\bd$ uniform in
$\{\bd:I_{b,\bx}(\bd)=\boldsymbol\phi\}$; in particular, pointwise,
\[
 \bd\cdot\bigl(J_{q}(\bx_0)\oplus
 J_{q}(\bx_1)\bigr)
 =\langle\boldsymbol\phi,\bs\rangle.
\]
Moreover:
\begin{enumerate}
 \item if $\boldsymbol\phi$ is nonzero on the branch-$b$ half, then
  $\bd\in G_{k,b,\bx_b}$ with certainty; and
 \item except with probability $2^{-\Omega(n)}$ over the
  key,
  $\Pr_{\bd}[\bd\notin G_{k,1-b,\bx_{1-b}}]\leq
  2^{-n/8}$.
\end{enumerate}
\end{lemma}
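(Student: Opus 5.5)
Take the realizability construction of \Cref{lem:realizability} as the template, keeping only the $\bx$-block. Given $\chi$, let $b$ satisfy $(-1)^b=\chi$, and set $\bx=\mathbf 1$ if $b=0$ and $\bx=(Q-2)\mathbf 1$ if $b=1$.

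\emph{Domain membership.} For $b=0$ we have $\bx=\mathbf 1\in X_0$, and $\bx-\bs\in\{0,1\}^n\subseteq X_1$ for every binary $\bs$. For $b=1$ we have $(Q-2)\mathbf 1\in X_1$, and $\bx+\bs\in\{Q-2,Q-1\}^n\subseteq X_0$. No wrap modulo $q$ occurs, since $Q\ll q$. Hence both domain conditions hold for every $\bs$ in the support.

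\emph{Fiber sampling.} For each $j$, the word $\mathbf c_j=J_q(x_j)\oplus J_q(x_j-\chi)$ is nonzero because $J_q$ is injective. The constraint $\bd_j\cdot\mathbf c_j=\phi_j$ is therefore one nontrivial affine equation on the $L$-bit block $\bd_j$. Its solution set is a coset of size $2^{L-1}$, which we can sample uniformly by Gaussian elimination. The blocks are independent, so this gives a uniform element of the fiber $\{\bd:I_{b,\bx}(\bd)=\boldsymbol\phi\}$. The pointwise identity is then \Cref{eq:decoded-linearization} applied at branch $b$.

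\emph{Item~1.} This is immediate: $I_{b,\bx}(\bd)=\boldsymbol\phi$ is nonzero on the branch-$b$ half, so $\bd\in G_{k,b,\bx_b}$ by definition.

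\emph{Item~2, the main obstacle.} We must show that $I_{1-b,\bx_{1-b}}(\bd)$ is nonzero on the half designated for branch $1-b$ with probability at least $1-2^{-n/8}$.

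Fix a coordinate $j$ in that half. Its coefficient is $\bd_j\cdot\mathbf c'_j$, where $\mathbf c'_j=J_q(x_{1-b,j})\oplus J_q(x_{1-b,j}+\chi)$. Conditioned on the fiber constraint, $\bd_j$ is uniform on an affine hyperplane of $\bits^L$. So $\bd_j\cdot\mathbf c'_j$ is a uniform bit exactly when $\mathbf c'_j\notin\{\mathbf 0,\mathbf c_j\}$, and it is deterministic otherwise.

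Here $\mathbf c'_j\neq\mathbf 0$ always. The degenerate case $\mathbf c'_j=\mathbf c_j$ happens exactly when $s_j=1$, because then $x_{1-b,j}=x_j-\chi$ and the two flip-words coincide. When $s_j=0$ the two words are distinct: for example, with $x_j=1,\chi=1$ we get $\mathbf c_j=J(1)\oplus J(0)$ but $\mathbf c'_j=J(1)\oplus J(2)$, and the $b=1$ case is checked the same way.

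So on coordinates $j$ in the relevant half with $s_j=0$, the coefficient bits are independent and uniform, and the coefficient vanishes on all of them with probability $2^{-\#\{j:s_j=0\}}$. It remains to count these zeros. The secret $\bs$ is uniform on $\bits^n$, so by Hoeffding the relevant half of size $n/2$ has at least $n/8$ zeros except with probability $2^{-\Omega(n)}$. This yields the bound $2^{-n/8}$ outside a key event of probability $2^{-\Omega(n)}$.

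If some coordinate with $s_j=1$ happens to be forced nonzero (when $\phi_j=1$), that only helps. The care needed is in confirming that the degeneracy $\mathbf c'_j=\mathbf c_j$ occurs only when $s_j=1$, and in handling the two sign cases symmetrically.
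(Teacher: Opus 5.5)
Your proposal is correct and follows the paper's proof essentially step for step. The shared steps are the choice $\bx=\mathbf 1$ or $(Q-2)\mathbf 1$, blockwise uniform sampling on the affine hyperplane $\bd_j\cdot\mathbf c_j=\phi_j$, and, for item~2, the observation that the opposite-branch word equals $\mathbf c_j$ exactly when $s_j=1$ and is otherwise a distinct nonzero (hence linearly independent) word, followed by a Chernoff bound on the zeros of $\bs$ in the relevant half. The only cosmetic difference is in justifying $\mathbf c'_j\neq\mathbf c_j$ when $s_j=0$: the paper derives it in general from $2\chi\not\equiv0\pmod q$ and injectivity of $J_q$, whereas you check it directly for the specific values of $x_j$.
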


\begin{proof}
Take $\bx=\mathbf 1$ for $b=0$ and
$\bx=(Q-2)\mathbf 1$ for $b=1$; both branch domains are then
respected for every binary $\bs$. Independently for each $j$,
sample $\bd_j$ uniformly among the solutions of
$\bd_j\cdot g_j=\phi_j$, where
$g_j=J_{q}(x_j)\oplus
J_{q}(x_j-\chi)\neq\mathbf 0$; both values of
$\phi_j$ have solutions, and the resulting $\bd$ is uniform in the
fiber. The displayed identity is
\Cref{eq:decoded-linearization}, and item 1 is the definition of
$G_{k,b,\bx_b}$.

For item 2, consider the opposite-branch coefficient
$I_{1-b,\bx_{1-b}}(\bd)$. On coordinates with $s_j=1$, the
opposite-branch word equals $g_j$, so the coefficient equals
$\phi_j$. On coordinates with $s_j=0$, the opposite-branch word
is $g'_j=J_{q}(x_j)\oplus
J_{q}(x_j+\chi)$, which is nonzero and distinct from
$g_j$ because $2\chi\not\equiv0\pmod{q}$; over
$\mathbb F_2$, distinct nonzero words are linearly independent, so,
conditioned on $\bd_j\cdot g_j=\phi_j$, the bit
$\bd_j\cdot g'_j$ is uniform. Hence the opposite-branch
coefficient restricted to the zero coordinates of $\bs$ in the
$(1-b)$-half is uniform, and it vanishes on the whole half with
probability at most $2^{-z}$, where $z$ is the number of such
coordinates. Since $\bs$ is uniform, $z\geq n/8$ except
with probability $2^{-\Omega(n)}$ over the key, by a Chernoff
bound.
\end{proof}

\subsection{Uniform independent challenges}
\label{app:uniform-challenges}

Fix a key, a sign $\chi$, and an assignment of base points
$\xi:[m]\times[n-1]\to\mathbb Z_{q}$, and
collect the bits unknown to the adversary into the string
\[
 \mathbf Z=\Bigl(\bs,\ \bigl(J_{q}(\xi_{i,j})\oplus
 J_{q}(\xi_{i,j}+\chi\,\sigma_{i,j}(\bs))\bigr)_{i\in[m],\,
 j\in[n-1]}\Bigr)
 \in\bits^{N_Z},
 \qquad
 N_Z=n+m(n-1)L.
\]
A \emph{challenge} is a pair
$(\ba,\bD^h)\in\bits^{n}\times
(\bits^{L})^{m\times(n-1)}$, with
$\bD^h=(\bD^h_{i,j})_{i\in[m],j\in[n-1]}$. Writing
$\mathcal P(\bD^h,\xi)$ for the predicate list
$\{(i,j,\xi_{i,j},\bD^h_{i,j}):\bD^h_{i,j}\neq\mathbf 0\}$, a challenge is a
parity of $\mathbf Z$:
\[
 \langle(\ba,\bD^h),\mathbf Z\rangle
 =\langle\ba,\bs\rangle\oplus
 \Phi^{\chi}_{\mathcal P(\bD^h,\xi)}(\bs).
\]
Answering a challenge correctly is therefore winning the game of
\Cref{ass:carry-ahcb} at the pair $(\ba,\mathcal P(\bD^h,\xi))$, up to
admissibility; a uniform challenge is admissible except with probability
$2^{-\Omega(n)}$, by \Cref{prop:encoded-good-density} applied
at the branch $b$ with $(-1)^b=\chi$ to the honestly encoded
transcript in the proof of \Cref{lem:realizability}, with coins
$\bh_{i,j}=\xi_{i,j}$ for $j\leq n-1$; since the words
$J_{q}(x_{b,j})\oplus J_{q}(x_{b,j}-\chi)$ are nonzero and the
$\bD^x,\bD^r$ blocks are disjoint from the $\bD^h$ blocks, the pair
$(\ba_b(\bD),\mathcal P_b(\bD))$ of a uniform direction has exactly the
law of $(\ba,\mathcal P(\bD^h,\xi))$ for a uniform challenge.

\begin{definition}[Resettable uniform-challenge predictor]
\label{def:uniform-challenge-predictor}
A \emph{resettable uniform-challenge predictor} consists of two classical
algorithms
$\mathcal A=(\mathcal A_{\rm com},\mathcal A_{\rm pred})$.
On input an equation key $k$, the first algorithm outputs a sign
$\chi$, an assignment of base points $\xi$, and a classical snapshot
$\mathsf{st}$ that can be copied and restored. Given $\mathsf{st}$ and a
challenge $(\ba,\bD^h)$, a fresh invocation of $\mathcal A_{\rm pred}$,
with fresh random coins $\varrho$, outputs one bit. Let
$\mathsf{tr}=(k,t,\chi,\xi,\mathsf{st})$ denote the complete first-stage transcript
and let $\mathbf Z=\mathbf Z(\mathsf{tr})$ be as above. Define
\begin{equation}
 \mathrm{corr}_{\mathsf{tr}}=
 \mathbb E_{\substack{
      (\ba,\bD^h)\ \text{uniform}\\
      \varrho}}
 \left[
  (-1)^{
   \mathcal A_{\rm pred}(\mathsf{st},\ba,\bD^h;\varrho)
   \oplus \langle(\ba,\bD^h),\mathbf Z\rangle}
 \right],
 \label{eq:uniform-challenge-correlation}
\end{equation}
and the \emph{uniform-challenge advantage}
$\Adv_{\rm uc}(\mathcal A)
 =\left|\mathbb E_{\mathsf{tr}}[\mathrm{corr}_{\mathsf{tr}}]\right|$.
\end{definition}

\begin{proposition}[Uniform challenges]
\label{prop:encoded-uniform-challenge}
Let $\mathcal A=(\mathcal A_{\rm com},\mathcal A_{\rm pred})$ be a
resettable uniform-challenge predictor
(\Cref{def:uniform-challenge-predictor}) with stage running times
$T_{\rm com},T_{\rm pred}$ and $\Adv_{\rm uc}(\mathcal A)\geq\mu>0$ for a
known $\mu$. Then there is a distinguisher between
the equation- and image-key modes of running time
\[
 T_{\rm com}+
 \poly(N_Z,1/\mu)\,
 \bigl(T_{\rm pred}+N_Z\bigr)
\]
and advantage at least $\mu/3-\negl(\lambda)$.
Consequently, every efficient resettable uniform-challenge predictor has
negligible advantage: an inverse-polynomial $\mu$ makes the reduction
efficient and its advantage non-negligible, contrary to
\Cref{prop:encoded-bcmvv}.
\end{proposition}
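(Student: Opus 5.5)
Saving the snapshot lets us run $\mathcal A_{\rm pred}$ on many challenges, which turns it into a Goldreich--Levin oracle for the string $\mathbf Z$. Recovering $\mathbf Z$ gives $\bs$, and a distinguisher that recovers $\bs$ can check it against the key.

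\textbf{Step 1 (good transcripts).} By an averaging argument, $\mathbb E_{\mathsf{tr}}[\mathrm{corr}_{\mathsf{tr}}]\ge\mu$ in absolute value. Replacing $\mathcal A_{\rm pred}$ by its complement if the sign is negative, we get that with probability at least $\mu/2$ over the first-stage transcript, $\mathrm{corr}_{\mathsf{tr}}\ge\mu/2$. For such a transcript, the map $c\mapsto\mathcal A_{\rm pred}(\mathsf{st},c;\varrho)$ predicts the parity $\langle c,\mathbf Z\rangle$ on uniform $c\in\bits^{N_Z}$ with correlation $\mu/2$. This holds because a challenge $(\ba,\bD^h)$ is exactly a uniform vector in $\bits^{N_Z}$, and the fresh coins $\varrho$ are averaged in.

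\textbf{Step 2 (Goldreich--Levin).} Restoring $\mathsf{st}$ repeatedly, we run the Goldreich--Levin list decoder. Its output is a list of $\poly(N_Z,1/\mu)$ candidates that contains $\mathbf Z$ with probability at least $2/3$, and it makes $\poly(N_Z,1/\mu)$ oracle calls. This gives the stated running time. Each candidate's first $n$ bits are a candidate $\bs'\in\bits^n$.

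\textbf{Step 3 (distinguisher).} On input a key $(\bA,\bu)$, the distinguisher first runs $\mathcal A_{\rm com}$ and then the decoder. It outputs $1$ if some candidate satisfies $\|\centerq(\bu-\bA\bs')\|_\infty\le B_V$.

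In equation mode, $\mathcal A$ is run on a genuine key, so by Steps 1--2 the true $\bs$ appears in the list with probability at least $(\mu/2)(2/3)=\mu/3$, and it passes the test. In image mode, $\bu$ is statistically close to uniform and independent of $\bA$. A union bound over the $\poly$-size list and the $2^n$ possible binary $\bs'$ shows the test passes with probability at most $2^n(2B_V+1)^m/q^m=\negl(\lambda)$, since $B_V=2^{L/4}\ll q$ and $m\gg n$. A sharper route avoids the union bound over all $\bs'$ and uses only the polynomial list. Either way, the advantage is at least $\mu/3-\negl(\lambda)$.

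The main obstacle is the image-mode run. There $\mathcal A$ receives an out-of-distribution key, and $\mathbf Z$ (through $\sigma_{i,j}$) is not tied to any real secret. This does not matter: the acceptance test depends only on the key and the candidate list, and the bound above holds for any list of fixed polynomial size. One also has to check that the statistical-distance terms from $\TrapGen$ and the error sampler are only $2^{-\Omega(n)}$, which they are. With inverse-polynomial $\mu$ the distinguisher is efficient and has non-negligible advantage, which contradicts item 1 of \Cref{prop:encoded-bcmvv}.
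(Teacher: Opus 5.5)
Your proposal is correct and follows the paper's proof essentially step for step. The shared steps are: averaging to obtain transcripts with $|\mathrm{corr}_{\mathsf{tr}}|\geq\mu/2$, Goldreich--Levin list decoding through resets, the test $\|\centerq(\bu-\bA\bs')\|_\infty\le B_V$, and the image-mode union bound $2^n\bigl((2B_V+1)/q\bigr)^m$ over all binary $\bs'$.

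Two cautions. Keep that union bound over all $2^n$ binary vectors and drop your ``sharper route'' through the polynomial list alone: the list is computed from $\bu$, so a per-candidate bound does not apply directly. Also, the sign flip in Step 1 is unnecessary, since Goldreich--Levin returns every coefficient of large magnitude regardless of sign.
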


\begin{proof}
Suppose that
$\left|\mathbb E_{\mathsf{tr}}\mathrm{corr}_{\mathsf{tr}}\right|\geq\mu$.
Then
$\mathbb E_{\mathsf{tr}}|\mathrm{corr}_{\mathsf{tr}}|\geq\mu$, and hence
\[
 \Pr_{\mathsf{tr}}
 \left[
   |\mathrm{corr}_{\mathsf{tr}}|\geq\mu/2
 \right]
 \geq\frac{\mu}{2-\mu}
 \geq\frac\mu2,
\]
since otherwise the expectation of $|\mathrm{corr}_{\mathsf{tr}}|$, which is always at
most one, would be smaller than $\mu$.

Fix such a transcript. Reset access to
$\mathcal A_{\rm pred}$ gives a randomized Boolean oracle whose Fourier
coefficient at $\mathbf Z$ has magnitude at least $\mu/2$, by
\Cref{eq:uniform-challenge-correlation}. Apply the Goldreich--Levin
list-decoding algorithm~\cite{GoldreichLevin1989} at threshold $\mu/4$;
since the oracle is randomized, we use its form for $[-1,1]$-valued
functions. Using
$\poly(N_Z,1/\mu)$ reset invocations, it returns a list of
$\poly(1/\mu)$ candidates that contains
$\mathbf Z$ with probability at least $2/3$. Fresh prediction
coins cause no difficulty: repeated invocations estimate, to additive
accuracy $O(\mu)$, the bounded oracle
\[
 g_{\mathsf{st}}(\ba,\bD^h)=
 \mathbb E_{\varrho}
 \left[(-1)^{\mathcal A_{\rm pred}(\mathsf{st},\ba,\bD^h;\varrho)}\right].
\]
The stated success probability includes amplification of these estimates and
a union bound over all adaptive list-decoder queries.

For each list element, parse its first $n$ bits as a
candidate secret $\bs'$ and accept if
$\|\centerq(\bu-\bA\bs')\|_\infty\leq B_V$; the distinguisher
outputs ``equation'' if some candidate is accepted. On an equation key,
the true $\mathbf Z$ is on the list except with probability $1/3$ whenever
$|\mathrm{corr}_{\mathsf{tr}}|\geq\mu/2$, and its $\bs$-block passes the test; the
acceptance probability is at least $(\mu/2)(2/3)=\mu/3$. On an image
key, $[\bA\mid\bu]$ is statistically close to uniform, and for uniform
$\bu$ the probability that any of the $2^{n}$ binary
vectors passes the test is at most
$2^{n}\bigl((2B_V+1)/q
\bigr)^{m}
=2^{-\Omega(mL)}$. The distinguishing advantage is
therefore at least $\mu/3-\negl(\lambda)$, and
\Cref{prop:encoded-bcmvv} bounds it by a negligible function. Formally,
if the predictor's advantage were non-negligible, fix an
inverse-polynomial lower bound attained on an infinite subsequence and use
it as $\mu$.
\end{proof}

If the uniform-challenge experiment also filters through the admissibility
predicate, removing the filter changes the correlation by at most the
probability of an inadmissible challenge, $2^{-\Omega(n)}$, so the same conclusion
holds with $\mu-2^{-\Omega(n)}$ in place of $\mu$.
\Cref{ass:encoded-prime-lwe} absorbs the polynomial overhead of the
reduction for inverse-polynomial $\mu$. The proposition nevertheless
remains structurally weaker than \Cref{ass:carry-ahcb}: it requires a
predictor that commits to its base points and can then be reset and
queried on independent uniform challenges.

\subsection{High-min-entropy equation samplers}
\label{app:high-entropy-samplers}

The honest Hadamard experiment produces a uniformly random direction, and
the induced challenge of \Cref{app:uniform-challenges} then has the
maximum possible min-entropy, namely $N_Z$ bits. We next ask how far a
sampler may depart from this benchmark while still being ruled out by the
ordinary property. The conclusion in this subsection requires equations
that are correct with overwhelming probability; samplers with
inverse-polynomial bias are treated in \Cref{app:linear-predicates} for
the linear case.

\begin{definition}[Resettable equation sampler]
\label{def:equation-sampler}
A \emph{resettable equation sampler} is a pair
$(\mathcal A_{\rm com},\mathcal S)$
of classical algorithms. The first algorithm, $\mathcal A_{\rm com}(k)$,
produces a sign $\chi$, an assignment of base points $\xi$, and one
fixed, copyable classical snapshot $\mathsf{st}$. Let
$\mathsf{tr}=(k,t,\chi,\xi,\mathsf{st})$ be the complete transcript, including the
private key data, and let $\mathbf Z=\mathbf Z(\mathsf{tr})\in\bits^{N_Z}$ be as in
\Cref{app:uniform-challenges}. The sampler is invoked as
\[
 (\ba^{(\tau)},\bD^{h,(\tau)},C^{(\tau)})\leftarrow
 \mathcal S\bigl(\mathsf{st},(\ba^{(\tau')},\bD^{h,(\tau')},C^{(\tau')})_{\tau'<\tau}\bigr)
\]
with fresh coins; all persistent copyable state is included in $\mathsf{st}$
and the displayed history. For an integer
$0\leq\kappa_{\rm ent}\leq N_Z$, the sampler has \emph{conditional
min-entropy deficiency at most $\kappa_{\rm ent}$} and \emph{equation
error at most $\eps$} if, for every transcript and every
positive-probability history,
\begin{align}
 \max_{(\ba,\bD^h)}
 \Pr[(\ba^{(\tau)},\bD^{h,(\tau)})=(\ba,\bD^h)\mid
       \mathsf{tr},(\ba^{(\tau')},\bD^{h,(\tau')},C^{(\tau')})_{\tau'<\tau}]
 &\leq 2^{-N_Z+\kappa_{\rm ent}},
 \label{eq:conditional-min-entropy}\\
 \Pr[C^{(\tau)}\neq\langle(\ba^{(\tau)},\bD^{h,(\tau)}),\mathbf Z\rangle\mid
       \mathsf{tr},(\ba^{(\tau')},\bD^{h,(\tau')},C^{(\tau')})_{\tau'<\tau}]
 &\leq\eps.
 \nonumber
\end{align}
\end{definition}

Deficiency zero forces a uniform challenge, while a challenge uniform on a
codimension-$\kappa_{\rm ent}$ subspace attains the bound in
\Cref{eq:conditional-min-entropy}. This is worst-case conditional
min-entropy, rather than the average or smooth variant. If samples are
conditionally independent given each fixed transcript, it suffices to
impose the corresponding pointwise bound for every transcript.

\begin{proposition}[High-entropy correct-equation samplers]
\label{prop:encoded-high-entropy-sampler}
Let $(\mathcal A_{\rm com},\mathcal S)$ be a resettable equation
sampler (\Cref{def:equation-sampler}) with conditional min-entropy
deficiency at most $\kappa_{\rm ent}$ and equation error at most $\eps$.
For any $0<\delta_{\rm rk}<1$, put
\[
 T=\left\lceil
    8\bigl(N_Z+\ln(1/\delta_{\rm rk})\bigr)
   \right\rceil.
\]
There is an ordinary decoded-family adaptive-hardcore-bit adversary, using $T$ invocations
of $\mathcal S$ and polynomial additional time, whose prediction advantage is
at least
\begin{equation}
 1-2^{\kappa_{\rm ent}-n/2}
   -2^{-n/8}
   -2^{-\Omega(n)}
   -2(\delta_{\rm rk}+T\eps).
 \label{eq:high-entropy-base-advantage}
\end{equation}
Thus such an efficient sampler is ruled out whenever the quantity in
\Cref{eq:high-entropy-base-advantage} is non-negligibly positive.
For example, with $\delta_{\rm rk}=2^{-n/8}$ we have $T=O(N_Z)$, and
\Cref{prop:encoded-bcmvv} rules out efficient resettable samplers satisfying
\[
 \kappa_{\rm ent}\leq n/4,
 \qquad
 T\eps=2^{-\Omega(n)}.
\]
\end{proposition}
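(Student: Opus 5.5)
The plan is to use the sampler to learn the secret $\bs$ by linear algebra, and then to play one round of the ordinary decoded-family game, using \Cref{lem:fiber-realization}, with a linear function that the adversary can now evaluate.

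The reduction $\mathcal B$ runs $\mathcal A_{\rm com}(k)$ once and saves the snapshot $\mathsf{st}$. It then invokes $\mathcal S$ $T$ times in sequence, with the history passed along, and obtains $T$ linear equations $C^{(\tau)}=\langle(\ba^{(\tau)},\bD^{h,(\tau)}),\mathbf Z\rangle$ over $\mathbb F_2$ in the unknown $\mathbf Z\in\bits^{N_Z}$. By a union bound, all $T$ equations are correct except with probability $T\eps$.

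The rank step goes as follows. For any fixed proper subspace $U$ containing the span of the first $\tau-1$ challenges, \Cref{eq:conditional-min-entropy} bounds the probability that the next challenge lies in $U$. That bound is $|U|\,2^{-N_Z+\kappa_{\rm ent}}$, which is at most $1/2$ whenever $\operatorname{codim}U>\kappa_{\rm ent}$.

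So I would not aim for full rank. Instead I target the event that the span $V$ of the challenges contains the projection needed to determine $\bs$ up to an unknown component in a space of dimension at most $\kappa_{\rm ent}$. Concretely, a Chernoff/martingale argument over $T=\lceil 8(N_Z+\ln(1/\delta_{\rm rk}))\rceil$ steps shows that, except with probability $\delta_{\rm rk}$, the span reaches codimension at most $\kappa_{\rm ent}$. Each step increases the dimension with probability at least $1/2$ while codimension exceeds $\kappa_{\rm ent}$.

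Given $\dim V^\perp\le\kappa_{\rm ent}$, Gaussian elimination determines $\langle\boldsymbol\phi,\bs\rangle$ for every $\boldsymbol\phi\in\bits^n$ with $(\boldsymbol\phi,\mathbf 0)\in V$. Those $\boldsymbol\phi$ form a subspace $W\subseteq\bits^n$ of codimension at most $\kappa_{\rm ent}$.

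The adversary step is where the main obstacle lies. We need a $\boldsymbol\phi\in W$ that is nonzero on the branch-$b$ half, with $\chi=+1$, $b=0$. This holds unless $W$ is contained in the subspace of vectors vanishing on the first half, which has codimension $n/2$. A random element of $W$ avoids it with probability at least $1-2^{\kappa_{\rm ent}-n/2}$.

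$\mathcal B$ samples such a $\boldsymbol\phi$ uniformly from $W$ and runs \Cref{lem:fiber-realization} to get $(b,\bx,\bd)$. It outputs $(b,\bx_b,\bd,\langle\boldsymbol\phi,\bs\rangle)$, where the last bit is computed from the solved system. By the lemma, $\bd\in G_{k,b,\bx_b}$ surely, and $\bd\in G_{k,1-b,\bx_{1-b}}$ except with $2^{-n/8}$, outside a key event of probability $2^{-\Omega(n)}$. The output bit equals $\bd\cdot(J_q(\bx_0)\oplus J_q(\bx_1))$ by the lemma's identity.

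The failure events feed into the advantage as follows. Rank failure, wrong equations and a bad $\boldsymbol\phi$ each make the output lie in $\overline H^{\rm dec}$ at worst. This costs a factor $2$ in the difference, giving $-2(\delta_{\rm rk}+T\eps)$ together with the remaining terms of \Cref{eq:high-entropy-base-advantage}. The one subtlety I would check carefully is that the conditioning on histories really lets the dimension-increase argument be applied adaptively. The pointwise-in-history hypothesis is exactly what makes it go through.

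The example parameters then follow by substitution into \Cref{eq:high-entropy-base-advantage}.
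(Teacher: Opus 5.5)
Your proposal is correct and follows the paper's proof essentially step for step. The paper likewise collects $T$ equations, uses the pointwise conditional min-entropy bound to reach rank at least $N_Z-\kappa_{\rm ent}$ except with probability $\delta_{\rm rk}$, samples $\boldsymbol\phi$ uniformly from $U^\perp$ (which is exactly your $W=\{\boldsymbol\phi:(\boldsymbol\phi,\mathbf 0)\in V\}$), and feeds it to \Cref{lem:fiber-realization}. One bookkeeping correction: a bad $\boldsymbol\phi$, or an opposite-branch failure, violates the direction condition, so the output lies in neither decoded set rather than in $\overline H^{\rm dec}$ and costs only once; this is why $2^{\kappa_{\rm ent}-n/2}$ and $2^{-n/8}$ enter with coefficient $1$, while only $\delta_{\rm rk}+T\eps$ is doubled.
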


\begin{proof}
Write $N=N_Z$. The reduction collects $T$ equations.
For the following rank and correctness estimates, fix a transcript. Let
$V_\tau=\operatorname{span}\{(\ba^{(1)},\bD^{h,(1)}),\ldots,(\ba^{(\tau)},\bD^{h,(\tau)})\}$. While
$\dim V_{\tau-1}<N-\kappa_{\rm ent}$, the conditional min-entropy bound gives
\[
 \Pr[(\ba^{(\tau)},\bD^{h,(\tau)})\in V_{\tau-1}\mid\text{history}]
 \leq |V_{\tau-1}|2^{-N+\kappa_{\rm ent}}
 \leq\frac12.
\]
Thus the rank increments stochastically dominate independent
Bernoulli$(1/2)$ trials until rank $N-\kappa_{\rm ent}$. To verify the
constant in $T$, put $\mu_T=T/2$. Then
$\mu_T\geq4(N+\ln(1/\delta_{\rm rk}))$ and
$N-\kappa_{\rm ent}\leq N\leq\mu_T/4$. A Chernoff bound, coupled to the
adaptive increments until the target rank is reached, gives
\[
 \Pr\left[
  \operatorname{rank}\bigl((\ba^{(1)},\bD^{h,(1)}),\ldots,(\ba^{(T)},\bD^{h,(T)})\bigr)
       <N-\kappa_{\rm ent}
 \right]
 \leq \exp(-9\mu_T/32)
 \leq\delta_{\rm rk}.
\]

Conditional on the transcript, except with probability at
most $T\eps$, all sampled equations are correct. Let $\boldsymbol\Gamma$ be the binary
matrix whose rows are the sampled challenges and let
$\mathbf C=(C^{(1)},\ldots,C^{(T)})$. If the sampled matrix has rank
below $N-\kappa_{\rm ent}$, or if $\boldsymbol\Gamma\bz=\mathbf C$ is inconsistent, the
reduction outputs $\bot$. Otherwise, Gaussian elimination gives
\[
 \{\bz\in\bits^N:\boldsymbol\Gamma\bz=\mathbf C\}=\bz_*+\ker\boldsymbol\Gamma,
 \qquad \dim\ker\boldsymbol\Gamma\leq\kappa_{\rm ent},
\]
and on the correct-equations event the true string $\mathbf Z$ belongs to this
affine space. Projecting onto the $\bs$-coordinates gives
\[
 \bs\in\ba_*+U,
 \qquad
 \dim U\leq\kappa_{\rm ent}.
\]

Sample $\boldsymbol\phi$ uniformly from $U^\perp\subseteq
\bits^{n}$, set
$c=\langle\boldsymbol\phi,\ba_*\rangle$; since
$\bs\oplus\ba_*\in U$, the equation
$\langle\boldsymbol\phi,\bs\rangle=c$ is exact. Run the fiber
realization (\Cref{lem:fiber-realization}) on
$(\chi,\boldsymbol\phi)$ to obtain $(b,\bx,\bd)$, and output
$(b,\bx,\bd,c)$ to the decoded game; the equation is exact by the
displayed identity there. For the direction conditions:
$U^\perp$ has density at least $2^{-\kappa_{\rm ent}}$ in
$\bits^{n}$, and the set of vectors vanishing on the
branch-$b$ half has density $2^{-n/2}$, so
$\boldsymbol\phi$ is nonzero there except with probability
$2^{\kappa_{\rm ent}-n/2}$; the opposite-branch condition
fails with probability at most $2^{-n/8}$ on a
$1-2^{-\Omega(n)}$ key event.

On the full success event and for a direction satisfying both conditions,
the reduction's output lies in $H^{\rm dec}_{k,t}$. If a direction
condition fails, the output lies in neither set and contributes zero; only
when the success event fails can the output lie in
$\overline H^{\rm dec}_{k,t}$, and on that event its contribution is at
least $-1$. The success event fails with probability at most
$\beta_{\rm fail}=\delta_{\rm rk}+T\eps$, and on the success event the
direction conditions fail with probability at most
$\gamma_{\rm dir}=2^{\kappa_{\rm ent}-n/2}+2^{-n/8}+2^{-\Omega(n)}$.
Therefore its prediction advantage is at least
\[
 (1-\beta_{\rm fail}-\gamma_{\rm dir})-\beta_{\rm fail}
 =1-\gamma_{\rm dir}-2\beta_{\rm fail},
\]
which proves \Cref{eq:high-entropy-base-advantage}.
\end{proof}

In particular, \Cref{prop:encoded-high-entropy-sampler} rules out an
exact sampler whose challenges, for one fixed snapshot, are independent
uniform, or equivalently remain uniform conditioned on every prior
history. More generally, the entropy may be lowered from its maximum value
$N_Z$ by as many as $n/4$ bits: each challenge may be up to a factor
$2^{n/4}$ more likely than it would be under the uniform distribution, and
the conclusion still follows. The successive challenge distributions may
depend on all preceding equations, but the equations themselves must be
correct with overwhelming probability.

\subsection{Linear predicates}
\label{app:linear-predicates}

We now consider adversaries whose predicate lists are empty, so that the
predicted bit is a linear function of the secret. For this class the
assumption is not merely supported by evidence: at the security level of
\Cref{prop:encoded-bcmvv}, it follows from
\Cref{ass:encoded-prime-lwe}.

\begin{proposition}[Empty predicate lists]
\label{prop:linear-predicates}
Let $\mathcal A$ be a uniform quantum polynomial-time algorithm in the
game of \Cref{ass:carry-ahcb} whose output always has
$\mathcal P=\emptyset$. Under \Cref{ass:encoded-prime-lwe}, the prediction
advantage of $\mathcal A$ is negligible.
\end{proposition}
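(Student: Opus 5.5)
The plan is to convert $\mathcal A$ directly into an adversary $\mathcal B$ for the ordinary decoded-family adaptive-hardcore-bit game of \Cref{app:decoded-relation}, using the fiber realization of \Cref{lem:fiber-realization}, and then invoke item~2 of \Cref{prop:encoded-bcmvv}. On input an equation key $k$, $\mathcal B$ runs $\mathcal A(k)$ to obtain $(\chi,\ba,\emptyset,C)$. If $\ba$ vanishes on the half designated for $\chi$, the output of $\mathcal A$ fails the linearity condition. Since the sensitivity condition needs balanced interval vectors, which vanish for an empty list, that output lies in neither $H^{\rm carry}_{k,t}$ nor $\overline H^{\rm carry}_{k,t}$. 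In this case $\mathcal B$ outputs $\bot$. This check is public, so $\mathcal B$ can perform it.

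Otherwise, $\mathcal B$ runs the fiber realization on $(\chi,\boldsymbol\phi=\ba)$ to obtain $(b,\bx,\bd)$ with $(-1)^b=\chi$, and outputs $(b,\bx,\bd,C)$. By the pointwise identity of \Cref{lem:fiber-realization}, $\bd\cdot(J_q(\bx_0)\oplus J_q(\bx_1))=\langle\ba,\bs\rangle$. Hence $C$ is a correct prediction in the decoded game exactly when $C=\langle\ba,\bs\rangle$, i.e.\ exactly when $\mathcal A$'s output lies in $H^{\rm carry}_{k,t}$ rather than $\overline H^{\rm carry}_{k,t}$. The two sets share the admissibility requirement, which for empty lists is precisely the nonzero-on-designated-half condition.

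The direction conditions need care. Since $\ba$ is nonzero on the branch-$b$ half, item~1 of \Cref{lem:fiber-realization} gives $\bd\in G_{k,b,\bx_b}$ with certainty. Item~2 gives $\bd\in G_{k,1-b,\bx_{1-b}}$ except with probability $2^{-n/8}$, on a key event of probability $1-2^{-\Omega(n)}$. Crucially, item~2 holds pointwise in $\boldsymbol\phi$: the randomness comes only from $\mathcal B$'s fresh sampling of $\bd$ within the fiber. It therefore survives the adversarial, key-dependent choice of $\ba$. I would double-check this point most carefully, since $\ba$ may correlate with $\bs$. The proof of item~2 conditions on $\bd_j\cdot g_j=\phi_j$ and uses only that $\bd_j\cdot g'_j$ is then uniform for coordinates with $s_j=0$, which holds for any fixed $\boldsymbol\phi$. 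The bound $z\ge n/8$ depends only on $\bs$.

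Combining these, the advantage of $\mathcal B$ differs from that of $\mathcal A$ by at most $2^{-n/8}+2^{-\Omega(n)}$. A failed direction condition removes the output from both decoded sets, so the difference of the two probabilities changes by at most that amount. The membership $\bx\in X_b$ and $\bx-\chi\bs\in X_{1-b}$ is not needed for the ambient relation but holds anyway. The reduction adds polynomial overhead, so item~2 of \Cref{prop:encoded-bcmvv} makes the advantage of $\mathcal B$, and hence that of $\mathcal A$, negligible. The main subtlety is the adaptive correlation issue discussed above, together with confirming that $\mathcal B$'s outputs never land in $\overline H^{\rm dec}_{k,t}$ when $\mathcal A$ is correct.
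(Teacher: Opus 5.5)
Your proposal is correct and follows the paper's proof essentially step for step. The reduction is the same: reject outputs that fail the public linearity check, apply the fiber realization to $(\chi,\ba)$, and forward $C$. The accounting is also the same: when a direction condition fails, the output lies in neither decoded set, so the advantage loss is at most $2^{-n/8}+2^{-\Omega(n)}$. Your extra check is a correct elaboration of a point the paper leaves implicit: item~2 of the fiber realization holds pointwise in $\boldsymbol\phi$, because $\bd$ is sampled fresh by the reduction. It therefore survives an adaptively chosen $\ba$ that is correlated with $\bs$.
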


\begin{proof}
An empty predicate list makes every bit $\delta_{i,j}(\rho)$ in
\Cref{eq:adm-error-profile} zero, so the sensitivity condition fails, and
admissibility is equivalent to the linearity condition: $\ba$ nonzero on the half
designated for $\chi$.

The reduction runs $\mathcal A(k)$ to obtain
$(\chi,\ba,\emptyset,C)$, and outputs $\bot$ if the output is
inadmissible; for empty predicate lists this is the public condition that
$\ba$ be nonzero on the designated half, so the reduction can check it. Otherwise it runs the fiber realization
(\Cref{lem:fiber-realization}) on $(\chi,\ba)$ and outputs
$(b,\bx,\bd,C)$ to the decoded game. The equation
$\bd\cdot(J_{q}(\bx_0)\oplus J_{q}(\bx_1))
=\langle\ba,\bs\rangle$ is exact, so the output lies in
$H^{\rm dec}_{k,t}$ exactly when $\mathcal A$'s output lies in
$H^{\rm carry}_{k,t}$, and in the flipped set exactly when it lies in
$\overline H^{\rm carry}_{k,t}$, provided
$\bd\in G_{k,0,\bx_0}\cap G_{k,1,\bx_1}$. The branch-$b$ condition
(item~1 of \Cref{lem:fiber-realization})
holds with certainty by the linearity condition, and the opposite-branch condition
(item~2) fails with probability at most $2^{-n/8}$ on a
$1-2^{-\Omega(n)}$ key event; on the failure events the
output lies in neither decoded set and contributes zero to the
correct-minus-incorrect difference. Hence the reduction's prediction
advantage is within
$2^{-n/8}+2^{-\Omega(n)}$ of $\mathcal A$'s,
and \Cref{prop:encoded-bcmvv} bounds the former by a negligible function.
\end{proof}

In particular, the case $\mathcal P=\emptyset$ of
\Cref{ass:carry-ahcb} follows from \Cref{ass:encoded-prime-lwe} with
negligible prediction advantage, at the security level of
\Cref{prop:encoded-bcmvv}.  This is sufficient for a
$3/4+\negl(\lambda)$ soundness conclusion in that restricted case, but
it does not supply the $2^{-c\sqrt n}$ advantage bound in the full
statement of \Cref{ass:carry-ahcb}.

The proposition extends to samplers that are only weakly correct, provided
their linear functions are sufficiently spread; this rules out a family of
attacks that need not respect the admissibility predicate at all.

\begin{definition}[Linear sampler]
\label{def:linear-sampler}
A \emph{linear sampler} is a pair $(\mathcal A_{\rm com},\mathcal S)$ of
uniform
polynomial-time classical algorithms. On input $k$, the first algorithm
outputs a sign $\chi$ and one copyable snapshot $\mathsf{st}$.
One invocation of $\mathcal S(\mathsf{st})$ outputs
$(\ba,C)\in\bits^{n}\times\bits$ satisfying, conditioned on
every fixed $(k,t,\mathsf{st})$,
\begin{align}
 H_\infty(\ba\mid k,t,\mathsf{st})&\geq h_{\min},
 \label{eq:linear-coefficient-entropy}\\
 \Pr[C=\langle\ba,\bs\rangle\mid k,t,\mathsf{st}]
 &\geq\frac12+\gamma.
 \label{eq:linear-correctness-bias}
\end{align}
The joint distribution of $(\ba,C)$ is otherwise arbitrary; in
particular, the error probability may depend on $\ba$.
\end{definition}

\begin{proposition}[Biased linear samplers]
\label{prop:linear-biased}
Let $(\mathcal A_{\rm com},\mathcal S)$ be a linear sampler
(\Cref{def:linear-sampler}) with parameters $h_{\min},\gamma$. Then there is
a polynomial-time ordinary decoded-family adaptive-hardcore-bit
adversary whose prediction advantage is at least
\begin{equation}
 2\gamma-\gamma_{\rm good},
 \qquad
 \gamma_{\rm good}
 \leq2^{n/2-h_{\min}}
 +2^{-n/8}+2^{-\Omega(n)}.
 \label{eq:linear-biased-advantage}
\end{equation}
Consequently, \Cref{prop:encoded-bcmvv} rules out the sampler whenever
$2\gamma-\gamma_{\rm good}$ is non-negligibly positive. In particular,
for any inverse-polynomial $\gamma$ it suffices that
\begin{equation}
 h_{\min}\geq
 \left\lceil
  \frac{n}{2}+\log_2(2/\gamma)
 \right\rceil.
 \label{eq:linear-entropy-threshold}
\end{equation}
This conclusion uses only one sampler invocation.
\end{proposition}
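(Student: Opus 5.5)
We run the sampler once, filter its output through the public half-condition, and forward it to the decoded game via the fiber realization, in the same way as the proof of \Cref{prop:linear-predicates}. Concretely, the reduction runs $\mathcal A_{\rm com}(k)$ to get $(\chi,\mathsf{st})$ and then $\mathcal S(\mathsf{st})$ to get $(\ba,C)$. It outputs $\bot$ if $\ba$ vanishes on the half designated for $\chi$; this check is public. Otherwise it runs \Cref{lem:fiber-realization} on $(\chi,\ba)$ to obtain $(b,\bx,\bd)$ and outputs $(b,\bx,\bd,C)$. By the pointwise identity of that lemma, $\bd\cdot(J_q(\bx_0)\oplus J_q(\bx_1))=\langle\ba,\bs\rangle$ exactly. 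So whenever both direction conditions hold, the output lies in $H^{\rm dec}_{k,t}$ iff $C=\langle\ba,\bs\rangle$, and in $\overline H^{\rm dec}_{k,t}$ otherwise. When a direction condition fails, or on $\bot$, the output lies in neither set and contributes zero.

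The advantage accounting is done conditionally on each fixed $(k,t,\mathsf{st})$. Write $G$ for the event that both direction conditions hold. The correct-minus-incorrect difference of the reduction is
\[
 \mathbb E\bigl[\mathbf 1_G\,(-1)^{C\oplus\langle\ba,\bs\rangle}\bigr]
 \;\geq\;\mathbb E\bigl[(-1)^{C\oplus\langle\ba,\bs\rangle}\bigr]-\Pr[\neg G]
 \;\geq\;2\gamma-\Pr[\neg G],
\]
using \Cref{eq:linear-correctness-bias} and $|(-1)^{\cdot}|\le1$. This bound holds even though the error probability may depend on $\ba$. It then remains to bound $\Pr[\neg G]$ by $\gamma_{\rm good}$:
\begin{itemize}
\item \emph{Designated-half condition.} The set of $\ba$ vanishing on the designated half has $2^{n/2}$ elements. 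Each has probability at most $2^{-h_{\min}}$ by \Cref{eq:linear-coefficient-entropy}, so this contributes at most $2^{n/2-h_{\min}}$.
\item \emph{Opposite-branch condition.} By item~2 of \Cref{lem:fiber-realization}, it fails with probability at most $2^{-n/8}$, outside a key event of probability $2^{-\Omega(n)}$, where we simply bound the loss by that probability.
\end{itemize}
Averaging over transcripts gives \Cref{eq:linear-biased-advantage}.

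For the consequence, a non-negligibly positive $2\gamma-\gamma_{\rm good}$ contradicts item~2 of \Cref{prop:encoded-bcmvv}, because the reduction is polynomial time and uses a single invocation. Under \Cref{eq:linear-entropy-threshold} we have $2^{n/2-h_{\min}}\le\gamma/2$, so the advantage is at least $3\gamma/2-2^{-\Omega(n)}$, which is non-negligible for inverse-polynomial $\gamma$.

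The main subtlety is the first step of the accounting. The filter and fiber failures must be charged only against the correlation and never flip its sign, which is why we keep them as indicator losses instead of conditioning on them. We also need the fiber lemma's opposite-branch bound to hold for adversarially chosen $\ba$; it does, since the lemma holds for every $\boldsymbol\phi$ and the remaining randomness of $\bd$ is fresh.
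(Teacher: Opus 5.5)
Your proposal is correct and follows essentially the same route as the paper: one sampler invocation, fiber realization on $(\chi,\ba)$, the bound $\mathbb E[\mathbf 1_G Y]\geq 2\gamma-\Pr[\neg G]$, and the two direction-failure terms $2^{n/2-h_{\min}}$ and $2^{-n/8}+2^{-\Omega(n)}$. The only difference is cosmetic: you add an explicit $\bot$ filter on the designated half, which the paper omits because such outputs already fail the branch-$b$ direction condition and so lie in neither decoded set.
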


\begin{proof}
The reduction invokes $\mathcal S$ once, obtains $(\ba,C)$, runs the
fiber realization on $(\chi,\ba)$, and outputs $(b,\bx,\bd,C)$ to the
decoded game. The equation bit
$\bd\cdot(J_{q}(\bx_0)\oplus J_{q}(\bx_1))
=\langle\ba,\bs\rangle$ is exact and does not depend on the fiber
choice.

Write $Y=(-1)^{C\oplus\langle\ba,\bs\rangle}$; by
\Cref{eq:linear-correctness-bias},
$\mathbb E[Y\mid k,t,\mathsf{st}]\geq2\gamma$. The output enters the correct
or flipped decoded set according to $Y$ whenever
$\bd\in G_{k,0,\bx_0}\cap G_{k,1,\bx_1}$, and neither set otherwise, so
filtering changes the expectation above by at most the failure
probability of the direction conditions. For the branch-$b$ condition, the
min-entropy bound \Cref{eq:linear-coefficient-entropy} gives
$\Pr[\ba\text{ vanishes on the half designated for }\chi]
\leq2^{n/2-h_{\min}}$, since that set has
$2^{n/2}$ elements, each of probability at most $2^{-h_{\min}}$.
The opposite branch fails with probability at most
$2^{-n/8}$ on a $1-2^{-\Omega(n)}$ key event.
The resulting prediction advantage is therefore at least
$2\gamma-\gamma_{\rm good}$ with $\gamma_{\rm good}$ as displayed.
Finally, \Cref{eq:linear-entropy-threshold} gives
$2^{n/2-h_{\min}}\leq\gamma/2$, so
\Cref{eq:linear-biased-advantage} is at least
$3\gamma/2-2^{-\Omega(n)}$, which is non-negligible for every
inverse-polynomial $\gamma$ and contradicts
\Cref{prop:encoded-bcmvv}.
\end{proof}

When the linear functions are uniform and repeated labeled samples are
independent, the equations above can also be viewed as LPN with
coefficient-dependent, or adversarial, classification noise. Known
agnostic-learning algorithms reduce parity learning under the uniform
distribution with such adversarial classification noise to parity learning
with random classification noise; combining this reduction with the Blum--Kalai--Wasserman algorithm gives a
subexponential algorithm
\cite{FeldmanEtAl2009,BlumKalaiWasserman2003}. The direct argument
above is stronger for our purpose: it needs neither repeated samples nor a
noise model, and it preserves any inverse-polynomial correlation without
first recovering the secret.

The three results are complementary, and their restrictions delimit what
remains open. The first requires reset access to a predictor after its
base points are fixed, the second requires many correct equations for one
fixed snapshot, and the third covers only empty predicate lists. These
restrictions leave open the fully adaptive game in
\Cref{ass:carry-ahcb}, where the linear function, the predicate list, and
the base points may be chosen jointly and used once.

%% file: sections/appendix_encoded_parameters.tex
\section{\texorpdfstring{Parameters for the Encoded Construction}{Parameters for the Encoded Construction}}
\label{app:encoded-parameters}

In this appendix, we prove the parameter estimates used in the completeness
and soundness arguments. Recall from \Cref{sec:encoded-parameters} and \appref{app:encoded-ahcb-evidence} that
\[
\begin{aligned}
 L
  &=4\left\lceil(\log_2(\lambda+2))^2\right\rceil
    =\Theta(\log^2\lambda),
 &3\cdot2^{L-2}\leq q&<2^{L}
    &&\text{with $q$ prime},\\
 \ell&=\Theta(\lambda^2/L),
 &n&=\Theta(\lambda^2),
 &m&=\Theta(nL),
\end{aligned}
\]
that the trapdoor inversion radius is
\[
 r_G
 =\frac{q}
 {C_T\sqrt{(n+1)L}},
\]
and that
\[
 P=\frac{r_G}{2m},
 \qquad
 Q=2^{L/2},
 \qquad
 B_L=\left\lceil2\sqrt{\ell}\right\rceil,
 \qquad
 B_V=2^{L/4},
 \qquad
 B_P
 =\frac{q}
 {2C_T\sqrt{mn\log_2q}}.
\]

\begin{lemma}[Parameter estimates]
\label{lem:parameter-estimates}
For all sufficiently large $\lambda$, the following hold.
\begin{enumerate}
 \item \textbf{Noise hierarchy.}
  $B_L<B_V<B_P$, and the ratios
  $B_V/B_L$ and
  $B_P/B_V$ are superpolynomial in $\lambda$.
 \item \textbf{Completeness errors.}
  $mB_V/W
    +2n/Q=\negl(\lambda)$.
 \item \textbf{Interval overlap and no wrap-around.}
  $2B_V<W$ and
  $2W+2B_V<q$.
 \item \textbf{Trapdoor uniqueness radii.}
  $\sqrt{m}\,B_V<r_G$, every
  $\boldeta\in E^{m}$ satisfies
  $\|\centerq(\boldeta)\|_2<r_G$, and
  $\|\centerq(\boldeta+\be)\|_2<r_G$ for every key error $\be$ in the
  support of
  $D_{\mathbb Z_{q},B_V}^{m}$.
 \item \textbf{LK-$1/4$ radius.}  For every generated matrix
  $\bK$ satisfying \Cref{eq:encoded-lattice-separation}, every
  $\boldeta\in E^{m}$ satisfies
  $\|\centerq(\boldeta)\|_\infty
    \leq\frac14\lambda_1^{\infty}(\mathcal L(\bK))$.
\end{enumerate}
\end{lemma}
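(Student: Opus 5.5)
The plan is to reduce every item to a comparison of exponents of $2$, working in $\log_2$ throughout. First record the sizes. From \Cref{eq:encoded-prime-modulus}, $\log_2 q=L-O(1)$ with $L=\Theta(\log^2\lambda)$. Next, $\log_2 r_G=L-\tfrac12\log_2((n+1)L)-O(1)=L-O(\log\lambda)$, and likewise $\log_2 P=L-O(\log\lambda)$. The definition \Cref{eq:encoded-width-definition} gives $P/2<W\le P$, hence $\log_2 W=L-O(\log\lambda)$. By direct computation, $\log_2 B_P=L-O(\log\lambda)$, $\log_2 B_V=L/4$, $\log_2 Q=L/2$, and $\log_2 B_L=O(\log\lambda)$. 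The key fact used repeatedly is that $L/4=\Theta(\log^2\lambda)$ dominates any $O(\log\lambda)$ term. Consequently $2^{cL}/\poly(\lambda)$ is superpolynomial for every constant $c>0$, and $\poly(\lambda)2^{-cL}$ is negligible.

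With these estimates the items follow quickly.

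\textbf{Item~1.} We have $B_V/B_L=2^{L/4-O(\log\lambda)}$ and $B_P/B_V=2^{3L/4-O(\log\lambda)}$. Both are superpolynomial, and in particular both exceed $1$.

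\textbf{Item~2.} We have $mB_V/W\le 2mB_V/P=\poly(\lambda)2^{L/4-L}=\poly(\lambda)\,2^{-3L/4}$ and $2n/Q=\poly(\lambda)2^{-L/2}$. Both are negligible.

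\textbf{Item~3.} First, $2B_V<W$ because $W\ge P/2=2^{L-O(\log\lambda)}$. Second, $2W+2B_V\le 2P+2B_V<q$ because $P=r_G/(2m)\le q/(2m)$ and $m\ge 4$.

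\textbf{Item~4.} For $\be$ in the support of $D_{\Zq,B_V}^m$, each centered coordinate is at most $B_V$, so $\sqrt m\,B_V<r_G$ follows from $L/4+O(\log\lambda)<L-O(\log\lambda)$. For $\boldeta\in E^m$, the translation $z\mapsto z-W/2$ with $z<W$ and $W<q/2$ shows that every centered coordinate lies in $[-W/2,W/2)$. Hence $\|\centerq(\boldeta)\|_2\le\sqrt m\,W/2\le\sqrt m\,P/2=r_G/(4\sqrt m)<r_G$. For $\boldeta+\be$, item~3 rules out wrap-around, so the centered coordinates are at most $W/2+B_V<W$. This gives $\|\centerq(\boldeta+\be)\|_2<\sqrt m\,W\le r_G/(2\sqrt m)<r_G$.

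\textbf{Item~5.} This repeats the computation in \Cref{lem:encoded-image-extraction}. From \Cref{eq:encoded-lattice-separation}, $\lambda_1^\infty\ge r_G/\sqrt m$. On the other hand, $\|\centerq(\boldeta)\|_\infty\le W/2\le P/2=r_G/(4m)\le r_G/(4\sqrt m)$.

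The only delicate point is bookkeeping for the ``sufficiently large $\lambda$'' threshold. The definition of $j_W$ needs the admissible set to be nonempty, i.e.\ $4m^2C_T^2(n+1)L\cdot 4\le q^2$, before maximality can give $P/2<W\le P$. This holds once $2L-O(1)>O(\log\lambda)$, so I would state it first and then take $\lambda$ beyond that threshold for all items. I would also verify that the centered-representative conventions, including odd $q$ and $W/2$ being an integer since $W\ge2$, do not introduce off-by-one issues in the bound $|\centerq(\eta)|\le W/2$. I expect no genuine obstacle beyond this bookkeeping.
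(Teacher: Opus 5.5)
Your proposal is correct and follows essentially the same route as the paper. Both arguments reduce every item to comparing powers of two, using $P/2<W\le P$ from the maximality of $j_W$, the bound $mB_V/W<4m^2B_V/r_G=\poly(\lambda)\,2^{-3L/4}$, and $\|\centerq(\boldeta)\|_\infty\le W/2\le r_G/(4m)$. The one small difference is in item~4: you bound $\centerq(\boldeta+\be)$ coordinatewise using the no-wrap inequality of item~3, while the paper uses the triangle inequality $\|\centerq(\boldeta+\be)\|_2\le\|\centerq(\boldeta)\|_2+\sqrt m\,B_V$. Both routes work.
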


\begin{proof}
The three noise scales satisfy
\[
 B_L
 =\Theta\!\left(\frac{\lambda}{\sqrt{L}}\right),
 \qquad
 B_V=2^{L/4},
 \qquad
 B_P
 =\frac{2^{L+O(1)}}{\lambda^2L}.
\]
Thus $B_L<B_V<B_P$ for all sufficiently
large parameters, and
\[
 \frac{B_V}{B_L}
 =2^{L/4-O(\log\lambda)},\qquad
 \frac{B_P}{B_V}
 =2^{3L/4-O(\log\lambda)}
\]
are both superpolynomial in $\lambda$. These are precisely the two
noise-gap conditions in BCMVV~\cite{BCMVV2021}, as in
\Cref{eq:encoded-bcmvv-superpoly-gaps}; this proves the first item.

For the second item, note that for sufficiently large $\lambda$, the
definition in
\Cref{eq:encoded-width-definition} makes $W$ the largest power
of two satisfying the integer inequality
\[
 4m^2C_T^2(n+1)L\,
 W^2\leq q^2.
\]
Equivalently, $W\leq P$, and maximality gives
$W>P/2$. Consequently,
\begin{align*}
 \frac{mB_V}{W}
 &<
 \frac{4m^2B_V}{r_G}\\
 &=
 O\!\left(
   \lambda^5L^{5/2}\,2^{-3L/4}
 \right)
 =\negl(\lambda).
\end{align*}
Also,
\[
 \frac{2n}{Q}
 =O\!\left(\lambda^2 2^{-L/2}\right)
 =\negl(\lambda).
\]

The same estimates give
\[
 \frac{B_V}{W}
 =\negl(\lambda),\qquad
 \frac{2W+2B_V}{q}
 =\lambda^{-3+o(1)}.
\]
The second estimate is used only as a deterministic no-wrap inequality;
it is not a completeness-error term. In particular,
$2B_V<W$ and
$2W+2B_V<q$ for all sufficiently large
$\lambda$, proving the third item.

For the fourth item, every centered noise vector in
$E^{m}$ satisfies
\[
 \|\centerq(\boldeta)\|_2
 \leq\frac{\sqrt{m}\,W}{2}
 \leq\frac{r_G}{4\sqrt{m}}
 <r_G.
\]
The key error also satisfies
\[
 \frac{\sqrt{m}\,B_V}{r_G}
 =
 O\!\left(
  \lambda^2L2^{-3L/4}
 \right)
 =\negl(\lambda).
\]
Hence, for all sufficiently large $\lambda$,
\[
 \|\centerq(\boldeta+\be)\|_2
 \leq\|\centerq(\boldeta)\|_2+\sqrt{m}\,B_V
 <r_G.
\]
These are the two trapdoor-uniqueness bounds used for equation keys.

Finally,
$\lambda_1^{\infty}(\mathcal L(\bK))
\geq r_G/\sqrt{m}$ by
\Cref{eq:encoded-lattice-separation}, and hence
\[
 \|\centerq(\boldeta)\|_\infty
 \leq\frac{W}{2}
 \leq\frac{r_G}{4m}
 \leq\frac14\lambda_1^{\infty}(\mathcal L(\bK)).
\]
This is the LK-$1/4$ inequality used in
\Cref{lem:encoded-image-extraction}, and it uses only the lower
bound of \Cref{eq:encoded-lattice-separation}.
\end{proof}

The main register lengths are
\[
nL=\Theta(\lambda^2\log^2\lambda),
\qquad
mL=\Theta(\lambda^2\log^4\lambda),
\qquad
mnL
 =\Theta(\lambda^4\log^4\lambda).
\]
The last quantity is, up to a constant factor, the number of qubits in
the source and output registers of \Cref{lem:encoded-circuit}.